\numberwithin{equation}{section}
\def\argmin{\mathop{\rm arg \; min}\limits}%
\theoremstyle{plain}
\newtheorem{theorem}{Theorem}[section]
\newtheorem{lemma}{Lemma}[section]
\newtheorem{assumption}{Assumption}[section]
\theoremstyle{definition}
\newtheorem{definition}{Definition}[section]
\theoremstyle{remark}
\newtheorem{remark}{Remark}[section]
\newcommand{\R}{{\mathbb R}}
\newcommand{\Z}{{\mathbb Z}}
\newcommand{\Id}{{\mathrm{Id}}}
\font\dsrom=dsrom10 scaled 1200
\def \indic{\textrm{\dsrom{1}}}
\def\argmin{\mathop{\rm arg \; min}\limits}%
\newcommand{\supp}{\mathop{\mathrm{supp}}}
\let\OldStatex\Statex
\renewcommand{\Statex}[1][3]{%
  \setlength\@tempdima{\algorithmicindent}%
  \OldStatex\hskip\dimexpr#1\@tempdima\relax}
\pgfplotsset{compat=newest}
\newif\ifblackandwhitecycle
\gdef\patternnumber{0}
        \gdef\patternnumber{1}
        \gdef\patternnumber{1}
     \gdef\patternnumber{0}
        \pgfgetlastxy{\imagewidth}{\imageheight}
        \global\let\imagewidth=\imagewidth
        \global\let\imageheight=\imageheight
        \gdef\columncount{1}
        \gdef\rowcount{1}
\newcommand\phantomimage{%
    \phantom{%
        \rule{\imagewidth}{\imageheight}%
    }%
}
\newcommand\zoombox[2][]{
    \begin{scope}[zoombox paths]
        \pgfmathsetmacro\xpos{
            (\columncount-1)*(\imagewidth / \pgfkeysvalueof{/tikz/zoomboxarray columns} + \pgfkeysvalueof{/tikz/zoomboxarray inner gap} / \pgfkeysvalueof{/tikz/zoomboxarray columns} ) + \pgflinewidth
        }
        \pgfmathsetmacro\ypos{
            (\rowcount-1)*( \imageheight / \pgfkeysvalueof{/tikz/zoomboxarray rows} + \pgfkeysvalueof{/tikz/zoomboxarray inner gap} / \pgfkeysvalueof{/tikz/zoomboxarray rows} ) + 0.5*\pgflinewidth
        }
        \edef\dospy{\noexpand\spy [
            #1,
            zoombox paths/.append style={
                black and white pattern=\patternnumber
            },
            every spy on node/.append style={#1},
            x=\imagewidth,
            y=\imageheight
        ] on (#2) in node [anchor=north west] at ($(zoomboxes container.north west)+(\xpos pt,-\ypos pt)$);}
        \dospy
        \pgfmathtruncatemacro\pgfmathresult{ifthenelse(\columncount==\pgfkeysvalueof{/tikz/zoomboxarray columns},\rowcount+1,\rowcount)}
        \global\let\rowcount=\pgfmathresult
        \pgfmathtruncatemacro\pgfmathresult{ifthenelse(\columncount==\pgfkeysvalueof{/tikz/zoomboxarray columns},1,\columncount+1)}
        \global\let\columncount=\pgfmathresult
        \ifblackandwhitecycle
            \pgfmathtruncatemacro{\newpatternnumber}{\patternnumber+1}
            \global\edef\patternnumber{\newpatternnumber}
        \fi
    \end{scope}
}
\title{Fast Wavelet Decomposition of Linear Operators through Product-Convolution Expansions}
\author{ Paul Escande \footnote{Aix Marseille Univ, CNRS, Centrale Marseille, I2M, Marseille, France, {\tt paul.escande@univ-amu.fr}} \and Pierre Weiss \footnote{Institut de Math\'{e}matiques de Toulouse, IMT-UMR5219, France and Institut des Technologies Avanc\'{e}es en Sciences du Vivant, ITAV-USR3505, CNRS and Universit\'{e} de Toulouse, Toulouse {\tt pierre.armand.weiss@gmail.com}}  }
\date{\today}
\begin{document}

\maketitle

\begin{abstract}
Wavelet decompositions of integral operators have proven their efficiency in reducing computing times for many problems, ranging from the simulation of waves or fluids to the resolution of inverse problems in imaging. 
Unfortunately, computing the decomposition is itself a hard problem which is oftentimes out of reach for large scale problems. The objective of this work is to design fast decomposition algorithms based on another representation called product-convolution expansion. This decomposition can be evaluated efficiently assuming that a few impulse responses of the operator are available, but it is usually less efficient than the wavelet decomposition when incorporated in iterative methods. The proposed decomposition algorithms, run in quasi-linear time and we provide some numerical experiments to assess its performance for an imaging problem involving space varying blurs. 
\end{abstract}

\section{Introduction}


The efficient computation of linear operators and their inverses is paramount for nearly any scientific computing problem. 
A large number of numerical approaches have been developed over the years, such as low rank decompositions, product-convolution expansions \cite{escande2016approximation}, hierarchical matrices \cite{hackbusch2013multi} or wavelet decompositions \cite{meyer1997wavelets,beylkin1991fast}. 
They can yield huge speed-ups for the practical resolution of problems ranging from partial differential equations \cite{candes2003curvelets,schneider2010wavelet} to inverse problems \cite{escande2015sparse}. We can also expect these ideas to play a growing role in the design of efficient neural networks \cite{fan2019bcr}. 

A serious hindrance to their popularization is however the high cost to perform the decomposition in a large scale setting. For a square matrix in $\R^{N\times N}$, the cost of a naive decomposition is typically $O(N^3)$, which is incompatible with many current numerical challenges which frequently satisfy $N\gg 10^6$.
The main objective of this paper is to reduce the computational burden of decomposing an operator in an orthogonal wavelet basis using an alternative decomposition called product-convolution. 

\paragraph{Product-convolution expansions}

Let $\mathcal{E}$ denote a finite dimensional vector space of discrete $d$ dimensional functions. The functions in $\mathcal{E}$ are defined on $\Omega = \{ 1, \ldots, N_1 \} \times \ldots \times \{ 1, \ldots, N_d \}$ with $N=\prod_{i=1}^d N_i$. Throughout the paper we will work with periodic boundary conditions, i.e. consider that $\Omega$ is a discretization of the torus $\mathbb{T}^{d}$.
Consider a linear operator $G$ defined for all $f\in \mathcal{E}$ by
\begin{equation} \label{eq:int_op}
	G:f \mapsto \sum_{y\in \Omega} K(\cdot,y) f(y), 
\end{equation}
where $K\in \R^{N\times N}$ is the matrix form (the kernel) of the operator. 
If all the impulse responses $S(\cdot,y) = K(\cdot+y,y)$ are well approximated by their projections on a low-dimensional subspace $\mathcal{U}=\mathrm{span}\left(u_k, 1\leq k\leq m\right)$, it is possible to construct a product-convolution expansion $H$ of $G$ defined as
\begin{equation} \label{eq:pc}
  H f = \sum_{k=1}^m u_k \star (v_k \odot f),
\end{equation}
where $\star$ denotes the convolution operator, $u_k \in \mathcal{E}$ is a convolution filter, $\odot$ denotes the point-wise multiplication and $v_k\in \mathcal{E}$ is a multiplier. The multipliers $\mathcal{V}=(v_1,\hdots, v_m)$ can be defined as the projections of the impulse responses on $\mathcal{U}$, i.e. $v_k(y)=\langle S(\cdot,y), u_k\rangle$ for all $y\in \Omega$, see \cite{escande2016approximation}. From a numerical perspective, the expansion \eqref{eq:pc} can be evaluated efficiently using fast Fourier transforms with a complexity $O(mN\log(N))$.

Product-convolution expansions have appeared at least 3 decades ago and have been given different names in various fields, see e.g. \cite{busby1981product,nagy1998restoring,flicker2005anisoplanatic,gilad2006fast,denis2015fast,alger2019scalable}.
A remarkable aspect of these expansions is their ability to interpolate linear operators from scattered impulse responses: the bases $\mathcal{U}$ and $\mathcal{V}$ can be evaluated efficiently for operators with slowly varying impulse responses. Assuming that a few impulse responses $S(\cdot,y_i)$ are known at scattered locations $y_i\in \Omega$, it is possible to evaluate the basis $\mathcal{U}$ using a principal component analysis and to interpolate the projection coefficients over the domain $\Omega$ to obtain the multipliers $\mathcal{V}$. This process not only provides a minimax estimate of the operator $G$ \cite{bigot2017estimation}, but also a compact representation compatible with fast numerical algorithms.
We refer the reader to \cite{escande2016approximation} and Section \ref{sec:pc} of this paper for more insight on the construction and approximation properties of this decomposition. 

\paragraph{Wavelet decompositions}

On the other hand, multi-scale representations of integral operators have led to practical breakthroughs in the theoretical and numerical analysis of partial differential equations and inverse problems. Despite their considerable impact for the compact representation of functions, especially in the field of signal processing, the role of wavelets for coding operators still seems marginal, at least at an industrial level. 

A possible explanation is the high cost related to the computation of the wavelet representation of an operator. Let $(\psi_{\lambda})_{\lambda \in \Lambda}$ denote an orthogonal wavelet basis and $\Psi^*: \mathcal{E}\to \R^N$ denote the associated forward wavelet transform.  
The wavelet representation of $G$ in the wavelet basis is the matrix $\Theta = \left(\theta[\lambda,\mu]\right)_{\lambda, \mu \in \Lambda}$ of coefficients $\theta[\lambda,\mu] = \langle G \psi_\lambda, \psi_\mu \rangle$. For an arbitrary operator $G$, computing these coefficients has a complexity of order $O(N^3)$ operations since the operator $G$ needs to be applied to each of the $N$ wavelets. This cost is prohibitive for large $N$. 

\paragraph{The contribution}

We address this issue by providing a fast decomposition algorithm for product-convolution operators. 
Its complexity roughly scales as $O(m N \log_2^2 N \eta^{-\alpha})$ for any arbitrary precision $\eta > 0$, where $\alpha > 0$ is a quantity that depends on the smoothness of the operator and on the number of vanishing moments of the wavelet basis. 

One may wonder what is the interest of using a wavelet decomposition if a product-convolution is already available. 
The reason comes from the higher numerical efficiency of wavelet decompositions when incorporated in iterative methods. 
For instance, we showed in \cite{escande2015sparse,escande2018accelerating} that the simultaneous sparsity of operators and images in the same orthogonal wavelet basis could be leveraged to accelerate the resolution of inverse problems by one or two orders of magnitude. The gain comes from two complementary facts:
\begin{itemize}
 \item the method can operate in the wavelet domain only and avoid to continuously swap between different domains such as the spatial domain, the Fourier domain and the wavelet domain. 
 \item In addition, efficient diagonal preconditioners can be designed since the matrix $\Theta$ is concentrated mostly along its diagonal. This idea is often referred to as a multi-level preconditioner \cite{cohen2003numerical}.
\end{itemize}
We will illustrate the power of these ideas on a practical image deblurring problem.

\section{Preliminaries} \label{sec:preliminaries}


Let $\mathcal{E}$ denote the linear space of $d$-dimensional discretized functions of $N$ samples defined on $\Omega = \{ 1, \ldots, N_1 \} \times \ldots \times \{ 1, \ldots, N_d \}$ with $\prod_{i=1}^d N_i = N$.
This space $\mathcal{E}$ will be often identified with $\R^N$ through a bijective mapping $\omega : \Omega \to \{ 1, \ldots, N \}$.
To simplify the discussion, we assume that the numbers $N_1=\hdots =N_d=2^J$ and we use circular boundary conditions. 
Assuming that the number of pixels is large enough, methods and analyses proposed in this work can be extended to any boundary conditions by using boundary wavelets.

We let  $f(x)$ denote the value of a function $f$ at $x$, $v[i]$ denote the $i$-th coefficient of a vector $v$ and $A[i,j]$ denote the $(i,j)$-th element of a matrix $A$.

Let $A$ be an $N \times N$ matrix, we let $\supp A $ denote its support i.e. the set of indexes associated to a non-zero coefficient: $ \supp A = \left\{ (i,j) \in \{1 \ldots N\}^2 \, | \, A[i,j] \neq 0 \right\}$.
The approximation rates stated in this paper will be expressed with respect to the spectral norm $\| \cdot \|_{2\to2}$ defined by
\begin{equation*}
	\| A \|_{2\to 2} = \sup_{f \in \R^N, \| f \|_2 = 1} \| A f \|_{2}.
\end{equation*}

\subsection{One dimensional orthogonal wavelet bases}

We first recall the construction of wavelets on the continuous interval $[0,1]$. 
Let $\phi$ and $\psi$ denote the scaling and mother wavelets.
Translated and dilated versions of the wavelets are defined, for $j \geq 0$, as follows
\begin{equation*} 
	\begin{split}
  \phi_{j,n} & = 2^{j/2} \phi\left( 2^{j} \cdot - n \right), \\
  \psi_{j,n} & = 2^{j/2} \psi\left( 2^{j} \cdot - n \right),
  \end{split}
\end{equation*}
with $n \in \{0,\ldots,2^j-1\}$. A mother wavelet $\psi$ is said to have $M$ vanishing moments when
\begin{equation*}
 	\forall 0 \leq m < M, \quad \int_{[0,1]} t^m \psi(t) dt = 0.
\end{equation*}

We now detail the construction of a 1D discrete orthogonal wavelet transform. A function $f : [0,1] \to \R$ can be discretized leading to a vector $v \in \R^{2^J}$  samples. The Fast Wavelet Transform is derived from the observation that wavelets $\phi$ and $\psi$ are associated to a filter bank $(h,g)$ \cite[Theorem 7.7, p. 348]{Mallat-Book}. From these filters $(h,g)$ the discrete wavelets $\phi_j$ and $\psi_j$ can be defined recursively using convolutions and sub-sampling.  
Setting $\phi_{J,n} = \indic_{\{ n \}}$, we get for $0 \leq j < J$ 
\begin{equation*}
	\phi_{j,l} = \sum_{n \in \Z} h[n - 2l] \phi_{j+1,n}, \quad \textrm{and} \quad \psi_{j,l} = \sum_{n \in \Z} g[n - 2l] \phi_{j+1,n}.
\end{equation*}

We assume that the wavelets are compactly supported i.e. $\supp(\psi)=[-\delta+1,\delta]$. 
Note that $\delta$ is related to the number of vanishing moments of the mother wavelet. Let $M$ be this number, we have $\delta \geq M$, with equality for Daubechies wavelets, see e.g. \cite[Theorem 7.9, p. 294]{Mallat-Book}. Together with \cite[Theorem 7.5, p. 286]{Mallat-Book} and $g[i] = (-1)^{1-i} h[1-i]$, we deduce that $\supp h = \supp \phi = [0,2\delta-1]$ and $\supp g = [-2(\delta-1),1]$. Therefore for $0 \leq j \leq J$,
\begin{equation*}
	\begin{split}
		\supp \phi_{j,0} &= \left[0, (2^{J-j}-1)(2\delta-1)\right] \quad \textrm{and} \\
		\supp \psi_{j,0} &= \left[-(2^{J-j}-1)2(\delta-1),(2^{J-j}-1)\right].
	\end{split}
\end{equation*}

\subsection{Orthogonal wavelet bases on \texorpdfstring{$\mathcal{E}$}{the space of signals}}

In dimension $d$, we use isotropic separable wavelet bases, see, e.g., \cite[Theorem 7.26, p. 348]{Mallat-Book}. 
Let $l=(l_1,\dots,l_d)$.
Define $\rho_{j,n}^0=\phi_{j,n}$ and $\rho_{j,n}^1=\psi_{j,n}$. 
Let $e=(e_1,\dots,e_d)\in \{0,1\}^d$ and $\mathcal{T}_j = \{ 0, \dots, 2^j - 1\}^d$.
For the ease of reading, we will use the shorthand notation $\lambda = (j,e,l)$ and $|\lambda|=j$. 
We also let $J=\frac{1}{d} \log_2 N$ and
\begin{equation}
\Lambda = \left\{ (j,e,l) \; | \; 0 \leq j \leq J-1, \; l \in \mathcal{T}_j, \; e \in \{0,1\}^d \right\}.
\end{equation}
 The wavelet $\psi_\lambda $ is defined by $\psi_{\lambda}(x_1, \ldots, x_d) = \psi_{j,l}^e(x_1,\hdots,x_d)=\rho_{j,l_1}^{e_1}(x_1)\hdots \rho_{j,l_d}^{e_d}(x_d)$.

With these definitions, every signal $f\in \mathcal{E}$ can be written as
\begin{align*}
 u & = \langle u, \psi^0_{0,0} \rangle \psi^0_{0,0} + \sum_{e\in \{0,1\}^d \setminus \{0\}} \sum_{j=0}^{J-1} \sum_{l \in \mathcal{T}_j} \langle u, \psi^e_{j,l} \rangle  \psi^e_{j,l} \\
	& = \sum_{ \lambda \in \Lambda} \langle u, \psi_{\lambda} \rangle \psi_\lambda.
\end{align*}

Finally, we let $\Psi^*: \mathcal{E} \to \R^N$ denote the discrete forward wavelet transform and $\Psi$ its inverse.
We refer to \cite{Mallat-Book, daubechies_ten_1992,cohen1993wavelets} for more details on the construction of wavelet bases.

\section{Main results} \label{sec:algo}


\subsection{Assumptions}

The main result will be stated under mild regularity and decay assumptions on the kernels $u_k$ defined below.

\begin{definition}[{Smoothness class $\mathcal{A}_\alpha$ \cite[p. 281]{cohen2003numerical}}] \label{def:smoothness}
  A convolution kernel $u_k$ is said to be $\alpha$-asymptotically smooth if there exists constants $C\geq 0$, $\alpha>0$ and a compactly supported wavelet basis with $\lceil \alpha \rceil $ vanishing moments such that the following inequality holds:
	\begin{equation}
		|\langle u_k\star \psi_\lambda,\psi_\mu \rangle| \leq C 2^{-(d/2+\alpha)\left| |\lambda| - |\mu|\right|} \vartheta(\lambda,\mu)^{-(d+\alpha)},
	\end{equation}
 	where 
	\begin{equation*}
 		\vartheta(\lambda,\mu) := 1 + 2^{\min(|\lambda|,|\mu|)} \textrm{dist}(\supp \psi_\lambda, \supp \psi_\mu)
 	\end{equation*}
 	measures the distance between the wavelets supports.
\end{definition}

\begin{remark}[Asymptotic smoothness in a continuous setting \label{rem:regularity_Winfty}]
    In a continuous setting, it can be shown that sufficient conditions for the kernels $u_k$ to be $\alpha$-asymptotically smooth is $u_k\in W^{\alpha,\infty}(\R^d)$ with the following decay property
    \begin{equation}
    |\partial^p u_k(x)| \leq C_k \left(1+\|x\|_2\right)^{-(|p|+d)},
    \end{equation}
    for all multi-indexes $p$ with $|p|\leq \alpha$. 
    Hence, the $\alpha$ smoothness property mostly boils down to regularity properties of the kernel. The larger $\alpha$, the more regular the kernels.
\end{remark}

Our main results will be stated under the following assumptions.
\begin{assumption}[Regularity of the kernel] \label{ass:all}
\
	\begin{itemize}
		\item the operator $H$ has the form \eqref{eq:pc}.
		\item there exists $\alpha > 0$ such that $u_k \in \mathcal{A_\alpha}$ for all $1 \leq k \leq m$.
		\item the mother wavelet is compactly supported on a hypercube of sidelength $2\delta$.
	\end{itemize}
\end{assumption}

Overall the set of assumptions in Assumption \ref{ass:all} describes a fairly large variety of operators comprising blurring operators \cite{escande2015sparse}, singular integral operators and pseudo-differential operators \cite{cohen2003numerical}. 

\subsection{The algorithm and its guarantees}

The main objective of this paper is to efficiently compute a sparse approximation $\widetilde{\Theta}$ of the wavelet representation $\Theta$ of $H$ defined by 
\begin{equation}
\Theta=\Psi^* H \Psi. 
\end{equation}

The proposed algorithm heavily relies on the peculiar structure of product-convolution expansions. In what follows, we let $U_k$ denote the convolution operator with $u_k$, $V_k=\mathrm{diag}(v_k)$ denote the $k$-th multiplier, $A_k=\Psi^* U_k \Psi$ denote the wavelet representation of $U_k$ and $B_k=\Psi^* V_k \Psi$ denote the wavelet representation of the multiplier. The proposed methodology is described in Algorithm \eqref{alg:decomp}. 
\begin{algorithm}
	\begin{algorithmic}
	\Require A precision $\eta > 0$, the 
	\State Set $\epsilon_k = \frac{\eta}{m \| v_k \|_{\infty}} $ for all $1 \leq k \leq m$
	\State Set $\widetilde{\Theta}$ an empty sparse matrix
	\ForAll{$k = 1 \to m$}
		\State Compute $A_k$ (by exploiting the convolution structure) \Comment{$O(N \log^2 N)$}
		\State Threshold $A_k$ to get an $\epsilon_k$ approximation $\widetilde{A}_{k}$ \Comment{$O(N \log N)$}
		\State Construct $B_k$ (using a sparse cascade algorithm) \Comment{$O(N \log N)$}
		\State Compute $C_k = \widetilde{A}_k B_k$ \Comment{$O(N \log N \epsilon_k^{-d/t})$}
		\State Accumulate $\widetilde{\Theta} \gets \widetilde{\Theta} + C_k $ \Comment{$O(N \log N \epsilon_k^{-d/t})$}
	\EndFor
	\State \Return $\widetilde\Theta$ an $\eta$-approximation of $\Theta$
	\end{algorithmic}
\caption{Decomposition of product-convolution in an orthogonal wavelet basis} \label{alg:decomp}
\end{algorithm}
\FloatBarrier

Our main theoretical result reads as follows.
\begin{theorem}\label{thm:overall_complexity}
	Under Assumption \ref{ass:all}, Algorithm \ref{alg:decomp} produces a matrix $\widetilde\Theta$ satisfying $\| \Theta - \widetilde\Theta\|_{2\to 2} \leq \eta$ in $O\left(\delta^{d+1} m^{\max(1,d/\alpha)} N \log^2_2 N \eta^{-d/\alpha} \right)$ operations.
	Furthermore, the number of coefficients in $\widetilde\Theta$ can also be bounded above by $O\left(\delta^{d} N \log^2_2 N \eta^{-d/\alpha} \right)$.
\end{theorem}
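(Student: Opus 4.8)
The plan is to reduce the theorem to a per-term analysis by first recording the exact algebraic identity that the algorithm realizes, and then splitting the argument into an accuracy estimate and a complexity estimate, both driven by Definition \ref{def:smoothness}.

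\textbf{Structural identity.} Since $u_k \star (v_k \odot f) = U_k V_k f$ with $U_k$ the convolution by $u_k$ and $V_k = \mathrm{diag}(v_k)$, the operator decomposes as $H = \sum_{k=1}^m U_k V_k$. The forward transform is orthogonal, so $\Psi \Psi^* = \Id$ and therefore $\Theta = \Psi^* H \Psi = \sum_{k=1}^m (\Psi^* U_k \Psi)(\Psi^* V_k \Psi) = \sum_{k=1}^m A_k B_k$, which is precisely what Algorithm \ref{alg:decomp} accumulates after replacing each $A_k$ by its thresholded version. Consequently $\widetilde\Theta = \sum_{k=1}^m \widetilde A_k B_k$ and $\Theta - \widetilde\Theta = \sum_{k=1}^m (A_k - \widetilde A_k) B_k$, which localizes all of the work to the individual factors.

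\textbf{Accuracy.} By the triangle inequality and submultiplicativity, $\|\Theta - \widetilde\Theta\|_{2\to 2} \leq \sum_{k=1}^m \|A_k - \widetilde A_k\|_{2\to 2}\,\|B_k\|_{2\to 2}$. Because $B_k$ is orthogonally conjugate to $\mathrm{diag}(v_k)$, one has $\|B_k\|_{2\to 2} = \|v_k\|_\infty$, so if the thresholding guarantees $\|A_k - \widetilde A_k\|_{2\to 2} \leq \epsilon_k$, the choice $\epsilon_k = \eta/(m\|v_k\|_\infty)$ telescopes to $\sum_{k=1}^m \epsilon_k \|v_k\|_\infty = \eta$. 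The delicate point is that entrywise smallness does not control the spectral norm; here I would use that $A_k$ inherits the asymptotic smoothness of $u_k$ (Definition \ref{def:smoothness}) and apply a Schur test, bounding $\|A_k - \widetilde A_k\|_{2\to 2}$ by the geometric mean of the largest row and column $\ell^1$-sums of the discarded entries, then showing that dropping every pair $(\lambda,\mu)$ whose decay bound falls below a scale- and support-dependent cutoff keeps those sums below $\epsilon_k$. This simultaneously defines the retained set and certifies the error, establishing the ``$\epsilon_k$ approximation $\widetilde A_k$'' claim.

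\textbf{Counting and complexity.} The same decay bound counts the retained coefficients: for fixed scales $|\lambda| = j$, $|\mu| = j'$, the bound $C\,2^{-(d/2+\alpha)|j-j'|}\vartheta(\lambda,\mu)^{-(d+\alpha)}$ exceeds $\epsilon_k$ only for $\vartheta$ below a cutoff growing like $(\epsilon_k\,2^{(d/2+\alpha)|j-j'|})^{-1/(d+\alpha)}$; each $\vartheta$-shell holds $O(\delta^d)$ wavelets, and summing the resulting geometric series over the $O(\log N)$ scale pairs yields $\mathrm{nnz}(\widetilde A_k) = O(\delta^d N \log N\,\epsilon_k^{-d/\alpha})$. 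Walking through the algorithm, computing $A_k$ from the convolution structure costs $O(N\log^2 N)$, thresholding and the sparse cascade construction of $B_k$ cost $O(N\log N)$ each, and the sparse product $\widetilde A_k B_k$ costs $O(\delta\,\mathrm{nnz}(\widetilde A_k)) = O(\delta^{d+1} N \log N\,\epsilon_k^{-d/\alpha})$, the extra $\delta$ coming from the banded overlap carried by $B_k$. Substituting $\epsilon_k = \eta/(m\|v_k\|_\infty)$ and summing over $k$, the first three steps contribute $O(m N\log^2 N)$ (the $m^1$ term) and the products contribute $O(\delta^{d+1} N\log N\,\eta^{-d/\alpha}\sum_{k}(m\|v_k\|_\infty)^{d/\alpha})$; taking the dominant term in the regimes $d/\alpha \leq 1$ and $d/\alpha > 1$ and absorbing the bounded multiplier norms into the constant gives the announced $O(\delta^{d+1} m^{\max(1,d/\alpha)} N \log_2^2 N\,\eta^{-d/\alpha})$.

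\textbf{Coefficient count and main obstacle.} The storage bound follows from the same shell count, except that one now exploits that the supports of all $\widetilde A_k B_k$ lie inside a single Calderón--Zygmund-type band around the diagonal, common to every $\alpha$-smooth $u_k$ after multiplication by the banded $B_k$, so the union of supports does not grow with $m$; counting the band yields $O(\delta^d N \log_2^2 N\,\eta^{-d/\alpha})$. I expect the main obstacle to be the two estimates resting on Definition \ref{def:smoothness}: first, converting the entrywise smoothness bound into a genuine spectral-norm guarantee for the thresholding via the Schur test, and second, carrying out the scale-and-shell summation so that it produces the exponent $-d/\alpha$ with the correct powers of $\delta$ and $\log_2 N$. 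Careful bookkeeping of the joint $m$ and $\|v_k\|_\infty$ dependence in the final summation is the remaining point that must be handled with care.
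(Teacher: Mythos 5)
Your overall route is the same as the paper's: the decomposition $\Theta=\sum_{k}A_kB_k$, the accuracy estimate via the triangle inequality, submultiplicativity, $\|B_k\|_{2\to2}=\|v_k\|_\infty$ and the choice $\epsilon_k=\eta/(m\|v_k\|_\infty)$ (Lemma \ref{lem:accuracy_theta}), and the common-support argument for the accumulation (Lemma \ref{lem:support_C}) all match. Your plan to obtain the thresholding guarantee by a Schur test is simply a re-proof of the result the paper cites (Theorem \ref{thm:cohen}), so no real divergence there.

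The genuine gap is in your cost estimate for the sparse product $\widetilde A_kB_k$. You claim it costs $O(\delta\cdot\mathrm{nnz}(\widetilde A_k))$, ``the extra $\delta$ coming from the banded overlap carried by $B_k$''. This per-nonzero $O(\delta)$ accounting is false: by the sparse-product formula (Lemma \ref{lem:complexity_product_sparse}) the cost is $\sum_{\lambda}\alpha_\lambda\beta_\lambda$, where $\alpha_\lambda$ counts nonzeros in column $\lambda$ of $\widetilde A_k$ and $\beta_\lambda$ counts nonzeros in row $\lambda$ of $B_k$, and $\beta_\lambda$ is \emph{not} uniformly small. By Lemma \ref{lem:num_coeff_row_B}, $\beta_\lambda\leq(2\delta)^d\bigl((2^d-1)j+2^{d(J-j)}\bigr)$ for $|\lambda|=j$, which is of order $\delta^d N$ at the coarsest scales, because a coarse wavelet's support meets the supports of nearly all finer wavelets. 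The paper's Lemma \ref{cor:cost_AB} gets the correct bound by a different pairing: a \emph{uniform} per-column bound $\alpha_\lambda\leq L\sim\epsilon_k^{-d/\alpha}$ from Theorem \ref{thm:cohen}, against the \emph{total} row count $\sum_\lambda\beta_\lambda=O(\delta^dN\log_2 N)$ (essentially Lemma \ref{lem:num_intersections}), exploiting that the scales where $\beta_\lambda$ is huge contain very few indices. Note also that your own per-column count, roughly $\delta^d\log_2 N\,\epsilon_k^{-d/\alpha}$, cannot be substituted into this pairing: it would yield $O(\delta^{2d}N\log_2^2N\,\epsilon_k^{-d/\alpha})$, exceeding the theorem's $\delta^{d+1}$ factor whenever $d\geq2$; one needs Cohen's cleaner column bound $L=O(\epsilon_k^{-d/\alpha})$, with the $\delta$-dependence kept out of $L$. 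Your final figures happen to land inside the stated bound, but the step producing them would not survive scrutiny as written.
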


\begin{remark}
	In many applications, the kernels of integral operators are smooth. 
	In particular, for $\alpha\geq d$ the term $\max(1,d/\alpha) = 1$. The complexity is thus linear in $m$.
	Remark \ref{rem:regularity_Winfty} shows that this is satisfied when the kernels $u_k$ are of class $W^{d,\infty}$.
\end{remark}

\begin{remark}
	The bound on the number of coefficients in $\widetilde\Theta$ provided by Theorem \ref{thm:overall_complexity} allows to bound the complexity of matrix-vector products with $\widetilde\Theta$. 
	If we assume that the operator $G$ belongs to the class $\mathcal{A}_\alpha$, a direct application of  \cite{beylkin1991fast,escande2015sparse} (see also Theorem \ref{thm:cohen}), shows that one can construct an $\eta$ approximation $\widetilde G$ of $G$, containing no more than $O(\delta^d N \eta^{-d/\alpha})$ coefficients.
	The result of Theorem \ref{thm:overall_complexity} is therefore optimal up to the $\log_2 N$ factor. We believe that it could be discarded by assuming further regularity of the multipliers $v_k$. We could then use an additional approximation in place of $B_k$, in the exact same fashion as for the $A_k$'s.
\end{remark}

\subsection{Getting a product-convolution expansion} \label{sec:pc}

While the last two items in Assumption \ref{ass:all} are now well established, the combination with the first one may seem problematic. We review some numerical approaches to design product-convolution expansions below.

\subsubsection{Naive interpolation of impulse responses} \label{sec:pc_impulse}
	The simplest and probably the most widespread product-convolution expansions in image processing are based on the assumption of slow variations of the impulse responses of the operator in the domain $\Omega$. Under this assumption, it is possible to set
	\begin{equation*}
		u_k = S(\cdot,y_k)
	\end{equation*}
	i.e. $(u_k)_{k=1}^m$ are the impulse responses of the operator at some locations $(y_k)_{k=1}^m$. 
	By setting the locations $(y_k)_{k=1}^m$ as a coarse Euclidean grid \cite{nagy1998restoring}, it is possible to keep a small value for $m$. The functions $(v_k)_{k=1}^m$ are then used to interpolate the impulse responses i.e. they define a partition of unity with the constraints $v_k(y_k) = 1$ for all $1 \leq k \leq m$.
	The choice of these $v_k$ are discussed in many works, without particular guarantees on the approximation obtained. 
	We refer the interested reader to \cite{denis2015fast} for a nice overview and \cite{escande2016approximation} for approximation rates.

	In the context of PDEs, product-convolution expansions are encountered in Schur complement methods for solving PDEs, Dirichlet-to-Neumann maps and in PDE-constrained optimization as Hessians.
	In \cite{alger2019scalable}, the functions $(u_k,v_k)_{k=1}^m$ are chosen as above, but instead of being defined on a Euclidean grid, locations $(y_k)_{k=1}^m$ are adaptively sampled to best approximate the operator with a fixed number $m$.

\subsubsection{Interpolating expansions of impulse responses} \label{sec:pc_interp}
	Under mild regularity assumptions of the impulse responses, product-convolution expansions can be designed from scattered impulse responses in a more efficient fashion than what was just described \cite{flicker2005anisoplanatic,gentile2013interpolating,bigot2017estimation}.

	The main idea is to use a principal component analysis of the observed impulse responses $S(\cdot,y_k)$ to obtain a low-dimensional orthogonal basis $(u_k)$. The coefficient maps $(v_k)$ can then be computed by interpolating the known projection coefficients at the positions $(y_k)$. We showed in \cite{bigot2017estimation} that this approach was optimal from an approximation theoretic point of view.
	
\subsubsection{Singular value decomposition} \label{sec:pc_known_op}
	When the operator $G$ in \eqref{eq:int_op} is fully known, the optimal way (in Frobenius norm) for fixed $m$ to get a product-convolution expansion of an operator \eqref{eq:int_op} is to construct a low-rank approximation of the \emph{SVIR $S$} (and not the kernel $K$). 
	For many practical problems, the SVIR $S$ has a much lower rank than the kernel $K$. A typical example is the identity operator which has a kernel of rank $N$ and an SVIR of rank 1. We refer to \cite{escande2016approximation} for more examples and mathematical insights.
	
	The low-rank decomposition of space varying impulse response can be achieved by performing a singular value decomposition (SVD) on $S$.
	For most problems, this approach is however intractable due to the size of matrices $S$. 
	
	Alternatively, we may just assume that the action of an operator with a matrix form $S$ can be evaluated on any vector. The couples $(u_k,v_k)_{k=1}^m$ can now be obtained using a randomized SVD of the matrix $S$ \cite{halko2011finding}. The complexity of such algorithms scales as $O(m^2 N)$, where $m$ is the number of matrix-product evaluations. 

\section{Numerical experiments -- Spatially varying deblurring} \label{sec:numerics}


\subsection{Description of the operators}

To assess the performance of the algorithm, we propose to perform numerical experiments on a large scale image deblurring problem. We synthesize two different blurring operators:
\begin{itemize}
	\item The first one, in Figure \ref{fig:psf_blur_1}, is made of isotropic Gaussian impulse responses with a variance $\sigma(y_1,y_2)$ increasing along the vertical direction only (i.e. $\sigma(y_1,y_2) = 3 y_2$ for $(y_1,y_2) \in [0,1]^2$). The impulse responses are then truncated out, so that 99\% of the Gaussian's energy is kept. 
	This allows to encode the operator as a sparse matrix to accelerate the explicit computation of $\Theta$. Notice that this is however unnecessary to apply our algorithms. As an indication, the largest support has a size $19 \times 19$ pixels.
	\item The second operator, in Figure \ref{fig:psf_blur_2} models realistic degradations appearing in optical systems \cite{simpkins2014parameterized}. The impulse responses are rotated and skewed Gaussian, with parameters depending on their location with respect to the center of the field. As an indication, the support of the impulse responses are of size $25 \times 25$.
\end{itemize}

\begin{figure}[htpb] \centering
\begin{subfigure}[b]{0.9\textwidth} \centering
\begin{tikzpicture}[zoomboxarray]
    \node [image node] { \includegraphics[width=0.45\textwidth]{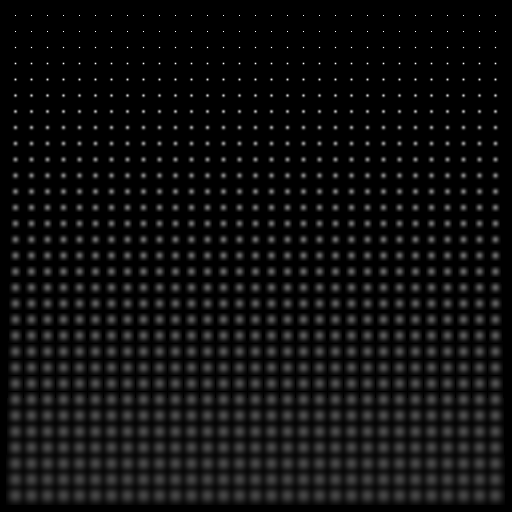} };
    \zoombox[color code=red]{0.5,0.5}
    \zoombox[color code=blue]{0.5,0.9}
    \zoombox[magnification=3]{0.1,0.1}
    \zoombox[color code=green]{0.7,0.3}
\end{tikzpicture}
\caption{Operator 1 (applied to a $1024\times 1024$ image).} \label{fig:psf_blur_1}
\end{subfigure}

\begin{subfigure}[b]{0.9\textwidth} \centering
\begin{tikzpicture}[zoomboxarray] 
    \node [image node] { \includegraphics[width=0.45\textwidth]{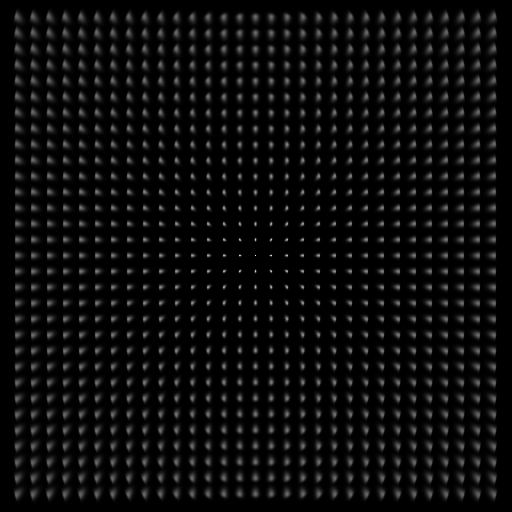} };
    \zoombox[color code=red]{0.5,0.5}
    \zoombox[color code=blue]{0.5,0.9}
    \zoombox[magnification=3]{0.1,0.1}
    \zoombox[color code=green]{0.7,0.3}
\end{tikzpicture}
\caption{Operator 2 (applied to a $1024\times 1024$ image)} \label{fig:psf_blur_2}
\end{subfigure}
  \caption{An illustration of the spatially varying blurs used in the numerical experiments.
   Blurring operators are applied to a Dirac comb to obtain the impulse responses at various locations.} \label{fig:varying_blur}
\end{figure}

We compute the product-convolution expansions of each of these operators using a randomized SVD of the matrix $S$ as explained in Section \ref{sec:pc_known_op}. The order $m$ of the expansion has been minimized so as to obtain a fixed deblurring performance, see paragraph \ref{sec:choice_m}. The first operator - Figure \ref{fig:psf_blur_1}  - is decomposed with $m=5$ coefficients, while the second one - Figure \ref{fig:psf_blur_2} - with $m=25$ coefficients.

\subsection{Comparing decomposition timings}

In this section, we compare Algorithm \ref{alg:decomp} to standard decomposition methods. 
The direct algorithm to compute $\Theta$ from a product-convolution expansion is given in Algorithm \ref{alg:decomp_naive}.
It consists in using a matrix-vector product and a wavelet transform for each column. Its complexity is therefore in $O(N^3)$ operations (this complexity is reduced here to $O(N^2)$ since we use the fact that the impulse responses are compactly supported).

\begin{algorithm}
    \begin{algorithmic}
    \ForAll{$\lambda \in \Lambda$}
        \State Get $\psi_\lambda$ and compute $w_\lambda=H \psi_\lambda$
        \State Compute $\Psi^*w_\lambda$ to obtain $(\langle H \psi_\lambda, \psi_\mu \rangle)_{\mu \in \Lambda}$. Set it as the $\lambda$-th column of $\Theta$
    \EndFor
    \end{algorithmic}
\caption{Naive decomposition of matrices in an orthogonal wavelet basis} \label{alg:decomp_naive}
\end{algorithm}
\FloatBarrier

In this comparison, we used parallel implementations running on 12 cores on a workstation with 256GB of RAM in double precision. Figure \ref{fig:timings_wrt_N} compares the computing times using 3 Algorithms:
\begin{itemize}
 \item Algorithm \ref{alg:decomp_naive} with a direct computation with a sparse matrix (Alg. \ref{alg:decomp_naive} -- exact).
 \item Algorithm \ref{alg:decomp_naive} with a product-convolution expansion (Alg. \ref{alg:decomp_naive} -- PC). 
 \item The proposed Algorithm \ref{alg:decomp} (Alg. \ref{alg:decomp}).
\end{itemize}

We use square images containing $N = 2^n \times 2^n$ pixels with $n \in \{ 7, \ldots, 12\}$. The maximal image size is therefore $4096\times 4096$. 
We run the algorithms with a Symmlet wavelet basis of order 6. This choice is explained in Remark \ref{rem:choice_wavelet}.
The precision $\eta= 5.10^{-4}$ has been chosen so as to obtain good performance for a deblurring experiment. This choice is further detailed in Section \ref{sec:choice_m}.

In this setting, Algorithm \ref{alg:decomp} is much faster than any instance of Algorithm \ref{alg:decomp_naive}. 
For $n=12$, corresponding to images of 16 millions pixels, the speed-up from Alg. \ref{alg:decomp_naive} -- product-convolution to Alg. \ref{alg:decomp} is 4680 for the operator on Figure \ref{fig:psf_blur_1} and 2330 for the operator on Figure \ref{fig:psf_blur_2}. 

We also evaluate the decomposition times for various precisions with $N = 256 \times 256$ being fixed, see Figure \ref{fig:timings_wrt_error}. 
Assuming that the clock time is proportional to the number of operations, we deduce that the number of operations is bounded by a quantity proportional to $\eta^{-d/\alpha }$. We fit the curves with a line to obtain an approximation of the smoothness $\alpha$ of each operator.

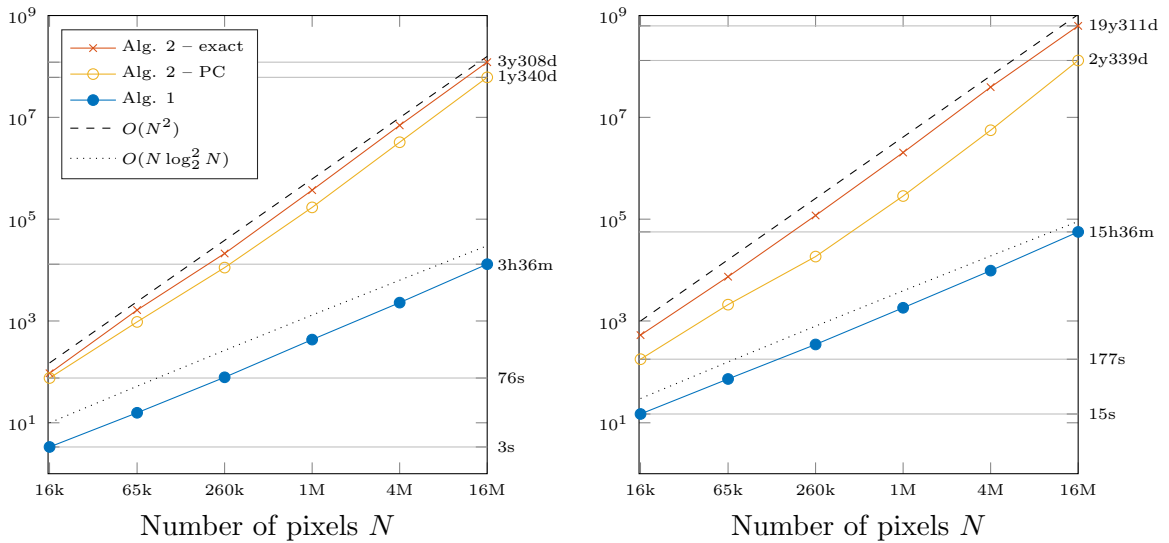
\begin{figure}[htpb] \centering
\begin{subfigure}[b]{0.45\textwidth}
%
%
\definecolor{mycolor1}{rgb}{0.00000,0.44700,0.74100}%
\definecolor{mycolor2}{rgb}{0.85000,0.32500,0.09800}%
\definecolor{mycolor3}{rgb}{0.92900,0.69400,0.12500}%
\definecolor{mycolor4}{rgb}{0.49400,0.18400,0.55600}%
\definecolor{mycolor5}{rgb}{0.46600,0.67400,0.18800}%
\begin{tikzpicture}

\begin{axis}[%
width=0.856\textwidth,
height=0.9\textwidth,
at={(0\textwidth,0\textwidth)},
scale only axis,
xmode=log,
xmin=16000,
xmax=16777216,
xminorticks=true,
ymode=log,
ymin=1,
ymax=1000000000,
yminorticks=true,
xtick={16384,65536,262144,1048576,4194304,16777216},
xticklabels={16k,65k,260k,1M,4M,16M},
ticklabel style = {font=\tiny},
xlabel={Number of pixels $N$},
extra y ticks={121278558.529126, 13048, 3.351279,75.7858304, 61064333.3873664},
extra y tick labels={3y308d, 3h36m, 3s, 76s, 1y340d},
extra y tick style={grid=major, yticklabel pos=right},
axis background/.style={fill=white},
legend style={legend cell align=left,align=left,draw=white!15!black,font=\tiny},
legend pos={north west},
]

\addplot [color=mycolor2,solid,mark=x]
  table[row sep=crcr]{%
16384	93.9606016\\
65536	1639.2454144\\
262144	21179.531264\\
1048576	371573.4962176\\
4194304	6958800.8042496\\
16777216	121278558.529126\\
};
\addlegendentry{Alg. \ref{alg:decomp_naive} -- exact};

\addplot [color=mycolor3,solid,mark=o]
  table[row sep=crcr]{%
16384	75.7858304\\
65536	958.3919104\\
262144	11185.1864064\\
1048576	170320.8288256\\
4194304	3242496.884736\\
16777216	61064333.3873664\\
};
\addlegendentry{Alg. \ref{alg:decomp_naive} -- PC};

\addplot [color=mycolor1,solid,mark=*]
  table[row sep=crcr]{%
16384	3.351279\\
65536	15.728688\\
262144	78.30821\\
1048576	430.931303\\
4194304	2292.405213\\
16777216	13048.004146\\
};
\addlegendentry{Alg. \ref{alg:decomp}};

\addplot [color=black,dashed]
  table[row sep=crcr]{%
16384	150\\
65536	2400\\
262144	38400\\
1048576	614400\\
4194304	9830400\\
16777216 157286400\\
};
\addlegendentry{$O(N^2)$};

\addplot [color=black,dotted]
  table[row sep=crcr]{%
16384	10 \\
65536	52.2448979591837\\
262144	264.489795918367\\
1048576	1306.12244897959\\
4194304	6321.63265306122\\
16777216 30093.0612244898\\
};
\addlegendentry{$O(N \log_2^2 N)$};

\end{axis}
\end{tikzpicture}%
\end{subfigure}
$\qquad$
\begin{subfigure}[b]{0.45\textwidth}
%
%
\definecolor{mycolor1}{rgb}{0.00000,0.44700,0.74100}%
\definecolor{mycolor2}{rgb}{0.85000,0.32500,0.09800}%
\definecolor{mycolor3}{rgb}{0.92900,0.69400,0.12500}%
\definecolor{mycolor4}{rgb}{0.49400,0.18400,0.55600}%
\definecolor{mycolor5}{rgb}{0.46600,0.67400,0.18800}%
\begin{tikzpicture}

\begin{axis}[%
width=0.856\textwidth,
height=0.9\textwidth,
at={(0\textwidth,0\textwidth)},
scale only axis,
xmode=log,
xmin=16000,
xmax=16777216,
xminorticks=true,
ymode=log,
ymin=1,
ymax=1000000000,
yminorticks=true,
xtick={16384,65536,262144,1048576,4194304,16777216},
xticklabels={16k,65k,260k,1M,4M,16M},
xlabel={Number of pixels $N$},
ticklabel style = {font=\tiny},
extra y ticks={626116523.613749248, 56192.864093, 14.889932, 177.544011, 130942463.115264},
extra y tick labels={19y311d, 15h36m, 15s, 177s, 2y339d},
extra y tick style={grid=major, yticklabel pos=right},
axis background/.style={fill=white},
legend style={legend cell align=left,align=left,draw=white!15!black,font=\tiny},
legend pos={north west}
]

\addplot [color=mycolor2,solid,mark=x]
  table[row sep=crcr]{%
16384	527.8646272\\
65536	7415.4377216\\
262144	118178.5522176\\
1048576	2034835.4428928\\
4194304	39205793.5888384\\
16777216	626116523.613749248\\
};
\addlegendentry{Spatial matrix};

\addplot [color=mycolor3,solid,mark=o]
  table[row sep=crcr]{%
16384	177.544011\\
65536	2093.0101248\\
262144	18414.5936384\\
1048576	284703.064064\\
4194304	5576765.472768\\
16777216	130942463.115264\\
};
\addlegendentry{PC};

\addplot [color=mycolor1,solid,mark=*]
  table[row sep=crcr]{%
16384	14.889932\\
65536	72.557557\\
262144	345.845038\\
1048576	1816.567624\\
4194304	9747.712592\\
16777216	56192.864093\\
};
\addlegendentry{Algo};

\addplot [color=black,dashed]
  table[row sep=crcr]{%
16384	1000\\
65536	16000\\
262144	256000\\
1048576	4096000\\
4194304	65536000\\
16777216 1048576000\\
};
\addlegendentry{$O(N^2)$};

\addplot [color=black,dotted]
  table[row sep=crcr]{%
16384	30\\
65536	156.734693877551\\
262144	793.469387755102\\
1048576	3918.36734693878\\
4194304	18964.8979591837\\
16777216 90279.1836734694\\
};
\addlegendentry{$O(N \log_2^2 N)$};

\legend{} ;

\end{axis}
\end{tikzpicture}%
\end{subfigure}
  \caption{Running times for various number of pixels  $N = 2^n \times 2^n$ with $n \in \{ 7, \ldots, 12\}$. Left: for the blurring operator on Figure \ref{fig:psf_blur_1} and Right: for the one on Figure \ref{fig:psf_blur_2}.} \label{fig:timings_wrt_N}
\end{figure}

\begin{figure}[htpb] \centering
\begin{subfigure}[b]{0.45\textwidth}
%
%
\definecolor{mycolor1}{rgb}{0.00000,0.44700,0.74100}%
\definecolor{mycolor2}{rgb}{0.85000,0.32500,0.09800}%
\begin{tikzpicture}

\begin{axis}[%
width=0.856\textwidth,
height=0.9\textwidth,
at={(0\textwidth,0\textwidth)},
scale only axis,
xmode=log,
xmin=1e-11,
xmax=1,
xmajorgrids,
ymajorgrids,
xminorticks=true,
ymode=log,
ymin=10,
ymax=1e5,
yminorticks=true,
xlabel={Precision $\eta$},
ylabel={Time (s)},
axis background/.style={fill=white},
legend style={legend cell align=left,align=left,draw=white!15!black,font=\tiny},
ticklabel style = {font=\tiny},
]
\addplot [color=mycolor1,solid,thick]
  table[row sep=crcr]{%
1.80784221382015e-11	2162.941625\\
2.24120703660194e-11	2201.88666\\
3.70055216859638e-11	2010.644929\\
7.36275248607216e-11	1927.169559\\
1.5445904319607e-10	1840.186374\\
3.22143260858928e-10	1748.75678\\
6.96059473236566e-10	1651.711343\\
1.45499484995441e-09	1547.723732\\
3.05141215816243e-09	1443.06721\\
6.35300771649496e-09	1327.119255\\
1.34362259091326e-08	1213.192979\\
2.82732088994837e-08	1098.710265\\
5.89040027061174e-08	982.721583\\
1.23992415430655e-07	871.699326\\
2.52252663043278e-07	761.528742\\
5.1168061939953e-07	654.597466\\
1.01015836286354e-06	646.319091\\
2.09074516630877e-06	479.446291\\
4.24367780382086e-06	382.470091\\
8.41055525408891e-06	310.600171\\
1.73531098477289e-05	248.834622\\
3.55930216754963e-05	200.487555\\
7.32998068300962e-05	155.392744\\
0.00015080789129171	121.5648\\
0.000299616042322034	95.227028\\
0.000597828534686361	75.330738\\
0.00112655466092415	58.862097\\
0.00214649449839959	47.946383\\
0.00398225677917674	38.736334\\
0.00719012773593595	32.272336\\
0.0127089325717889	26.691588\\
0.0221156832147235	22.729819\\
0.0367742714766127	19.641446\\
0.0590945642506985	16.90793\\
0.0959386808917537	15.685182\\
0.143390801836229	13.767977\\
0.227421695294071	13.262766\\
0.400282193225295	11.018766\\
0.608285049166736	9.66555\\
1.00833533644395	8.716306\\
1.01648109450941	7.273938\\
};
\addlegendentry{Blur Fig. \ref{fig:psf_blur_1}};

\addplot [color=black,dashed, thick]
  table[row sep=crcr]{%
1.67501726218833e-11	19264.8068335668\\
1.67364628533338e-11	19269.8044150279\\
1.67527812506847e-11	19263.8565277259\\
1.67762119939259e-11	19255.329592045\\
1.67456386191878e-11	19266.4590039101\\
1.67714968510424e-11	19257.0442699577\\
1.67844326561904e-11	19252.3416371009\\
1.67375293035589e-11	19269.4154717505\\
1.68909903039785e-11	19213.7847828164\\
1.80784221382015e-11	18804.6978182578\\
2.24120703660194e-11	17567.2789767165\\
3.70055216859638e-11	14987.0410280753\\
7.36275248607216e-11	12052.357468622\\
1.5445904319607e-10	9531.08770403349\\
3.22143260858928e-10	7551.21299539084\\
6.96059473236566e-10	5915.95436385849\\
1.45499484995441e-09	4683.69732215675\\
3.05141215816243e-09	3704.26154679786\\
6.35300771649496e-09	2936.40367884421\\
1.34362259091326e-08	2316.16589348959\\
2.82732088994837e-08	1829.87112243501\\
5.89040027061174e-08	1450.24833843336\\
1.23992415430655e-07	1145.63139671607\\
2.52252663043278e-07	914.826561023654\\
5.1168061939953e-07	731.201849680497\\
1.01015836286354e-06	589.474595029259\\
2.09074516630877e-06	468.157870380353\\
4.24367780382086e-06	374.112958714703\\
8.41055525408891e-06	301.227525188138\\
1.73531098477289e-05	239.470699808425\\
3.55930216754963e-05	190.731609115174\\
7.32998068300962e-05	151.718475833632\\
0.00015080789129171	120.721960638534\\
0.000299616042322034	97.1276325884361\\
0.000597828534686361	78.0381155784472\\
0.00112655466092415	63.8466537499058\\
0.00214649449839959	52.0533150575115\\
0.00398225677917674	42.798275594561\\
0.00719012773593595	35.4927456646657\\
0.0127089325717889	29.633182562862\\
0.0221156832147235	24.8636625482621\\
0.0367742714766127	21.1644718734656\\
0.0590945642506985	18.2117124897555\\
0.0959386808917537	15.6202069874729\\
0.143390801836229	13.7531094728618\\
0.227421695294071	11.8835884195435\\
0.400282193225295	9.93501579934392\\
0.608285049166736	8.70157125830534\\
1.00833533644395	7.41423508353158\\
1.01648109450941	7.39536206036588\\
};
\addlegendentry{$\eta^{-2/(3.15)}$};

\addplot [color=red,solid, thick]
  table[row sep=crcr]{%
1.96014977027114e-11	32201.96356\\
3.22151403906156e-11	31496.760059\\
6.54721516295325e-11	30210.573064\\
1.28728146958336e-10	29003.620689\\
2.6092747169026e-10		28500.190644\\
5.27697571152929e-10	26135.198951\\
1.03463779699788e-09	23133.739256\\
1.99386821901577e-09	20765.766415\\
4.02346942635925e-09	18870.926721\\
8.61095728436681e-09	16687.045337\\
1.80745959003242e-08	14317.79797\\
3.76358141870151e-08	12252.925161\\
7.8204137991114e-08		10752.688076\\
1.60615836592298e-07	9342.641878\\
3.24574432356625e-07	8519.849943\\
6.28248047584408e-07	8143.282118\\
1.26552623071018e-06	8065.144066\\
2.55392035619669e-06	6272.541874\\
4.97813979745638e-06	5689.8585\\
9.79367551635203e-06	4460.866195\\
1.94864627124844e-05	3452.408513\\
3.79528350106531e-05	2630.444424\\
7.31788553470079e-05	1968.1884\\
0.000140940145669814	1442.310285\\
0.000275792218755697	1051.769858\\
0.000531076043196739	754.116093\\
0.000998095494895607	543.171327\\
0.00187872145546865	398.47261\\
0.00350691012309381	290.929708\\
0.00622745406796776	216.499917\\
0.0110461100914335	161.129597\\
0.0197370939176132	125.092024\\
0.0323013391131282	103.266805\\
0.0546318523832499	90.950308\\
0.0913161434184625	80.310931\\
0.142431501633974	69.223201\\
0.208297248246717	60.743267\\
0.320945060282772	51.836199\\
0.511622405906813	50.461508\\
0.765907116524617	44.405385\\
1.2241919575351	36.253761\\
1.24318878339538	32.610694\\
};
\addlegendentry{Blur Fig. \ref{fig:psf_blur_2}};

\addplot [color=black,dotted, thick]
  table[row sep=crcr]{%
1.71312625864068e-11	1006720.35221768\\
1.73677362502652e-11	1000961.90379117\\
1.73750875488879e-11	1000784.67402971\\
1.73090861382046e-11	1002379.69900372\\
1.73876805692326e-11	1000481.32054595\\
1.74075245333766e-11	1000003.93086494\\
1.75136167231898e-11	997464.685585773\\
1.79466968124948e-11	987321.234669063\\
1.96014977027114e-11	951547.375776146\\
3.22151403906156e-11	772937.26927483\\
6.54721516295325e-11	574470.219711812\\
1.28728146958336e-10	432920.111444691\\
2.6092747169026e-10	322116.045010024\\
5.27697571152929e-10	239898.448738696\\
1.03463779699788e-09	180998.726673017\\
1.99386821901577e-09	137549.757797957\\
4.02346942635925e-09	102536.158504068\\
8.61095728436681e-09	74576.9643400917\\
1.80745959003242e-08	54684.1246943635\\
3.76358141870151e-08	40232.3909756002\\
7.8204137991114e-08	29625.7131301953\\
1.60615836592298e-07	21922.1508809304\\
3.24574432356625e-07	16332.0326299582\\
6.28248047584408e-07	12388.7037755101\\
1.26552623071018e-06	9241.93417813455\\
2.55392035619669e-06	6889.16707435326\\
4.97813979745638e-06	5210.49628464708\\
9.79367551635203e-06	3925.63412178515\\
1.94864627124844e-05	2943.65279827513\\
3.79528350106531e-05	2227.12602741934\\
7.31788553470079e-05	1692.12129802106\\
0.000140940145669814	1286.24642295776\\
0.000275792218755697	971.248633201613\\
0.000531076043196739	738.336334689117\\
0.000998095494895607	567.017491129622\\
0.00187872145546865	435.167308731975\\
0.00350691012309381	335.146105764928\\
0.00622745406796776	263.561667089989\\
0.0110461100914335	207.364361490373\\
0.0197370939176132	162.651269086379\\
0.0323013391131282	132.354426544371\\
0.0546318523832499	106.228449939375\\
0.0913161434184625	85.6815241064554\\
0.142431501633974	71.138489923013\\
0.208297248246717	60.6778734394055\\
0.320945060282772	50.6372671853551\\
0.511622405906813	41.6609494887964\\
0.765907116524617	35.1890985585249\\
1.2241919575351	28.9190443968033\\
1.24318878339538	28.7333077577642\\
};
\addlegendentry{$\eta^{-2/(2.38)}$};

\end{axis}
\end{tikzpicture}%
\end{subfigure}
\caption{Decomposition times for various precisions and $N = 256 \times 256$.} \label{fig:timings_wrt_error}
\end{figure}
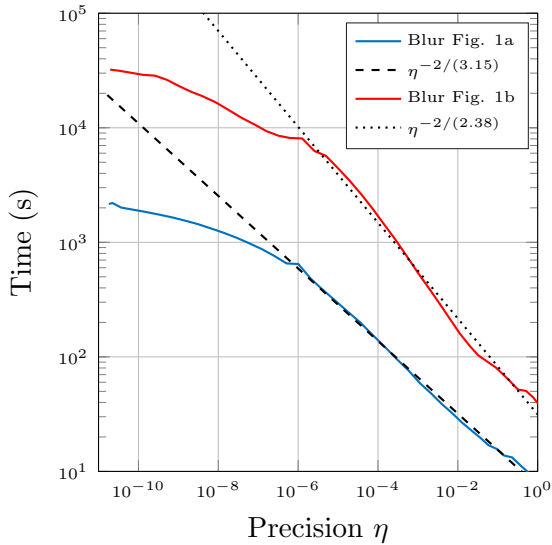

\begin{remark}\label{rem:choice_wavelet}
    The choice of the wavelet basis is of crucial importance as it has to simultaneously provide a good representation for the image and for the operator.
    A key parameter is the number of vanishing moments of the wavelet basis. As suggested by Definition \ref{def:smoothness}, the number of vanishing moments should be at least equal to the smoothness $\alpha$ of the convolution kernels $u_k$ in order to get the best possible approximation.
    
    In \cite{escande2015sparse} we performed  extensive numerical experiments of the approximation properties of a wavelet basis w.r.t. the number of vanishing moments. As expected, taking as many moments as possible was preferable for the approximation rate. However, increasing it too much led to insignificant  approximation quality while deteriorating the numerical complexity significantly. This is mostly due to the constants of Theorem \ref{thm:cohen} which increase with the number of vanishing moments.

    We performed additional numerical simulations - not reported here - to assess the behavior of the complexity of Algorithm \ref{alg:decomp} w.r.t. the sidelength $\delta$ of the wavelet basis. This sidelength is related to the number of vanishing moments $M$ since $\delta \geq M$ \cite[Theorem 5.7, p. 286]{Mallat-Book}. Numerically, we observed that the algorithm actually performs better than the rate $\delta^{d+1}$ in Theorem \ref{thm:overall_complexity}.
\end{remark}

\subsection{A deblurring experiment} \label{sec:deblurring}

\paragraph{The setting}

We assume that an observed image $f_0\in \mathcal{E}$ reads
\begin{equation*}
    f_0 = Hf + b,
\end{equation*}
where $f \in \mathcal{E}$ is the clean image to recover, $b \sim \mathcal{N}(0, \sigma^2 \textrm{Id}_N)$ is a white Gaussian noise of standard deviation $\sigma$ and $H \in \R^{N \times N}$ is the blurring operator. The image $f$ used in this experiment is shown in Figure \ref{fig:confocal} and Figure \ref{fig:images_original_degraded} contains the degraded version $f_0$ with the two operators considered.

\begin{figure}[htpb] \centering
\begin{subfigure}[b]{0.9\textwidth} \centering
\begin{tikzpicture}[zoomboxarray] 
    \node [image node] { \includegraphics[width=0.4\textwidth]{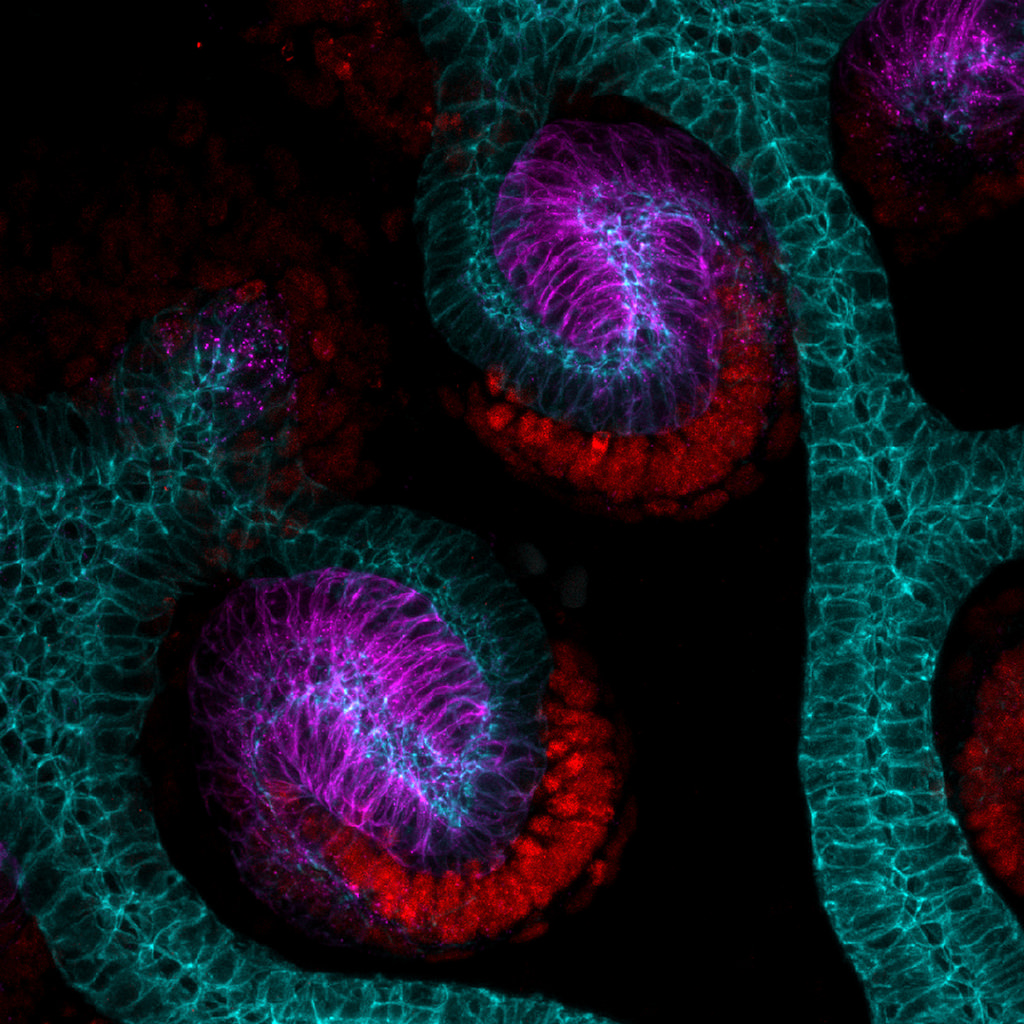} };
    \zoombox[color code=red]{0.4,0.4}
    \zoombox[color code=blue]{0.5,0.9}
    \zoombox[magnification=3]{0.1,0.1}
    \zoombox[color code=green]{0.9,0.5}
\end{tikzpicture}
\caption{Original image $f$ -- $1024 \times 1024$ pixels} \label{fig:confocal}
\end{subfigure}

\begin{subfigure}[b]{0.9\textwidth} \centering
\begin{tikzpicture}[zoomboxarray] 
    \node [image node] { \includegraphics[width=0.4\textwidth]{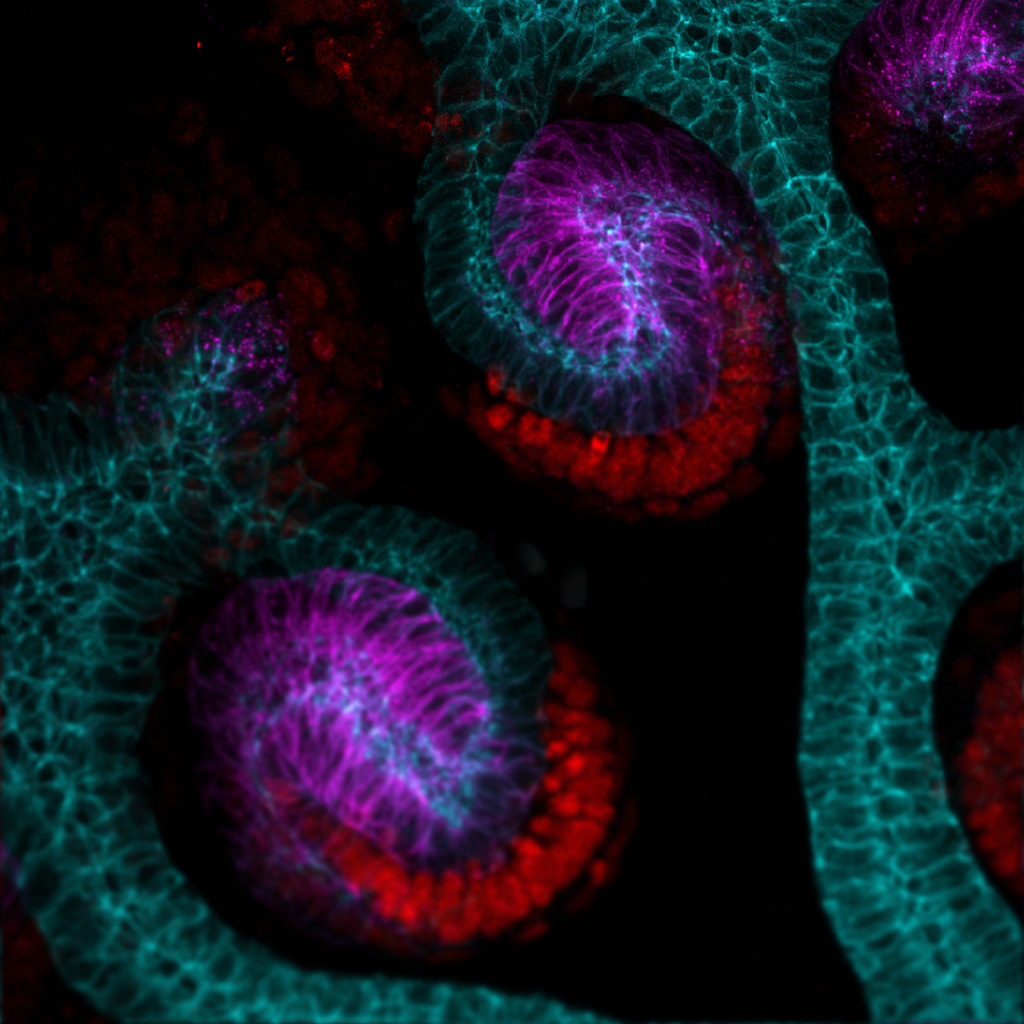} };
    \zoombox[color code=red]{0.4,0.4}
    \zoombox[color code=blue]{0.5,0.9}
    \zoombox[magnification=3]{0.1,0.1}
    \zoombox[color code=green]{0.9,0.5}
\end{tikzpicture}
\caption{Blurred with blur Figure \ref{fig:psf_blur_1} -- SNR = $25.30$ dB} \label{fig:confocal_blurred_1}
\end{subfigure}

\begin{subfigure}[b]{0.9\textwidth} \centering
\begin{tikzpicture}[zoomboxarray]
    \node [image node] { \includegraphics[width=0.4\textwidth]{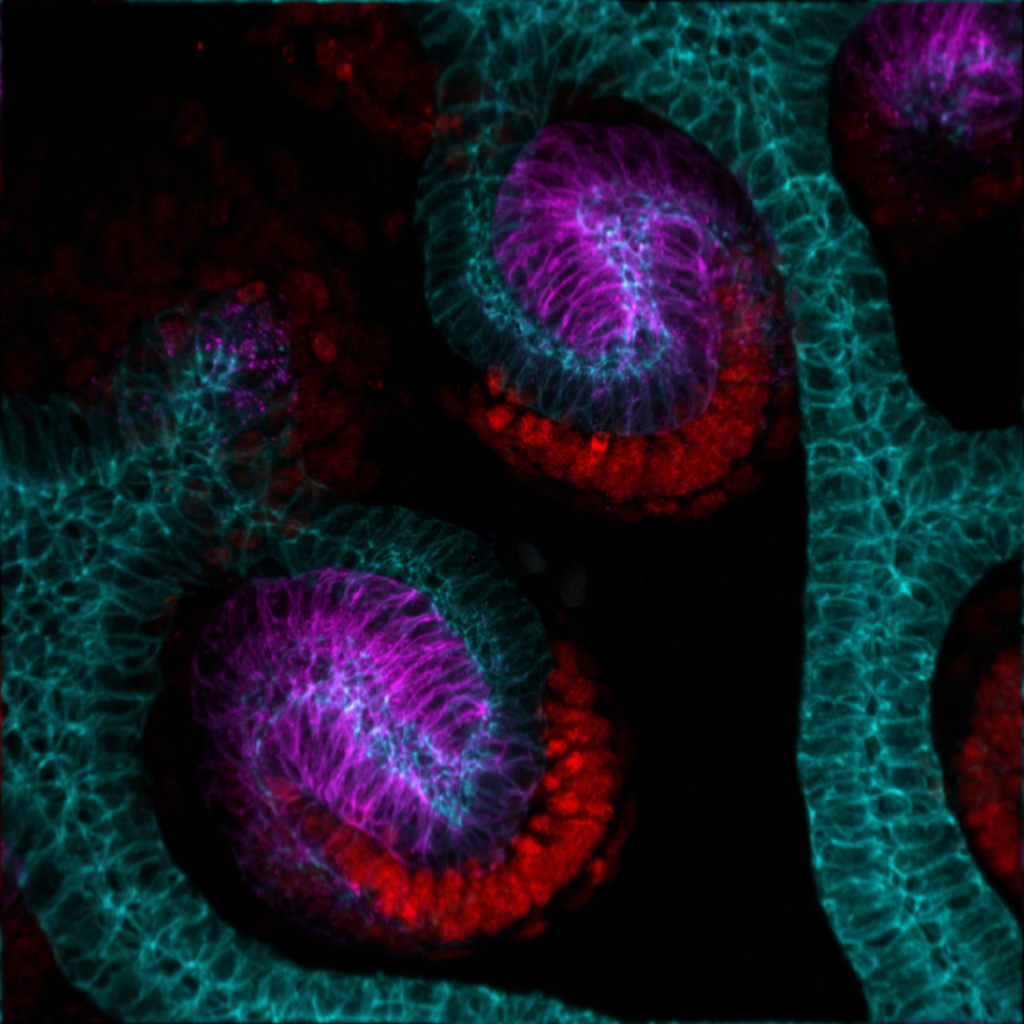} };
    \zoombox[color code=red]{0.4,0.4}
    \zoombox[color code=blue]{0.5,0.9}
    \zoombox[magnification=3]{0.1,0.1}
    \zoombox[color code=green]{0.9,0.5}
\end{tikzpicture}
\caption{Blurred with blur Figure \ref{fig:psf_blur_2} -- SNR = $22.72$ dB} \label{fig:confocal_blurred_2}
\end{subfigure}
  \caption{Images $f_0$, degraded versions of $f$ for the two operators considered. Noise level $\sigma = 5.10^{-3}$} \label{fig:images_original_degraded}
\end{figure}

\paragraph{The optimization problem}

In this experiment we will seek to recover the image $f$ by solving the following variational problem
\begin{equation} \label{eq:optim_deblurring}
    \argmin_{f \in \mathcal{E}} E(f) = \frac{1}{2} \left\| H f - f_0 \right\|_2^2 + \left\| \Psi^* f \right\|_{1,w},
\end{equation}
where $\| z \|_{1,w}$ is a weighted $\ell_1$-norm with weights $w \in \R^N$ and defined by
\begin{equation*}
    \| z \|_{1,w} = \sum_{\lambda \in \Lambda} w[\lambda] \left|z[\lambda] \right|. 
\end{equation*}
The solution of this variational problem does not correspond to the state-of-the-art in terms of image quality since the prior is simple, but it is known to perform well in short computation times. Figure \ref{fig:images_deblurred} shows the solutions of \eqref{eq:optim_deblurring} for the two blurs, with $\Psi$ being a Symmlet wavelet basis of order 6, since it offers a good compromise between computing times and visual quality of the results \cite{escande2018accelerating}. We use this basis in all the experiments.

\begin{figure}[htpb] \centering
\begin{subfigure}[b]{0.9\textwidth} \centering
\begin{tikzpicture}[zoomboxarray]
    \node [image node] { \includegraphics[width=0.4\textwidth]{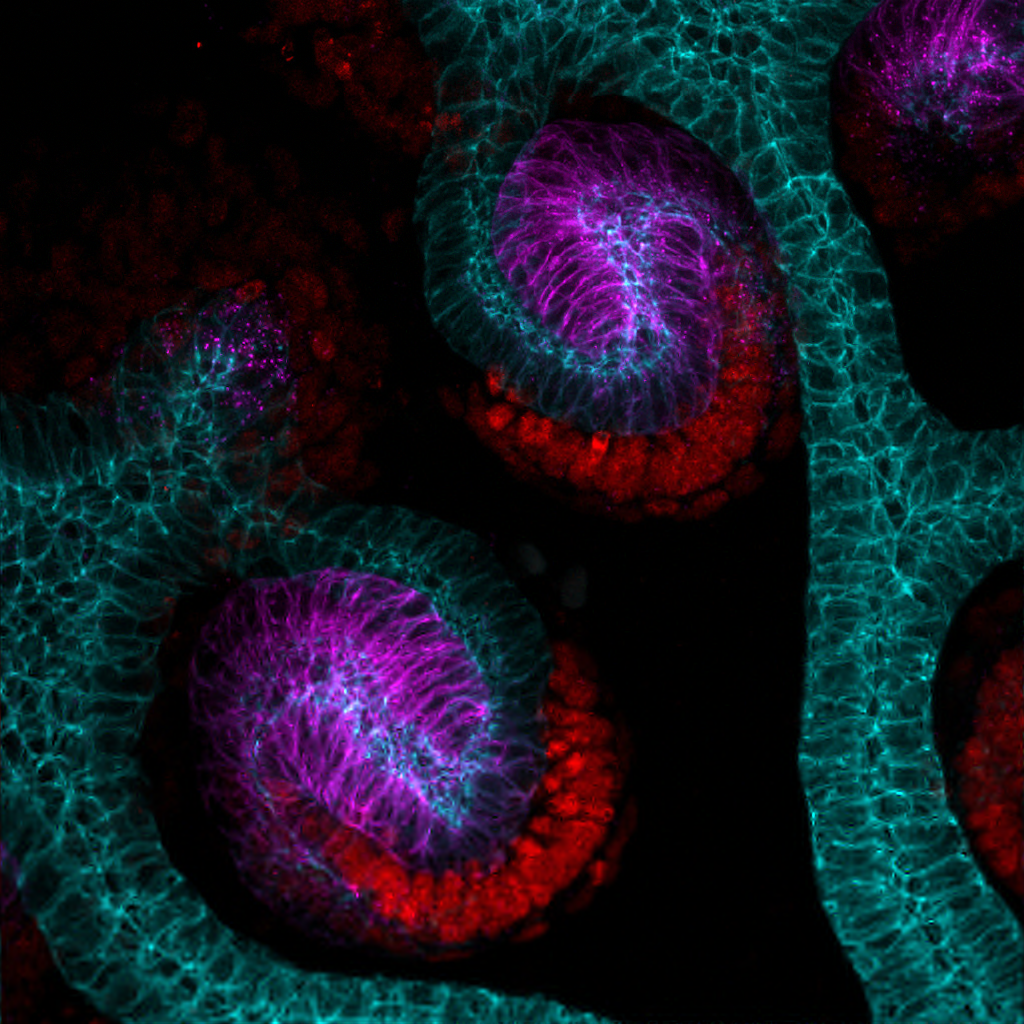} };
    \zoombox[color code=red]{0.4,0.4}
    \zoombox[color code=blue]{0.5,0.9}
    \zoombox[magnification=3]{0.1,0.1}
    \zoombox[color code=green]{0.9,0.5}
\end{tikzpicture}
\caption{Restored image -- SNR = $27.66$ dB} \label{fig:deblurred_1}
\end{subfigure}

\begin{subfigure}[b]{0.9\textwidth} \centering
\begin{tikzpicture}[zoomboxarray]
    \node [image node] { \includegraphics[width=0.4\textwidth]{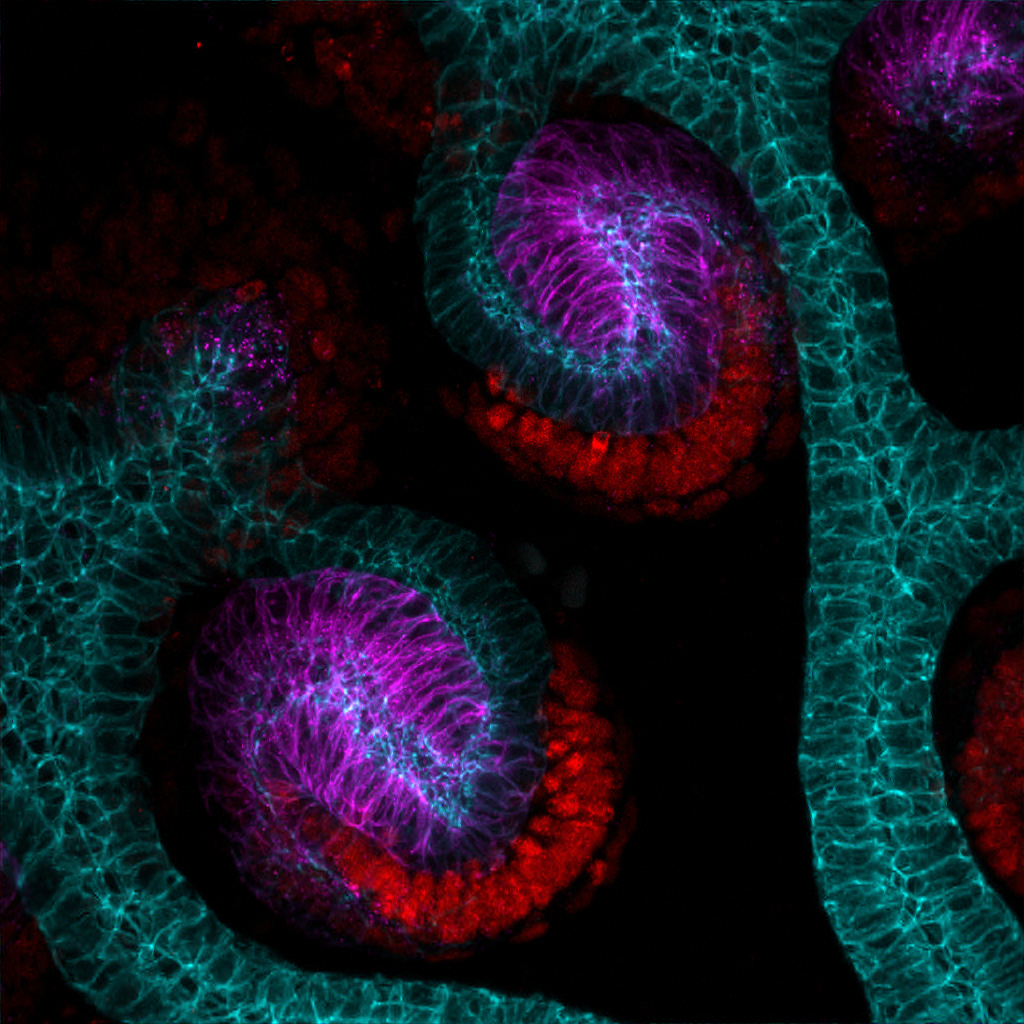} };
    \zoombox[color code=red]{0.4,0.4}
    \zoombox[color code=blue]{0.5,0.9}
    \zoombox[magnification=3]{0.1,0.1}
    \zoombox[color code=green]{0.9,0.5}
\end{tikzpicture}
\caption{Restored image -- SNR = $29.71$ dB} \label{fig:deblurred_2}
\end{subfigure}

  \caption{Solutions of \eqref{eq:optim_deblurring} for the two operators considered. The exact operator is used with $w[\lambda] = 2 \cdot 10^{-2} \cdot |\lambda|$.} \label{fig:images_deblurred}
\end{figure}

\paragraph{The proposed algorithms}

Many algorithms were designed to solve non-smooth convex problems of the form \eqref{eq:optim_deblurring}. The two predominant algorithms are the (accelerated) proximal gradient descent \cite{nesterov2013gradient} also known as FISTA \cite{beck2009fast} and the alternating descent method of multipliers (ADMM) \cite{glowinski2008lectures}.

As shown in \cite{escande2018accelerating}, a very efficient version of FISTA can be derived by approximating \eqref{eq:optim_deblurring} in the wavelet domain i.e.
\begin{equation} \label{eq:optim_deblurring_wavelet}
    \argmin_{z \in \R^N} \frac{1}{2} \left\| \Theta_L z - z_0 \right\|_2^2 + \left\| z \right\|_{1,w},
\end{equation}
where $z_0 = \Psi^* f_0$, and $\Theta_L$ is an $L$-sparse approximation of $\Theta = \Psi^* H \Psi$.
This idea allows to avoid constantly swapping between the wavelet and spatial domains, where most of the time is spent in traditional solvers. In addition, the matrix $\Theta_L^*\Theta_L$ - which appears in the gradient of the quadratic term - is mostly concentrated around its diagonal that decreases across sub-bands. An efficient diagonal preconditioner can be designed exploiting this property and further reduce the number of iterations by a factor roughly equal to $2$ without compromising image quality. 

The ADMM on its side requires the inversion of symmetric positive definite linear systems with matrix $\Theta_L^*\Theta_L+\tau \mathrm{I}_N$. This step could also be accelerated using the same ideas, but we observed that it was less efficient in practice and will not further report about this.

Table \ref{table:algos} compares the performance of various algorithms detailed in more details in Appendix \ref{sec:deblurring_algos}.

\begin{table}[htbp] \centering
\begin{tabular}{@{} p{0.12\textwidth} @{\hskip 0.07\textwidth} p{0.4\textwidth} @{\hskip 0.07\textwidth} p{0.06\textwidth} @{} p{0.06\textwidth} @{} p{0.06\textwidth} @{} p{0.06\textwidth} @{}}
    \toprule
    Algorithm & Description & \multicolumn{4}{c}{Complexity} \\   
    & & {\footnotesize MVP} & {\footnotesize FFT} & {\footnotesize FWT} & {\footnotesize Diag} \\ \midrule 
    {\footnotesize FISTA-Spa} & FISTA with the exact operator & {\footnotesize 2} & {\footnotesize -} & {\footnotesize 2} & {\footnotesize -} \\[0.2cm]
    {\footnotesize FISTA-PC} & FISTA using a PC expansion & {\footnotesize -} & {\footnotesize $2m$} & {\footnotesize 2} & {\footnotesize $2m$} \\[0.2cm]
    {\footnotesize FISTA-W} & FISTA with sparse $\Theta_L$ & {\footnotesize 2} & {\footnotesize -} & {\footnotesize -} &  {\footnotesize -}\\[0.2cm]
    {\footnotesize FISTA-WP} & FISTA with $\Theta_L$ and Jacobi preconditioner & {\footnotesize 2} & {\footnotesize -} & {\footnotesize -} & {\footnotesize 1} \\[0.2cm]
    {\footnotesize ADMM-PC} & See Appendix \ref{sec:admm} & {\footnotesize -} & {\footnotesize $3m$} & {\footnotesize 3} & {\footnotesize $6m+2$}\\
    \bottomrule
\end{tabular}
\caption{List of algorithms compared in the numerical experiments with their complexity per iteration in terms of matrix vector products (MVP), fast Fourier transforms (FFT), wavelet transforms (WT) and product with diagonal matrices (Diag).} \label{table:algos}
\end{table}

\subsubsection{Choice of \texorpdfstring{$m$}{m} and \texorpdfstring{$\eta$}{precision}} \label{sec:choice_m}

The approximation properties of product-convolution expansions are investigated in \cite{escande2016approximation} in terms of Frobenius distance on the operators.
We are dealing here with an inverse problem thus the approximation needs are different. 
Because of the regularization term in \eqref{eq:optim_deblurring}, fine approximations of the operator are unnecessary, allowing to reduce the computational burden, see Figure \ref{fig:choice_m}. 

We choose the order of expansion $m$ in a such a way that the pSNR of the deblurred images does not differ from more than $0.01$dB from the pSNR of the solution of \eqref{eq:optim_deblurring}. 
From Figure \ref{fig:choice_m}, we obtain $m = 5$ for the operator on Figure \ref{fig:psf_blur_1} and $m = 25$ for the operator on Figure \ref{fig:psf_blur_2}. Note that $m$ is larger for the second blur since the impulse response variations are more complex. 

\begin{figure}[htpb] \centering
\begin{subfigure}[b]{0.45\textwidth}
%
%
\definecolor{mycolor1}{rgb}{0.00000,0.44700,0.74100}%
\definecolor{mycolor2}{rgb}{0.85000,0.32500,0.09800}%
\begin{tikzpicture}

\begin{axis}[%
width=0.856\textwidth,
height=0.9\textwidth,
at={(0\textwidth,0\textwidth)},
scale only axis,
xmin=1,
xmax=10,
ymin=12,
ymax=30,
xlabel=$m$,
axis background/.style={fill=white},
legend style={legend cell align=left,align=left,draw=white!15!black},
legend pos=south east,
]
\addplot [color=mycolor1,solid, thick,mark=x]
  table[row sep=crcr]{%
1 13.584934 \\ 
2 25.012203 \\ 
3 27.260059 \\ 
4 27.621129 \\ 
5 27.660344 \\ 
6 27.665037 \\ 
7 27.664931 \\ 
8 27.664620 \\ 
9 27.664756 \\ 
10 27.664868 \\ 
11 27.664630 \\ 
12 27.664575 \\ 
13 27.664579 \\ 
14 27.664598 \\ 
15 27.664561 \\ 
16 27.664562 \\ 
17 27.664579 \\ 
18 27.664559 \\ 
19 27.664573 \\ 
20 27.664560 \\ 
21 27.664575 \\ 
22 27.664562 \\ 
23 27.664585 \\ 
24 27.664558 \\ 
25 27.664571 \\ 
26 27.664566 \\ 
27 27.664543 \\ 
28 27.664558 \\ 
29 27.664565 \\ 
30 27.664570 \\ 
};
\addlegendentry{pSNR PC};

\addplot [color=mycolor2,dashed,very thick]
  table[row sep=crcr]{%
1	27.664565161284\\
30	27.664565161284\\
};
\addlegendentry{pSNR exact};

\end{axis}
\end{tikzpicture}%
\end{subfigure}
$\qquad$
\begin{subfigure}[b]{0.45\textwidth}
%
%
\definecolor{mycolor1}{rgb}{0.00000,0.44700,0.74100}%
\definecolor{mycolor2}{rgb}{0.85000,0.32500,0.09800}%
\begin{tikzpicture}

\begin{axis}[%
width=0.856\textwidth,
height=0.9\textwidth,
at={(0\textwidth,0\textwidth)},
scale only axis,
xmin=5,
xmax=30,
ymin=25,
ymax=30,
xlabel=$m$,
axis background/.style={fill=white},
legend style={legend cell align=left,align=left,draw=white!15!black},
legend pos=south east
]
\addplot [color=mycolor1,solid, thick,mark=x]
  table[row sep=crcr]{%
1 17.663160 \\ 
2 19.665483 \\ 
3 20.765925 \\ 
4 24.484914 \\ 
5 24.992414 \\ 
6 25.387114 \\ 
7 25.474954 \\ 
8 28.491673 \\ 
9 28.627515 \\ 
10 28.833819 \\ 
11 29.005854 \\ 
12 29.028120 \\ 
13 29.070750 \\ 
14 29.076217 \\ 
15 29.090793 \\ 
16 29.139465 \\ 
17 29.174553 \\ 
18 29.550345 \\ 
19 29.559565 \\ 
20 29.565745 \\ 
21 29.568412 \\ 
22 29.573239 \\ 
23 29.580070 \\ 
24 29.693663 \\ 
25 29.703000 \\ 
26 29.703220 \\ 
27 29.706505 \\ 
28 29.704728 \\ 
29 29.714160 \\ 
30 29.715642 \\ 
};
\addlegendentry{pSNR PC};

\addplot [color=mycolor2, dashed, very thick]
  table[row sep=crcr]{%
1	29.7156422282043\\
30	29.7156422282043\\
};
\addlegendentry{pSNR exact};

\end{axis}
\end{tikzpicture}%
\end{subfigure}
  \caption{ pSNR of the solution of \eqref{eq:optim_deblurring} obtained with the product-convolution expansion of order $m$ (blue) and the exact operator (red). Left: for the operator on Figure \ref{fig:psf_blur_1}. Right: for the operator on Figure \ref{fig:psf_blur_2}.} \label{fig:choice_m}
\end{figure}
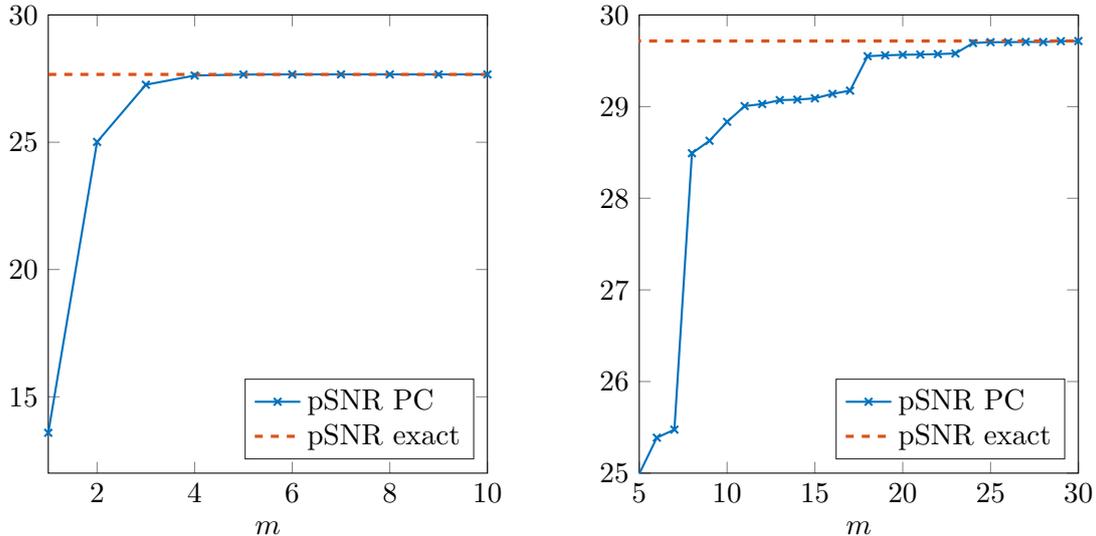

In a similar fashion, we select the precision $\eta$ in such a way that the pSNR of the solution of \eqref{eq:optim_deblurring_wavelet} does not differ from more that $0.04$dB from the pSNR of the solution of \eqref{eq:optim_deblurring}. We do not report the plots, but we found that $\eta = 5.10^{-4}$ was sufficient to ensure this requirement.

\subsubsection{Computing times}

In this paragraph, we precisely compare the different methods designed to solve \eqref{eq:optim_deblurring}. The methodology consists in:
\begin{itemize}
    \item finding the number $L$ for which the solution of \eqref{eq:optim_deblurring_wavelet} has a decrease of pSNR of less than 0.2dB w.r.t. the pSNR of the solution of \eqref{eq:optim_deblurring},
    \item for each algorithm, finding a number of iteration $Nit$ leading to a precision.
    \begin{equation} \label{eq:stopping_crit}
        E(f^{(Nit)}) - E(f^{(0)}) \leq 10^{-3} E(f^{(0)}).
    \end{equation}
\end{itemize}

The cost functions are reported in Figure \ref{fig:deblurring_comparisons}, while the timings are reported in Tables \ref{table:timings_1} and \ref{table:timings_2}. 
Deblurring were performed on Matlab with automatic multi-threading disabled (using the -singleCompThreah option), to avoid uncontrolled behaviors while timing. 
To further demonstrate the efficiency of FISTA-WP, we implement it in CUDA and run it on a GeForce GTX Titan X in double precision.

\begin{figure}[htpb]
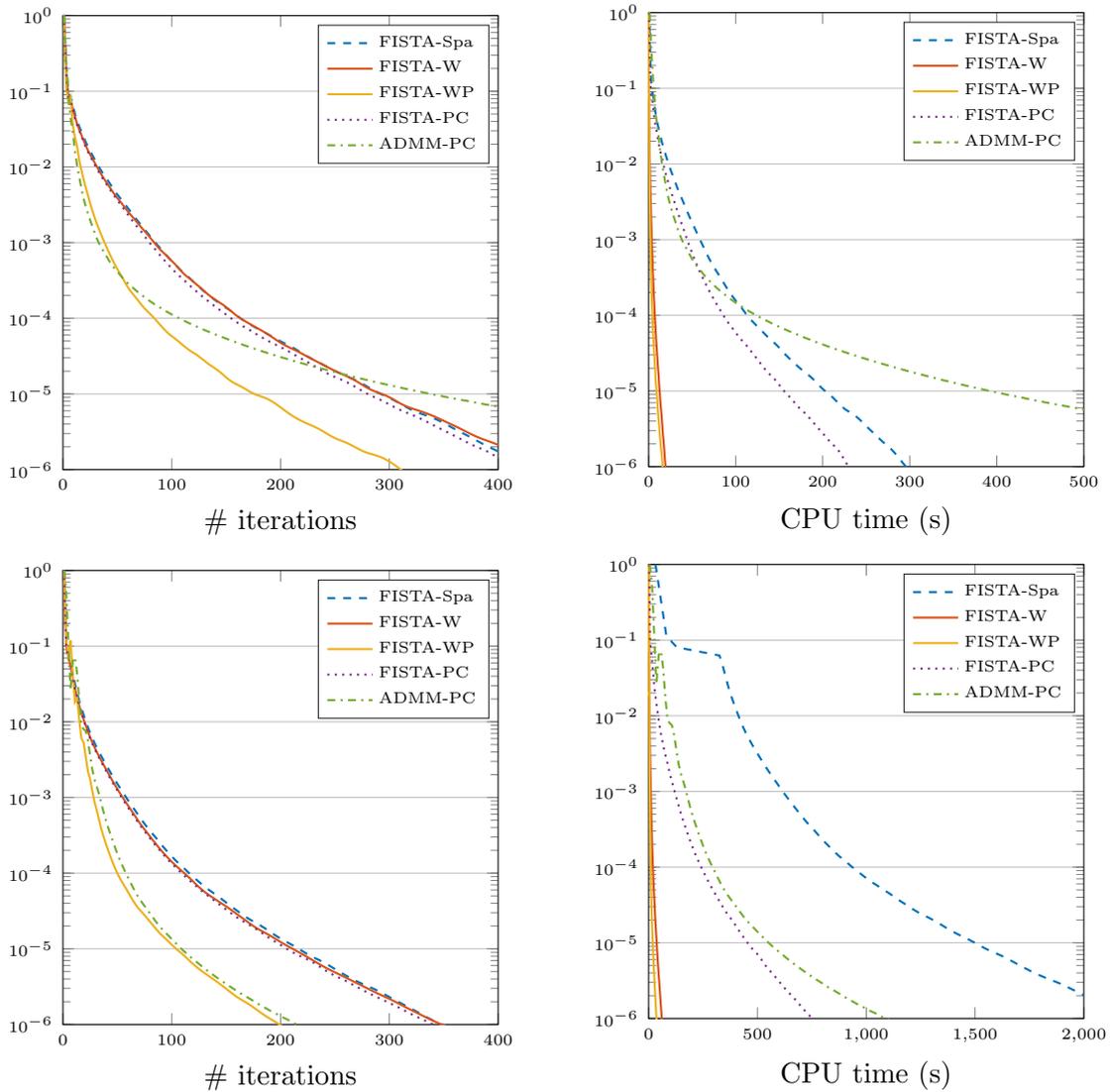
 \centering
\begin{subfigure}[b]{0.45\textwidth}
    \input{images/deblurring/comparisons_6/deblurring_cf_vs_it.tex}
\end{subfigure}
$\qquad$
\begin{subfigure}[b]{0.45\textwidth}
    \input{images/deblurring/comparisons_6/deblurring_cf_vs_time.tex}
\end{subfigure}

\begin{subfigure}[b]{0.45\textwidth}
    \input{images/deblurring/comparisons_rotation_skew/deblurring_cf_vs_it.tex}
\end{subfigure}
$\qquad$
\begin{subfigure}[b]{0.45\textwidth}
    \input{images/deblurring/comparisons_rotation_skew/deblurring_cf_vs_time.tex}
\end{subfigure}
  \caption{Performance of the deblurring methods in Table \ref{table:algos} for an image of size $1024\times 1024$. The cost functions are displayed with respect to the number of iterations on the left column and the time on the right column. The first row corresponds to the blur on Figure \ref{fig:psf_blur_1} and the second one to the blur on Figure \ref{fig:psf_blur_2}.} \label{fig:deblurring_comparisons}
\end{figure}

\begin{table}[htbp] \centering
{\small
    \begin{tabular}{@{}l c c c c c c @{}}
        \toprule
            & FISTA-Spa & FISTA-PC & ADMM-PC & FISTA-W & FISTA-WP & FISTA-WP (GPU) \\ \midrule
        \# iterations & 84 & 80 & 32 & 84 & 38 & 38 \\
        Time (s) & 58.3 & 44.4 & 37.6 & 3.6 & 1.9 & 0.087\\
        Speed-up & - & 1.3 & 1.5 & 16.2 & 30.7 & 670 \\
        \bottomrule
    \end{tabular}}
    \caption{Timings and iterations number to reach criterion \eqref{eq:stopping_crit} for the blur in Figure \ref{fig:psf_blur_1} for the algorithms in Table \ref{table:algos} and in Figure\ref{fig:confocal} ($N = 1024\times 1024$).} \label{table:timings_1}
\end{table}

\begin{table}[htbp] \centering
{\small
    \begin{tabular}{@{}l c c c c c c@{}}
        \toprule
        & FISTA-Spa & FISTA-PC & ADMM-PC & FISTA-W & FISTA-WP & FISTA-WP (GPU) \\ \midrule        
        \# iterations & 58 & 54 & 34 & 54 & 28 & 28 \\
        Time (s) & 616 & 119.1 & 172.6 & 9.9 & 5.1 & 0.164\\
        Speed-up & - & 5.2 & 3.6 & 62 & 120 & 4089 \\
        \bottomrule
    \end{tabular}}
    \caption{Timings and iterations number to reach criterion \ref{eq:stopping_crit} for the blur in Figure \ref{fig:psf_blur_2} for the algorithms in Table \ref{table:algos} and in Figure\ref{fig:confocal} ($N = 1024\times 1024$).}\label{table:timings_2}
\end{table}

This example confirms that solving problem \eqref{eq:optim_deblurring} by an approximation in the wavelet domain \eqref{eq:optim_deblurring_wavelet} leads to dramatic reduction of computation times ($\times 16$ and $\times 62$ speed-up factors) with little loss in the quality (less than 0.2dB). 
Furthermore, the use of the structure of $\Theta_L$ allows to derive efficient diagonal preconditioners which lead to an extra $\times 2$ speed-up.

\section{Proof of the main result} \label{sec:proofs}


The proof exploits the peculiar structure of product-convolution expansions:
\begin{itemize}
 \item the convolution operator $U_k = u_k \star \cdot$ can be decomposed in the wavelet basis and yields a matrix $A_k = \Psi^* U_k \Psi$ in $O(N \log_2^2 N)$ operations. As long as $U_k$ belongs to the smoothness class described in Definition \ref{def:smoothness}, the matrix $A_k$ can be thresholded to obtain a sparse approximation $\widetilde{A}_k$  of $A_k$.
 \item the multipliers $V_k = v_k \odot \cdot$ are diagonal. Therefore the sparsity pattern of $B_k = \Psi^* V_k \Psi$ is included in the set of wavelet with supports intersecting the diagonal and contain at most $O(N \log_2 N)$ non-zero coefficients, with known locations. They can be computed in $O(N \log_2 N)$ operations using a cascade algorithm.
 \end{itemize}
The general approach is therefore to compute each couple $(A_k, B_k)$, compute the products $\widetilde{A}_k B_k$ and accumulate the sum to get an approximation $\widetilde \Theta = \sum_{k=1}^m \widetilde{A}_k  B_k$ of $\Theta$.
The last two steps are also critical since in the general case, the number of coefficients can explode in the product $\widetilde{A}_k B_k$ or when summing them all.
However, relying again on the peculiar structure of the matrices $\widetilde{A}_k$ and $B_k$, one can prove that the complexity of the algorithm stays quasi-linear. In what follows we precisely describe each of these steps.

\subsection{Structure of \texorpdfstring{$A_k$}{A}}

The matrix $A_k = \Psi U_k \Psi^*$ is the wavelet representations of a convolution operator $U_k$. 
As such, it inherits a peculiar structure. 
As observed in \cite{escande2018accelerating} it has circulant sub-bands. 
To make it precise, we recall that a wavelet representation $\Theta$ can be decomposed into its wavelet sub-bands:
\begin{equation*}
	\Theta = \left( \Theta_{j,j'}^{e,e'}\right)_{j,j' \leq J-1}^{e,e' \in \{0,1\}^d}, \quad \textrm{with} \, \Theta_{j,j'}^{e,e'}[l,l'] = \Theta[\lambda,\mu].  
\end{equation*}
for $l \in \mathcal{T}_j$, $l' \in \mathcal{T}_{j'}$ and $\lambda = (j,e,l)$, $\mu = (j',e',l')$. 
For instance on 1D signals with $J=2$, the sub-bands of $\Theta$ are shown in Figure \ref{fig:struct_theta}.

\begin{figure}[htp] \centering
	\includegraphics[width=0.25\textwidth]{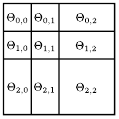}
	\caption{Sub-band structure of convolution matrices in the wavelet domain. Here on 1D signals with $J=2$.} \label{fig:struct_theta}
\end{figure}

The sub-bands  $\Theta_{j,k}^{e,e'}$  themselves have a specific structure captured by the following definition.

\begin{definition}[Rectangular circulant matrices]
	Let $P \in \R^{2^{jd} \times 2^{kd}}$ denote a rectangular matrix and $\tau_a$ be the translation operator by vector $a \in \Omega$.
	The matrix $P$ is called circulant if and only if
	\begin{itemize}
		\item When $k \geq j$ : there exists $p \in \R^{2^{kd}}$ such that $P[l,:] = \tau_{2^{k-j}l}{p}$ for all $l \in \mathcal{T}_j$.
		\item When $k < j$ : there exists $p \in \R^{2^{jd}}$ such that $P[:,l] = \tau_{2^{j-k}l}{p}$ for all $l \in \mathcal{T}_k$.
	\end{itemize}
\end{definition}

\begin{lemma}[{\cite[Theorem 3]{escande2018accelerating}}] \label{lem:struct_A}
	The sub-bands of matrix $A_k$ are circulant. Therefore, only $O\left((2^{d}-1) N \log_2 N\right)$ coefficients are needed to encode $A_k$. They can be computed with no more than $O\left((2^{d}-1) N \log_2^2 N\right)$ operations.
\end{lemma}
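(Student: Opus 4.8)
The plan is to derive the circulant structure from the translation covariance of $U_k$, and then to realize every circulant generator as a block of wavelet coefficients of a single filtered wavelet, so that the whole matrix can be recovered from a small number of fast wavelet transforms. First I would record the two elementary facts that drive everything. Because we work on the discrete torus with circular boundary conditions, every wavelet inside a fixed sub-band is a circular translate of one generator, $\psi^e_{j,l}=\tau_{2^{J-j}l}\,\psi^e_{j,0}$ for $l\in\mathcal{T}_j$, where $\tau_a$ is the unitary circular shift by $a\in\Omega$; and circular convolution commutes with circular shifts, so $U_k\tau_a=\tau_a U_k$. Combining these with $\tau_a^\ast=\tau_{-a}$, a sub-band entry with $\lambda=(j,e,l)$, $\mu=(j',e',l')$ becomes
\begin{equation*}
A_k[\lambda,\mu]=\langle U_k\psi^e_{j,l},\psi^{e'}_{j',l'}\rangle=\bigl\langle u_k\star\psi^e_{j,0},\,\tau_{2^{J-j'}l'-2^{J-j}l}\,\psi^{e'}_{j',0}\bigr\rangle .
\end{equation*}
When $j'\geq j$ I would factor $2^{J-j}l=2^{J-j'}(2^{j'-j}l)$, so the entry depends on $(l,l')$ only through $l'-2^{j'-j}l$ taken modulo $2^{j'}$; writing $p[m]=\langle u_k\star\psi^e_{j,0},\tau_{2^{J-j'}m}\psi^{e'}_{j',0}\rangle$ this reads $A_k[\lambda,\mu]=p[\,l'-2^{j'-j}l\,]$, i.e. the row indexed by $l$ is $\tau_{2^{j'-j}l}p$. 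This is exactly the rectangular-circulant property of the preceding definition (with its $k$ equal to $j'\ge j$). The symmetric manipulation, factoring out $2^{J-j}$, treats $j'<j$ and yields column-circulance. Hence each sub-band is encoded by a single generator of length $2^{\max(j,j')d}$.

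For the quantitative claims the key observation is that a generator is literally a block of wavelet coefficients of a filtered wavelet. Taking $l=0$ above, the entire row $(j,e,0)$ of $A_k$ equals $\Psi^\ast\bigl(u_k\star\psi^e_{j,0}\bigr)$, the forward wavelet transform of $w:=u_k\star\psi^e_{j,0}$; reading off its sub-bands at the scales $j'\geq j$ produces every generator $p$ attached to the \emph{source} $(j,e)$ simultaneously. Forming $w$ costs one FFT-based circular convolution, $O(N\log_2 N)$, and the fast wavelet transform costs $O(N)$, so each source is handled in $O(N\log_2 N)$. For the coarse-column sub-bands $j'<j$ the row $l=0$ merely subsamples the generator, which is why I would recover them by symmetry: the column $(j',e',0)$ equals $\Psi^\ast\bigl(\widetilde u_k\star\psi^{e'}_{j',0}\bigr)$ with $\widetilde u_k$ the reflected kernel for which $U_k^\ast=\widetilde u_k\star\cdot$, and reading its sub-bands at scales $j\geq j'$ yields the remaining generators. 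There are $(2^d-1)J+O(1)=O\bigl((2^d-1)\log_2 N\bigr)$ sources of each kind, so the total cost is $O\bigl((2^d-1)N\log_2^2 N\bigr)$. The stored coefficients are counted the same way: the transform of one source has exactly $N$ entries, of which we keep a subset, so each source contributes at most $N$ coefficients, and summing over $O((2^d-1)\log_2 N)$ sources gives $O\bigl((2^d-1)N\log_2 N\bigr)$ (here the target-type factor is absorbed by the geometric sum $\sum_{j'}(2^d-1)2^{j'd}\le N$).

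The structural identity is essentially forced by covariance, so I expect the main obstacle to be the bookkeeping rather than any analytic difficulty: keeping the two shift steps $2^{J-j}$ and $2^{J-j'}$ straight, checking that the shifts are well defined modulo $2^{j}$ respectively $2^{j'}$ so that periodization produces no boundary artefact, and in particular recognizing that row $0$ fails to determine a coarse-column generator — which is precisely what forces the adjoint computation via $\widetilde u_k$. The remaining care is the scale summation that turns the per-source bounds into the announced $(2^d-1)N\log_2 N$ storage and $(2^d-1)N\log_2^2 N$ time estimates, together with the separate but trivial handling of the single coarse scaling function $\psi^0_{0,0}$.
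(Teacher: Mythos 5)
Your proposal is correct. Note that this paper never proves the lemma itself --- it is imported as a citation to \cite[Theorem 3]{escande2018accelerating} --- so there is no internal proof to compare against; your derivation reconstructs exactly the argument underlying the cited result: circular convolution commutes with shifts, each sub-band of a periodized wavelet basis consists of translates $\tau_{2^{J-j}l}\psi^e_{j,0}$ of a single generator, hence every sub-band is rectangular-circulant, and each generator is a block of $\Psi^*(u_k\star\psi^e_{j,0})$, computable by one FFT convolution plus one fast wavelet transform per source $(j,e)$. You also correctly identify the one genuine subtlety --- that row $0$ only subsamples the coarse-column generators ($j'<j$), forcing the adjoint computation with the reflected kernel $\widetilde u_k$ --- and your count of $O((2^d-1)\log_2 N)$ sources of each kind gives both the $O\left((2^d-1)N\log_2 N\right)$ storage and the $O\left((2^d-1)N\log_2^2 N\right)$ operation bounds as stated.
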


Furthermore, the matrices $A_k$ are assumed to belong to the smoothness class $\mathcal{A}_\alpha$ governing the decay of its coefficients.
This properties allows to obtain efficient approximations of $A_k$ as described by Theorem \ref{thm:cohen}.
\begin{theorem}[{\cite[Theorem 4.6.2]{cohen2003numerical}}] \label{thm:cohen}
	Let $A \in \mathcal{A}_{\alpha}$. For all $L \geq 0 $, one can construct a matrix $A_L$ with at most $L$ non-zero entries of $A$ per row and column such that 
	\begin{equation}
		\| A - A_L \|_{2 \to 2} \lesssim L^{-\alpha/d}.
	\end{equation}
\end{theorem}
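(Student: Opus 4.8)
The statement is the classical compression estimate for Calderón--Zygmund-type matrices in a wavelet basis, so the plan is to run the standard Schur-test argument, specialized to the decay encoded in Definition \ref{def:smoothness}. Write the error as $E = A - A_L$ and recall the Schur test: for any matrix $E$,
\[
\| E \|_{2\to 2} \le \sqrt{\Big(\sup_\mu \textstyle\sum_\lambda |E[\lambda,\mu]|\Big)\Big(\sup_\lambda \textstyle\sum_\mu |E[\lambda,\mu]|\Big)}.
\]
The matrix $A_L$ will be obtained by \emph{discarding} entries of $A$ (so its support is a subset of $\supp A$), and the whole difficulty is to choose which entries to keep so that no more than $L$ survive per row and column while the two factors above stay small. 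The bound $|A[\lambda,\mu]| \le C\,2^{-(d/2+\alpha)||\lambda|-|\mu||}\vartheta(\lambda,\mu)^{-(d+\alpha)}$ is symmetric in $\lambda,\mu$ under exchange of row/column roles, so the two factors are handled by the same computation.

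First I would record a counting estimate quantifying how many wavelets sit near a fixed one. For fixed $\mu$ with $|\mu|=j'$, the number of $\lambda$ with $|\lambda|=j$ and $\vartheta(\lambda,\mu)\approx t$ is of order $2^{d\max(j-j',0)}\,t^{d-1}$, a shell-volume times the wavelet density dictated by the $2^{-\min(j,j')}$ spacing of the centers. Inserting the decay bound into a row sum restricted to the discarded region $\vartheta(\lambda,\mu)>T$ and integrating $t^{d-1}(1+t)^{-(d+\alpha)}\sim t^{-\alpha-1}$ over $[T,\infty)$ yields, in the scale block $k=j-j'\ge 0$, a discarded row sum $\lesssim 2^{(d/2-\alpha)k}T^{-\alpha}$ and, by the mirror computation, a discarded column sum $\lesssim 2^{-(d/2+\alpha)k}T^{-\alpha}$. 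These individual sums are asymmetric in $k$, but their geometric mean --- which is exactly what the Schur test sees --- is the clean, scale-symmetric block estimate
\[
\| E_{j,j'} \|_{2\to 2} \lesssim 2^{-\alpha|j-j'|}\,T^{-\alpha}.
\]

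Next I would fix the truncation. Keeping a fixed spatial band $\vartheta \le T$ across all finer scales retains of order $2^{kd}T^d$ entries in block $(j,j')$ and hence explodes the per-row count, so the band must be taken to shrink with scale --- morally, keep only supports within $O(T)$ widths of the \emph{finer} of the two wavelets. With such a geometric profile the retained count per scale is controlled and the total per row and column is $O(T^d)=:O(L)$, the diagonal block $j=j'$ being dominant, while the block-norm bound above is preserved up to the scale-dependent adjustment. It then remains to sum the block estimates into a bound on $\|E\|_{2\to 2}$ and to substitute $T\approx L^{1/d}$.

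The delicate step --- and the one I expect to be the main obstacle --- is precisely this summation over scale blocks, carried out so as \emph{not} to lose a factor $J=\tfrac1d\log_2 N$ and so as to reach the exponent $\alpha/d$ rather than a degraded one. A crude triangle inequality over the $O(J^2)$ blocks would introduce exactly such a logarithmic loss, and summing absolute values at the finest scales (where the within-block Schur bound is weakest) is what would push the rate down to $\alpha/(2d)$. The remedy is an almost-orthogonality (Cotlar--Stein / block-Schur) argument: grouping the blocks by scale difference $k$, the family $\{E_{j'+k,j'}\}_{j'}$ has essentially disjoint row and column supports, so $\big\|\sum_{j'}E_{j'+k,j'}\big\|_{2\to 2}\lesssim \sup_{j'}\|E_{j'+k,j'}\|_{2\to 2}\lesssim 2^{-\alpha|k|}T^{-\alpha}$, and the geometric series over $k$ gives $\|E\|_{2\to 2}\lesssim T^{-\alpha}$. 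With $T\approx L^{1/d}$ this is the claimed $\|A-A_L\|_{2\to 2}\lesssim L^{-\alpha/d}$; the only remaining care is the careful balancing of the scale-dependent band profile against $L$ at the fine end of the spectrum, which is where the two requirements (sparsity of the retained pattern and smallness of the discarded norm) are in tension.
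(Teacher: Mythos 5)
You should first be aware that the paper contains no proof of this statement to compare against: Theorem~\ref{thm:cohen} is imported verbatim from \cite[Theorem 4.6.2]{cohen2003numerical}, and the only related material is the remark following it, which records the retained set as the entries with both $||\lambda|-|\mu||\le J(L)$ and $\vartheta(\lambda,\mu)\le 2^{J(L)}$. That said, your framework --- per-block Schur estimates, the row/column tail sums $2^{(d/2-\alpha)k}T^{-\alpha}$ and $2^{-(d/2+\alpha)k}T^{-\alpha}$, their geometric mean $2^{-\alpha|k|}T^{-\alpha}$, and block-diagonality of the fixed-offset families $\{E_{j'+k,j'}\}_{j'}$ --- is the standard scheme behind Cohen's proof, and those individual estimates are correct.

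The genuine gap is that your sparsity count and your error bound are established for two \emph{incompatible} truncation rules, and the almost-orthogonality step cannot reconcile them. The geometric decay $\|E_{j'+k,j'}\|_{2\to2}\lesssim 2^{-\alpha|k|}T^{-\alpha}$ holds for the uniform band $\vartheta\le T$; the count forces you to the shrinking band $\vartheta\lesssim T2^{-|k|}$, and for that band the same computation gives $\|E_{j'+k,j'}\|_{2\to2}\lesssim 2^{-\alpha|k|}\bigl(T2^{-|k|}\bigr)^{-\alpha}=T^{-\alpha}$, with \emph{no} decay in $k$. Block-diagonality only collapses the sum over $j'$ at fixed offset $k$ (that part is automatic); the remaining sum over offsets now has $\approx\log_2 T$ equal terms, which is precisely the logarithmic loss you set out to avoid. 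This is not a Schur-test artifact: for a matrix whose entries saturate the bound in Definition~\ref{def:smoothness}, pairing the discarded part with $x_\mu=2^{-d|\mu|/2}$ shows its norm really is of order $T^{-\alpha}\log_2 T$. Your count claim is also wrong on its own terms: pairs of wavelets with intersecting supports contribute $\approx\delta^d2^{d|k|}$ entries per row at offset $k$ whatever the band, so a scale cutoff $||\lambda|-|\mu||\le K$ is indispensable (it is condition (i) in the paper's remark, but absent from your rule), and with it the per-row count is $\approx KT^d+\delta^d2^{dK}$, dominated by the finest retained block rather than the diagonal one. Carrying both corrections through and optimizing the band profile under a budget of $L$ entries per row and column yields $\|A-A_L\|_{2\to2}\lesssim K^{1+\alpha/d}L^{-\alpha/d}$ with $K\approx\frac{1}{d}\log_2 L$: your construction proves the rate $L^{-s/d}$ for every $s<\alpha$, i.e.\ the stated rate only up to a logarithmic factor, and no Cotlar--Stein argument removes that factor --- to reach the clean exponent one must either state the result with the strict inequality (as compressibility results of this type usually are) or exploit structure beyond the decay bound alone.
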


As a summary, Lemma \ref{lem:struct_A} together with Theorem \ref{thm:cohen} ensure that the computation of the matrices $A_k$ and their approximations at any precision are tractable in large dimensions.

\begin{remark}
	Note that the approximation step is crucial to obtain a quasi-linear complexity of the overall algorithm. 
	If no approximation were made, the complexity of the product $A_k B_k$ could become quasi-quadratic.
\end{remark}

\begin{remark}
	The proof of Theorem \ref{thm:cohen} gives an explicit way of building $A_L$ from $A$. We recall it here for sake of completeness. The matrix $A_L$ is obtained by zeroing all entries satisfying either (i) $| |\lambda| - |\mu| | > J(L)$ or (ii) $\vartheta(\lambda,\mu) > 2^{J(L)}$ where $J(L)$ is linked to $L$ through the relation $L \lesssim J(L) 2^{dJ(L)}$. A straightforward algorithm building $A_L$ consists in looping along all rows $\lambda$ and rows $\mu$ and discard elements satisfying either (i) or (ii). This algorithm requires $O(N^2)$ operations in the general case. Due to the particular structure of matrix $A$ - only $\log_2(N)$ coefficient are non-zero per row (Lemma \ref{lem:struct_A}) - this algorithm has $O(N \log_2 N)$ complexity. 
	In practice, many authors propose to simply threshold $A$. This comes from the observation that keeping coefficients that are larger than $2^{-(d/2+\alpha) J}$ do not satisfy (i) and (ii) (from Definition \ref{def:smoothness}). Again, since $A$ is encoded with $O(N\log_2N)$ coefficients, thresholding $A$ costs $O(N\log_2N)$ instead of $O(N^2)$ in the general case.
\end{remark}

\subsection{Structure of \texorpdfstring{$B_k$}{B}}

We now turn on discussing the structures of the matrices $B_k = \Psi V_k \Psi^*$. Since the operators $V_k$ are diagonal, the coefficients $B_k[\lambda,\mu] = \langle v_k \odot \psi_\lambda, \psi_\mu \rangle$ necessarily vanish as long as $\supp \psi_\lambda$ and $\supp \psi_\mu$ do not intersect.
Define 
\begin{equation*}
	\mathcal{I} = \left\{ (\lambda,\mu) \in \Lambda \times \Lambda \, | \, \supp \psi_\lambda \cap \supp \psi_\mu \neq \emptyset \right\},
\end{equation*}
we thus have $\supp B_k \subseteq \mathcal{I}$ for all $1 \leq k \leq m$.
This property allows to derive upper bounds for the number of coefficients in a column and a row of $B_k$ (Lemma \ref{lem:num_coeff_row_B}) as well as for the cardinality of $\supp B_k$ (Lemma \ref{lem:num_intersections}).

\begin{lemma} \label{lem:num_coeff_row_B}
    Assume that the mother wavelet is compactly supported on a hypercube of sidelength $2\delta$.
	Let $\lambda \in \Lambda$ and $j=|\lambda|$, there are at most $(2\delta)^d \left( (2^d-1) j + 2^{d(J-j)}\right)$ indexes $\mu \in \Lambda$ such that $\supp \psi_\lambda \cap \supp \psi_\mu \neq \emptyset$. 
\end{lemma}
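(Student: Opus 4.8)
The plan is to exploit the separable tensor structure of the wavelets together with the explicit support estimates recalled in Section~\ref{sec:preliminaries}, and to organize the count according to the scale $j'=|\mu|$ of the second wavelet, splitting $\Lambda$ into the \emph{coarser} scales $j'<j$ and the \emph{finer-or-equal} scales $j'\ge j$. Since $\psi_\lambda$ and $\psi_\mu$ are products of one-dimensional factors, their $d$-dimensional supports are Cartesian products of intervals, and $\supp\psi_\lambda\cap\supp\psi_\mu\neq\emptyset$ holds if and only if the one-dimensional supports intersect in each of the $d$ coordinates. I would therefore first settle the one-dimensional counting problem and then tensorize. The periodic (torus) boundary conditions ensure that no special treatment of the domain edges is needed.

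For the coarser scales $j'<j$ I would use the bounded-overlap property of a compactly supported wavelet system. A scale-$j'$ generator occupies an interval of length at most $2\delta\,2^{J-j'}$, while consecutive translates are shifted by $2^{J-j'}$ samples; hence any fixed sample lies in the support of at most $2\delta$ translates, and the (smaller) support of $\psi_\lambda$ meets at most $2\delta$ of them per coordinate. Tensorizing over the $d$ coordinates bounds by $(2\delta)^d$ the number of positions $l'\in\mathcal{T}_{j'}$ that can contribute, for each admissible type $e'$. Recalling from the decomposition in Section~\ref{sec:preliminaries} that at every scale $j'$ only the $2^d-1$ nonzero types $e'\in\{0,1\}^d\setminus\{0\}$ occur, each coarser scale contributes at most $(2^d-1)(2\delta)^d$ indices, and summing over the $j$ scales $j'\in\{0,\dots,j-1\}$ yields the term $(2\delta)^d(2^d-1)j$.

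For the finer-or-equal scales $j'\ge j$ a per-scale count would produce a geometric blow-up in $2^{j'-j}$, so instead I would bound the total number of such indices at once by a dimension-counting argument on the dyadic tree. Partition $\Omega$ into the $2^{jd}$ dyadic cubes of side $2^{J-j}$ associated with scale $j$: the support of $\psi_\lambda$, having side at most $2\delta\,2^{J-j}$, meets at most $(2\delta)^d$ of these cubes. Every wavelet $\psi_\mu$ with $|\mu|\ge j$ is localized within one such cube, and the detail functions at scales $j,\dots,J-1$ localized in a single scale-$j$ cube span its detail space, of dimension at most $2^{d(J-j)}$. Multiplying the two counts gives the term $(2\delta)^d\,2^{d(J-j)}$, and adding the two contributions proves the claim.

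The main obstacle is the geometric bookkeeping that turns these heuristics into the stated constants: the one-dimensional supports are not perfectly nested across scales, so a finer wavelet whose position sits in a cube disjoint from $\supp\psi_\lambda$ may still reach into it, and symmetrically the coarse count must absorb the slight enlargement of $\supp\psi_\lambda$ relative to a single sample. I would control these boundary effects quantitatively using the exact endpoints of $\supp\phi_{j,0}$ and $\supp\psi_{j,0}$ together with the shift-by-$2^{J-j'}$ structure of the translates, verifying that the enlargements stay within the $(2\delta)^d$ overlap budget in each coordinate. Once the one-dimensional intersection counts are pinned down, tensorization and the summation over scales are routine.
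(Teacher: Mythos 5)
Your skeleton (split into coarse scales $j'<j$ and fine scales $j'\ge j$, count admissible positions per coordinate, tensorize) is the same as the paper's, and your coarse-scale count is a per-scale count in the paper's spirit. The genuine gap is the fine-scale step. The claim that every wavelet $\psi_\mu$ with $|\mu|\ge j$ is localized within a single scale-$j$ dyadic cube is false whenever $\delta>1$: by the support formulas of Section 2, a wavelet at scale $j'=j$ has support of sidelength $(2^{J-j}-1)(2\delta-1)$ per coordinate, i.e.\ it straddles roughly $2\delta$ scale-$j$ cubes, and more generally a scale-$j'$ wavelet reaches up to about $\delta(2^{J-j'}-1)$ samples beyond the cube containing its anchor $2^{J-j'}l'$. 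Hence ``$\supp\psi_\mu$ meets $\supp\psi_\lambda$'' does not imply ``$\psi_\mu$ is anchored in a cube meeting $\supp\psi_\lambda$'', and this overspill is not a boundary effect that fits inside your $(2\delta)^d$ budget: to capture all offending wavelets you must enlarge the $\approx(2\delta)^d$ cubes meeting $\supp\psi_\lambda$ by $\approx\delta$ cubes in every direction, i.e.\ to $\approx(4\delta)^d$ cubes, which yields $\approx 2^d(2\delta)^d\,2^{d(J-j)}$ — a multiplicative factor $2^d$ over the stated constant, not something that can be ``verified to stay within'' the budget. Relatedly, your justification that wavelets localized in a cube ``span its detail space of dimension $2^{d(J-j)}$'' is a Haar-only argument; for $\delta>1$ support-containment and spanning do not coincide. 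What is true is the anchor count $(2^d-1)\sum_{j'\ge j}2^{d(j'-j)}=2^{d(J-j)}-1$, a statement about positions, not supports.

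The honest repair makes the reach scale-dependent: at scale $j'$ the number of admissible $l'$ per type is $\approx\bigl((2^{j'-j}+1)(2\delta-1)+1\bigr)^d$, and summing the geometric series over $j'\ge j$ produces the $2^{d(J-j)}$ term — but this is exactly the paper's proof (it writes the support-intersection inequalities, solves for $l'$, counts the volume $\#\mathcal{N}_{j,j'}$ per scale, and sums over $j'$), so the all-scales-at-once shortcut you hoped for does not survive. Two mitigating remarks. First, your coarse-scale constant has the same defect: at scale $j'=j-1$ a coordinate can meet up to $\approx 3\delta$ translates (the point-overlap count $2\delta$ plus $\approx 2\delta\, 2^{j'-j}$ extra translates coming from the length of $\supp\psi_\lambda$), not $2\delta$. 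Second, the paper's own proof silently drops the analogous $(1+2^{j'-j})^d$ factors when passing from the per-scale volumes to the final sum, so strictly speaking both arguments only establish the bound up to a dimensional constant — which is all that is used downstream in Lemma \ref{lem:num_intersections} and Lemma \ref{cor:cost_AB}. Still, as written your fine-scale step would fail even at that level of rigor unless you reinstate the per-scale counting.
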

\begin{proof}
	Let $\lambda = (j,e,l) \in \Lambda$.
	From the definition of $\psi_\lambda$ we get that 
	\begin{equation*}
		\supp \psi_\lambda \subseteq \left[-(2^{J-j}-1)(\delta-1)+2^{J-j}l,(2^{J-j}-1)\delta+2^{J-j}l\right].
	\end{equation*}
	Consider now another $\mu = (j',e',l') \in \Lambda$. Necessary conditions for the intersections of the supports of $\psi_\lambda$ and $\psi_\mu$ are
	\begin{equation*}
		\begin{split}
		-(2^{J-j'}-1)(\delta-1) + 2^{J-j'}l' & \leq (2^{J-j}-1)\delta + 2^{J-j}l \\
		-(2^{J-j}-1)(\delta-1) + 2^{J-j}l & \leq (2^{J-j'}-1)\delta + 2^{J-j'}l',
		\end{split}
	\end{equation*}
	where the $\leq$ has to be understood as a component-wise inequality for $d>1$.
	Without loss of generality, we assume that $l = 0$. 

	We let $\mathcal{N}_{j,j'}$ be the set of $l' \in \Lambda$ satisfying the above conditions i.e.
	\begin{equation*}
		\left\{
		\begin{aligned}
			l' & \geq -2^{j'-J}\left( 2^{J-j} + 2^{J-j'} - 2 \right) \delta + 2^{j'-J}(2^{J-j}-1) \\
			l' & \leq 2^{j'-J}\left( 2^{J-j} + 2^{J-j'} -2\right) \delta - 2^{j'-J}(2^{J-j'}-1)
		\end{aligned}
		\right.
	\end{equation*}

	The volume of $\mathcal{N}_{j,j'}$ is therefore 
	\begin{equation*}
		\begin{split}
		\# \mathcal{N}_{j,j'} & = \left( 2^{j'-J} \left( 2^{J-j} + 2^{J-j'} - 2 \right) (2 \delta-1) + 1 \right)^d \\
							& \leq \left( 2^{j'} \left( 2^{-j} + 2^{-j'} \right) (2 \delta-1) + 1 \right)^d \\
							& \leq \left( 2^{j'} \left( 2^{-j} + 2^{-j'} \right) 2 \delta + 1 - 2^{j'-j} -1  \right)^d \\
							& \leq \left( 2^{j'} \left( 2^{-j} + 2^{-j'} \right) 2 \delta  \right)^d 
		\end{split}
	\end{equation*}
	
	Therefore,
	\begin{equation*}
	\begin{split}
	\# &\left\{ \mu \in \Lambda \, | \, \supp \psi_\mu \cap \supp \psi_\lambda \neq \emptyset \right\} \\
	& \leq \sum_{j'=0}^{J-1} \sum_{e \in \{0,1\}^d \setminus \{0\}} \sum_{l' \in \mathcal{T}_{j'}} \indic_{\mathcal{N}_{j,j'}}(l') \\
	& = (2^d-1)\sum_{j'=0}^{J-1} \left( 2^{j'} \left( 2^{-j} + 2^{-j'} \right) 2 \delta  \right)^d  \\
	& \leq (2^d-1) (2\delta)^d \left( \sum_{j'=0}^{j-1} 1 + \sum_{j'=j}^{J-1} 2^{d(j'-j)} \right) \\
	& = (2^d-1) (2\delta)^d \left( j + 2^{-jd} \frac{2^{dJ} - 2^{dj}}{2^d-1} \right) \\
	& \leq (2\delta)^d \left( (2^d-1) j + 2^{d(J-j)} \right)
	\end{split}
	\end{equation*}
	which gives the upper-bound on the number of intersections.
\end{proof}

\begin{lemma} \label{lem:num_intersections}
	We have $\# \mathcal{I} \leq c(d) \delta^d J 2^{dJ}$, where the constant $c(d) = 2(2^d-1) 2^{d}$.
	Therefore since $J = \frac{1}{d}\log_2(N) $, the total number of intersections is at most 
	\begin{equation*}
		\frac{c(d)}{d} \delta^d N \log_2(N).
	\end{equation*}
\end{lemma}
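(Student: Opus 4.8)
The plan is to obtain $\#\mathcal{I}$ by summing the per-row count of Lemma \ref{lem:num_coeff_row_B} over all wavelets $\lambda \in \Lambda$. Since
\begin{equation*}
\#\mathcal{I} = \sum_{\lambda \in \Lambda} \#\left\{ \mu \in \Lambda \mid \supp \psi_\lambda \cap \supp \psi_\mu \neq \emptyset \right\},
\end{equation*}
and the inner count depends on $\lambda$ only through $j = |\lambda|$, I would first group the sum by scale. At scale $j$ there are $(2^d-1) 2^{jd}$ detail wavelets (the $2^d-1$ admissible values of $e$ times the $2^{jd}$ translations $l \in \mathcal{T}_j$), the single coarse scaling function being negligible. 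Inserting the bound of Lemma \ref{lem:num_coeff_row_B} then gives
\begin{equation*}
\#\mathcal{I} \leq \sum_{j=0}^{J-1} (2^d-1) 2^{jd} (2\delta)^d \left( (2^d-1) j + 2^{d(J-j)} \right).
\end{equation*}

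Next I would factor out $(2^d-1) 2^d \delta^d$, using $(2\delta)^d = 2^d \delta^d$, and split the remaining sum into the two contributions $\sum_{j=0}^{J-1} (2^d-1) j \, 2^{jd}$ and $\sum_{j=0}^{J-1} 2^{jd} 2^{d(J-j)} = \sum_{j=0}^{J-1} 2^{dJ} = J 2^{dJ}$. The second is immediate. For the first, the key estimate is the geometric-sum bound
\begin{equation*}
\sum_{j=0}^{J-1} j \, 2^{jd} \leq J \sum_{j=0}^{J-1} 2^{jd} = J \frac{2^{dJ}-1}{2^d-1} \leq J \frac{2^{dJ}}{2^d - 1},
\end{equation*}
so that the factor $(2^d-1)$ cancels and this contribution is also at most $J 2^{dJ}$. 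Adding the two terms yields a bound of $2 J 2^{dJ}$ for the bracketed sum, whence
\begin{equation*}
\#\mathcal{I} \leq (2^d-1) 2^d \delta^d \cdot 2 J 2^{dJ} = c(d) \delta^d J 2^{dJ}
\end{equation*}
with $c(d) = 2(2^d-1)2^d$, as claimed.

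Finally, substituting $J = \frac{1}{d}\log_2 N$ and $2^{dJ} = N$ converts this into $\frac{c(d)}{d} \delta^d N \log_2 N$, giving the stated total. The computation is essentially routine; the only point requiring a little care is the term $\sum_{j} j \, 2^{jd}$, where the crude bound $j \leq J$ must be combined with the geometric series precisely so that the extra $(2^d-1)$ factor cancels and the constant $c(d)$ comes out scale-independent rather than growing with $J$. One should also keep the per-scale wavelet count honest ($2^d-1$ orientations plus the negligible coarse scaling function) so that the advertised constant is not understated.
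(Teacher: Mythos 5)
Your proof is correct and follows essentially the same route as the paper: summing the per-scale bound of Lemma \ref{lem:num_coeff_row_B} over the $(2^d-1)2^{jd}$ wavelets at each scale $j$, splitting the resulting sum into the $j\,2^{jd}$ part (handled by $j \leq J$ plus the geometric series, with the $(2^d-1)$ cancellation) and the constant $2^{dJ}$ part, then substituting $J = \frac{1}{d}\log_2 N$. Nothing to add.
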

\begin{proof}
\begin{equation*}
	\begin{split}
	\# \mathcal{I} & = \#\left\{ (\lambda,\mu) \in \Lambda \times \Lambda \, | \, \supp \psi_\mu \cap \supp \psi_\lambda \neq \emptyset \right\} \\
	& \leq \sum_{j=0}^{J-1} (2^d-1) 2^{jd}  2^{d} \delta^d \left( (2^d-1) j + 2^{d(J-j)}\right)  \\
	& \leq (2^d-1) 2^d \delta^d \left( (2^d-1) \sum_{j=0}^{J-1} j 2^{jd} + \sum_{j=0}^{J-1} 2^{dJ}\right)  \\
	& \leq (2^d-1) 2^{d} \delta^d \left( (2^d-1) J \frac{2^{dJ} - 1}{2^d - 1} + J 2^{dJ}\right) \\
	& \leq 2(2^d-1) 2^{d} \delta^d J 2^{dJ} \\ 
	\end{split}
\end{equation*}
\end{proof}

The numerical computation of $B_k = \Psi^* V_k \Psi$ is conceptually simple. It consists in two steps:
\begin{description}
 \item[Step 1] Compute the wavelet transform of each column of $V_k$ i.e. build a matrix $\widehat V_k$ such that $\widehat V_k[\lambda,i] = \langle V_k[\cdot, i], \psi_\lambda \rangle$ for all $1 \leq i \leq N$ and $\lambda \in \Lambda$. 
Each column of $V_k$ is a discrete Dirac mass, i.e. contains only one element. 
Therefore, its wavelet transform can be computed with a minimal number of operations using a cascade algorithm taking advantage of the signals sparsity.
A cascade algorithm consists in applying discrete convolutions with filters $h$ and $g$ and subsampling the outputs in a recursive fashion. It is derived from the multiresolution structure of a wavelet transform. We refer to \cite[Section 7.3]{Mallat-Book} for more details.
 \item[Step 2] Compute the wavelet transform of each row of $\widehat V_k$ i.e. $B_k[\lambda,\mu] = \langle \widehat V_k[\lambda, \cdot], \psi_\mu \rangle$ for all $\lambda, \mu \in \Lambda$. 
Similarly, the $\lambda$-th row of $\widehat V_k$ is sparse: it contains at most $ (2^{J-j} 2 \delta)^d$ contiguous coefficients. The same sparse cascade algorithm can be used to compute its wavelet transform.
\end{description}
We illustrate the sparsity structure of the matrices $\widehat V_k$ and $B_k$ in Figure \ref{fig:structure_matrix_B} for the case of 1D signals. Observe that each row of $\widehat V_k$ contains only a few contiguous nonzero indexes.

\begin{figure}[htpb] \centering
\begin{subfigure}[b]{0.45\textwidth}
\centering
    \includegraphics[height=5cm]{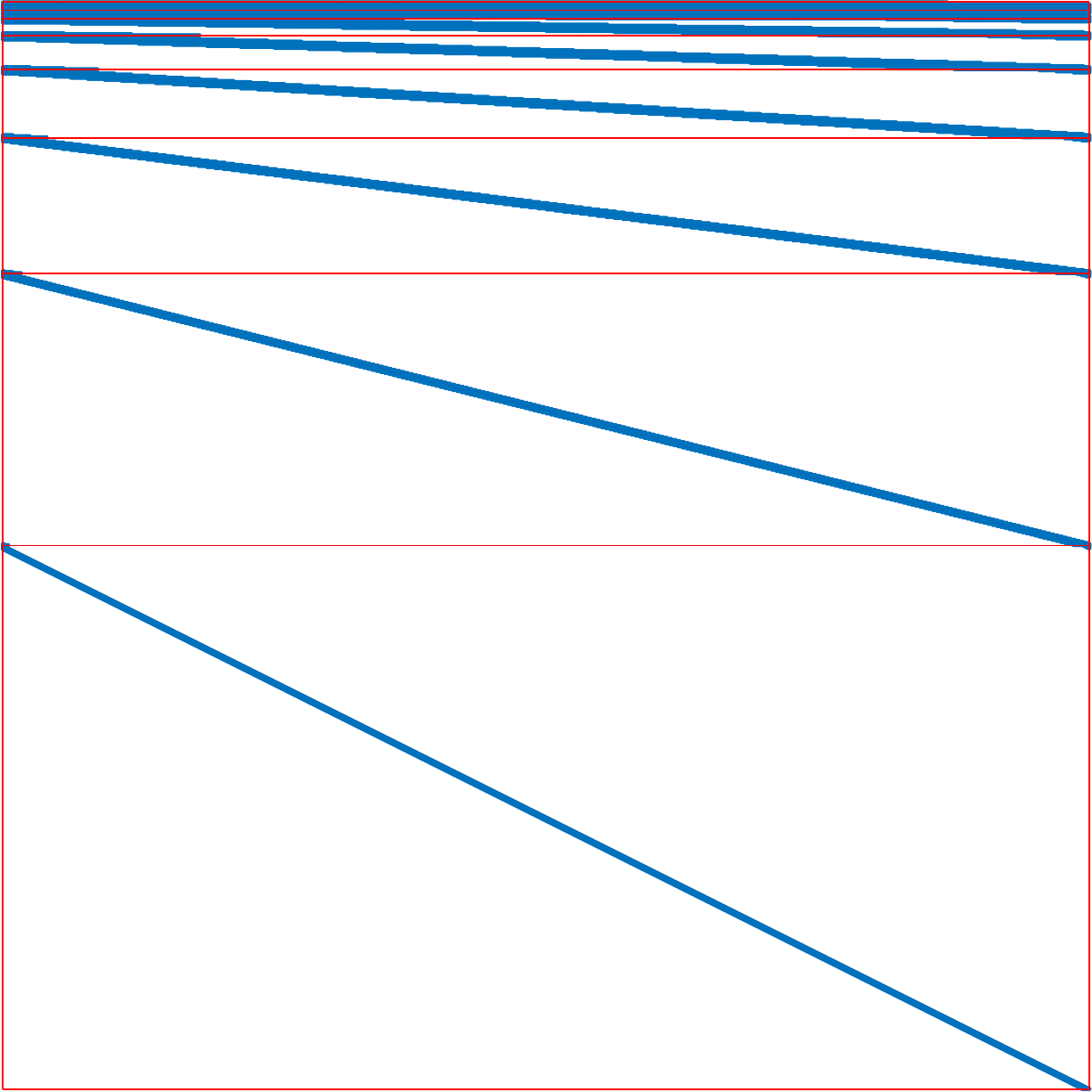}
    \caption{Sparsity structure of $\widehat V_k$}
\end{subfigure}
$\qquad$
\begin{subfigure}[b]{0.45\textwidth}
\centering
    \includegraphics[height=5cm]{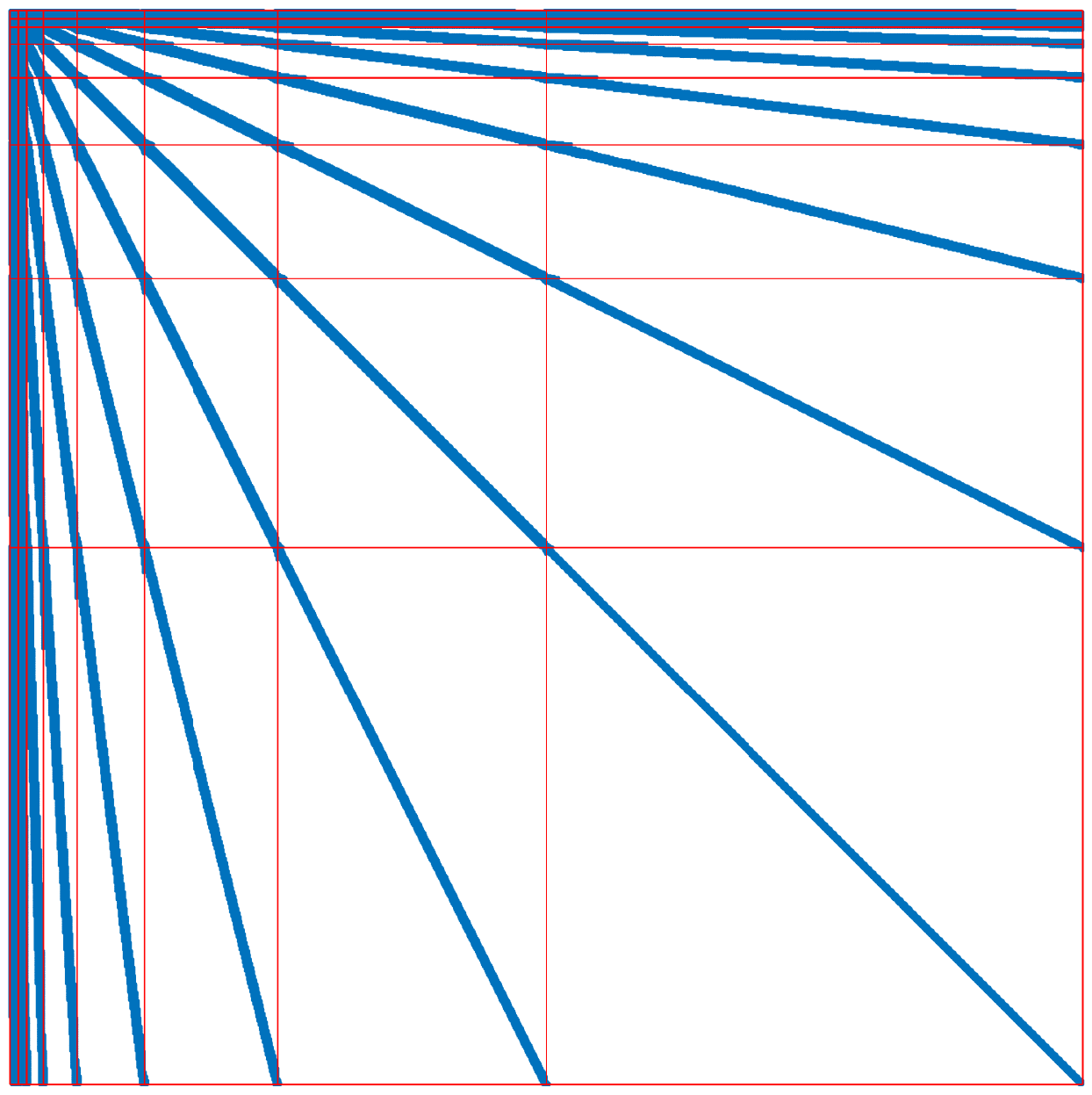}
    \caption{Sparsity structure of $B_k$}
\end{subfigure}
  \caption{Wavelet representations of a diagonal matrix. Here we illustrate the case of a 1D wavelet transform of size $1024\times 1024$ with $J=3$ decomposition levels.} \label{fig:structure_matrix_B}
\end{figure}

To control the complexity of each step, we will use the following result. 
\begin{lemma} \label{lem:wavelet_transform}
	Let $f \in \mathcal{E}$ be supported on the hypercube of $\Omega$ defined by the lower vertex $q_0 \in \Omega$ and the side length $\kappa$. We call $q_1 \in \Omega$ the upper vertex of the hypercube i.e. $q_1 = q_0 + \kappa$. 
	Let $c = \Psi^* f$ be its wavelet transform and $c_{j,e}$ its $(j,e)$-sub-band i.e. $c_{j,e}[l] = c[\lambda] = \langle f, \psi_\lambda \rangle$ for all $l \in \mathcal{T}_j$ and $\lambda = (j,e,l)$.
	We have
	\begin{equation*}
		\begin{aligned}
		\supp c_{j,e} & = \left[ 2^{j-J} q_0 - (1-2^{-J}), 2^{j-J}q_1+ (1-2^{-J})2(\delta-1) \right] \\
		\#\supp c_{j,e}  & \leq 2^d \left( 2^{d(j-J)} \kappa^d + (2\delta)^d \right)
		\end{aligned}
	\end{equation*}
	Furthermore, the number of operations to compute $c$ is bounded above by 
	\begin{equation*}
		2 d \delta 2^d \left(  \kappa^d +  (2^d-1) (2 \delta)^d J \right).
	\end{equation*}
\end{lemma}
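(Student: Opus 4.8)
The plan is to treat the three claims in sequence, taking the explicit one-dimensional supports recalled in Section~\ref{sec:preliminaries} as the starting point. First I would establish the support formula. Recall that $\supp\phi_{j,0}=[0,(2^{J-j}-1)(2\delta-1)]$ and $\supp\psi_{j,0}=[-(2^{J-j}-1)2(\delta-1),(2^{J-j}-1)]$, and that translating the index shifts these by $2^{J-j}l$, so $\supp\rho^{e_i}_{j,l_i}=2^{J-j}l_i+\supp\rho^{e_i}_{j,0}$. Since $\psi_\lambda$ is separable, $\supp\psi_\lambda=\prod_{i=1}^d\supp\rho^{e_i}_{j,l_i}$, and $c_{j,e}[l]=\langle f,\psi_\lambda\rangle$ can be non-zero only when this box meets $\supp f=\prod_i[q_{0,i},q_{1,i}]$. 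This condition factorizes over the $d$ coordinates, so for each $i$ I would intersect the translated one-dimensional support with $[q_{0,i},q_{1,i}]$ and solve the two resulting inequalities for $l_i$. Both the scaling and the wavelet supports sit inside a common interval, so a single pair of inequalities bounds $l_i$ regardless of $e_i$; dividing by $2^{J-j}$ and using $j\ge 0$ (hence $2^{j-J}\ge 2^{-J}$) to replace the $j$-dependent offset by the uniform $(1-2^{-J})$ yields the claimed box.

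Second, the cardinality bound follows by counting integer points in that box. In each coordinate the interval has length at most $2^{j-J}\kappa+(2\delta-1)$, hence contains at most $2^{j-J}\kappa+2\delta$ indices; taking the product over the $d$ coordinates and applying $(a+b)^d\le 2^d(a^d+b^d)$ with $a=2^{j-J}\kappa$ and $b=2\delta$ gives $\#\supp c_{j,e}\le 2^d(2^{d(j-J)}\kappa^d+(2\delta)^d)$.

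Third, and this is where the real work lies, I would bound the cost of the sparse cascade. The algorithm proceeds from the finest scale $j=J$ down to $j=0$; at the transition from level $j+1$ to level $j$ it applies separable one-dimensional convolutions with the filters $h$ and $g$ (each of length at most $2\delta$) along each of the $d$ directions, followed by subsampling, producing the approximation band and the $2^d-1$ detail bands. The crucial point is that the cascade is run in sparse mode: applying the first claim with $e=0$ shows that the approximation coefficients at level $j$ occupy a box of at most $2^d(2^{d(j-J)}\kappa^d+(2\delta)^d)$ indices, so only these need to be touched. Charging $O(d\delta)$ per produced coefficient for the $d$ separable filterings of length $2\delta$, I would sum the per-level costs: the geometric contribution $\sum_{j=0}^{J-1}2^{d(j-J)}\kappa^d$ is controlled by $\kappa^d/(2^d-1)$ and collapses into the $\kappa^d$ term, while the constant $(2\delta)^d$ overhead, incurred in each of the $2^d-1$ detail bands over all $J$ levels, produces the $(2^d-1)(2\delta)^d J$ term. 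Collecting the prefactor $2d\delta 2^d$ gives the stated count.

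I expect the main obstacle to be precisely this last step: one must set up the sparse cascade so that it provably never computes a coefficient outside the support boxes, invoke the support bound uniformly at every intermediate scale, and keep the bookkeeping of the $d$ directions, the filter lengths, and the $2^d$ sub-bands clean enough that the geometric and the linear-in-$J$ contributions separate with the correct constants. By contrast, the support-intersection computation is comparatively mechanical, its only subtlety being the asymmetry between the $\phi$ and $\psi$ supports and the uniform loosening $2^{j-J}\to 2^{-J}$.
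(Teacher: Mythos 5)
Your proposal is correct and follows essentially the same route as the paper: intersect the translated one-dimensional supports with $[q_0,q_1]$ to get the index box, count integer points via $(a+b)^d\le 2^d(a^d+b^d)$, and bound the sparse cascade cost by charging $2d\delta$ operations per coefficient and summing the support bound over scales, with the geometric series $\sum_j 2^{d(j-J)}\le 1/(2^d-1)$ absorbing the $\kappa^d$ term and the $(2\delta)^d$ overhead over the $(2^d-1)$ detail bands and $J$ levels giving the second term. The paper's proof is exactly this computation, so no gap remains beyond writing out the inequalities.
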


\begin{proof}
	Let $\lambda = (j,e,l) \in \Lambda$.
	From the definition of $\psi_\lambda$ we get that 
	\begin{equation*}
		\supp \psi_\lambda \subseteq \left[-(2^{J-j}-1)(\delta-1)+2^{J-j}l,(2^{J-j}-1)\delta+2^{J-j}l\right].
	\end{equation*} 
	Necessary conditions for the intersections of the supports of $\psi_\lambda$ and $f$ are
	\begin{equation*}
	\left\{
		\begin{aligned}
		q_0 & \leq (2^{J-j}-1)\delta + 2^{J-j}l \\
		q_1 & \geq -(2^{J-j}-1)(\delta-1) + 2^{J-j}l.
		\end{aligned}
		\right.
	\end{equation*}
	where the inequalities have to be understood component-wise. These conditions are equivalent to 
	\begin{equation*}
		2^{j-J}\left( q_0 - (2^{J-j} -1) \right) \leq l \leq 2^{j-J} \left( q_1 + (1-2^{J}) 2 (\delta-1)\right).
	\end{equation*}
	The volume of such a set is
	\begin{equation*}
	\begin{split}
		\#\supp c_{j,e} & = \left( 2^{j-J} q_1 - 2^{j-J} q_0 + (1-2^{-J}) 2(\delta-1) + (1-2^{-J}) + 1 \right)^d \\
		& = \left( 2^{j-J} (q_1-q_0+1) + (1-2^{-J}) 2 \delta \right)^d \\ 
		& \leq 2^d \left( 2^{d(j-J)} \kappa^d + (2 \delta)^d \right)
	\end{split}
	\end{equation*}

	Since filters $g$ and $h$ have the same length, the details and average coefficients of the wavelet transforms have the same size in each sub-band - albeit not the same support. 
	Therefore we only analyze the detail coefficients.
	The computation of one $c_{j,e}$ is obtained from $d$ convolutions of average coefficients at scale $j+1$ with wavelet filters of size $2\delta$. Hence the number of operations to compute one $c_{j,e}$ is $d 2 \delta \# \supp c_{j,e}$ and the total number of operations needed to compute $c$ is
	\begin{equation}
		\begin{split}
			& \sum_{j=0}^{J-1} (2^d-1) 2 d \delta \# \supp c_{j,e} \\
			& \leq 2 d \delta (2^d-1) 2^d \sum_{j=0}^{J-1} \left( 2^{d(j-J)} \kappa^d + (2 \delta)^d \right) \\
			& \leq  2 d \delta (2^d-1) 2^d \left(  \kappa^d \frac{1}{2^d - 1} +  (2 \delta)^d J \right) \\
		\end{split}
	\end{equation}
	which ends the proof.
\end{proof}

We are now ready to bound the complexity of step 1.
\begin{lemma} \label{lem:comp_hatV}
  The computation of the matrix $\widehat V_k$ requires at most 
		\begin{equation*}
			c(d) (2 \delta)^{d+1} N \log_2 N 
		\end{equation*}
		operations. The constant $c(d) = 2^{d} (2^d-1)$.
\end{lemma}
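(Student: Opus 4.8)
The plan is to reduce the computation of $\widehat V_k$ to $N$ independent sparse wavelet transforms, one per column, and to control each of them with Lemma \ref{lem:wavelet_transform}. The key observation is that, since $V_k = \mathrm{diag}(v_k)$, its $i$-th column $V_k[\cdot,i]$ is a scaled discrete Dirac mass: it is supported on the single pixel indexed by $i$. In the language of Lemma \ref{lem:wavelet_transform}, this column is supported on a hypercube of side length $\kappa = 1$, which places us in the cheap branch of the support and operation-count bounds stated there.

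First I would apply Lemma \ref{lem:wavelet_transform} with $\kappa = 1$ to bound the number of operations needed to compute the full wavelet transform of a single column by
\begin{equation*}
  2 d \delta 2^d \left( 1 + (2^d-1)(2\delta)^d J \right).
\end{equation*}
Computing $\widehat V_k$ amounts precisely to transforming each of the (at most) $N$ columns independently, which is exactly Step 1 of the construction, so the total cost is at most $N$ times this per-column quantity.

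The second step is the bookkeeping. I would substitute $J = \frac{1}{d}\log_2 N$; the crucial cancellation is that the factor $d$ in front of the per-column cost cancels the $\frac{1}{d}$ in $J$, turning $2 d \delta \cdot J$ into $2\delta \log_2 N$. The leading contribution then reads
\begin{equation*}
  2^d (2^d-1) (2\delta)^d \cdot 2\delta \cdot N \log_2 N = 2^d(2^d-1)(2\delta)^{d+1} N \log_2 N,
\end{equation*}
which is exactly $c(d)(2\delta)^{d+1} N \log_2 N$ with $c(d) = 2^d(2^d-1)$. The remaining contribution, coming from the ``$1$'' inside the per-column bound, scales only as $N$ with no logarithmic factor and is dominated by the leading term, so it is absorbed into it.

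I do not expect a genuine obstacle here, as the statement is essentially a corollary of Lemma \ref{lem:wavelet_transform}. The only points requiring care are (i) correctly identifying a column of a diagonal matrix with a Dirac of support side length $\kappa = 1$, so that the inexpensive case of the bounds applies, and (ii) the constant bookkeeping, namely verifying that the dominant $N\log_2 N$ term emerges with precisely the constant $c(d)(2\delta)^{d+1}$ and that the non-logarithmic remainder is harmless.
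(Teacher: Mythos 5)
Your proof is correct and follows essentially the same route as the paper, whose entire argument is that the $i$-th column of $V_k$ is supported on the single point $\{\omega^{-1}(i)\}$ and then invokes Lemma \ref{lem:wavelet_transform} together with $J = \frac{1}{d}\log_2 N$. Your additional bookkeeping (taking $\kappa = 1$, checking the cancellation of $d$, and absorbing the non-logarithmic $O(N)$ remainder into the leading term) merely makes explicit what the paper's two-line proof leaves implicit.
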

\begin{proof}
  The $i$-th column of $V$ is supported on $\{\omega^{-1}(i)\}$. Using Lemma \ref{lem:wavelet_transform} together with $J = \frac{1}{d} \log_2 N$ give the result.
\end{proof}

The cost of step 2 is controlled by the following Lemma.
\begin{lemma} \label{lem:cost_B}
  The computation of $B_k$ from $\widehat V_k$ requires at most
  \begin{equation*}
  c(d) (2 \delta)^{d+1} N \log_2 N
  \end{equation*}
  operations. The constant $c(d) = 2^{d+1} (2^d-1)$.
\end{lemma}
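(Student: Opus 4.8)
The plan is to bound the cost of Step 2 by summing, over the $N$ rows of $\widehat{V}_k$, the cost of applying the sparse cascade algorithm to each row, which is precisely the quantity controlled by Lemma \ref{lem:wavelet_transform}. First I would pin down the support of the $\lambda$-th row. Since $V_k = \mathrm{diag}(v_k)$, its $i$-th column is $v_k[i]$ times the Dirac at $i$, so $\widehat{V}_k[\lambda,i] = \langle V_k[\cdot,i],\psi_\lambda\rangle = v_k[i]\,\psi_\lambda(\omega^{-1}(i))$. Viewed as a function on $\Omega$, the $\lambda$-th row is therefore the pointwise product $v_k \odot \psi_\lambda$, whose support is contained in $\supp \psi_\lambda$. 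For $\lambda = (j,e,l)$, reusing the support estimate already derived in the proof of Lemma \ref{lem:num_coeff_row_B}, this is a hypercube of side length at most $(2^{J-j}-1)(2\delta-1) \leq 2^{J-j}2\delta$, which matches the announced count $(2^{J-j}2\delta)^d$ of contiguous nonzero coefficients per row.

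Next I would apply Lemma \ref{lem:wavelet_transform} with $\kappa = 2^{J-j}2\delta$, so that $\kappa^d = 2^{d(J-j)}(2\delta)^d$, bounding the number of operations to transform a single row at scale $j$ by $2d\delta 2^d (2\delta)^d\big(2^{d(J-j)} + (2^d-1)J\big)$. Summing over the $(2^d-1)2^{jd}$ rows at each scale $j$ (the same wavelet counting used in Lemma \ref{lem:num_coeff_row_B}) and then over $0 \le j \le J-1$ gives a total bounded by $2d\delta 2^d (2\delta)^d (2^d-1)\sum_{j=0}^{J-1} 2^{jd}\big(2^{d(J-j)} + (2^d-1)J\big)$. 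The inner sum splits into a flat part $\sum_{j=0}^{J-1} 2^{dJ} = J2^{dJ}$ and a geometric part $(2^d-1)J\sum_{j=0}^{J-1} 2^{jd} = J(2^{dJ}-1)$, so the bracketed factor is at most $2J2^{dJ}$.

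Finally I would substitute $2^{dJ} = N$ and $J = \frac{1}{d}\log_2 N$, hence $J2^{dJ} = \frac{1}{d}N\log_2 N$, and collect constants. The total becomes at most $4\delta 2^d(2\delta)^d(2^d-1)\,N\log_2 N = 2^{d+1}(2^d-1)(2\delta)^{d+1}\,N\log_2 N$, which is exactly $c(d)(2\delta)^{d+1}N\log_2 N$ with $c(d) = 2^{d+1}(2^d-1)$, as claimed.

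The argument is essentially bookkeeping once Lemma \ref{lem:wavelet_transform} is available; the only genuinely substantive point is the sparsity structure of each row, namely that its support equals that of $\psi_\lambda$ and therefore shrinks geometrically as $j$ grows while the number of rows grows like $2^{jd}$. The main obstacle I anticipate is tracking the constants tightly enough that the $2^{d(J-j)}$ contribution and the $J$ contribution balance and collapse to the single clean factor $(2\delta)^{d+1}$: a slightly loose bound on $\supp \psi_\lambda$, or on either geometric sum, would leave a residual factor and spoil the exact matching of $c(d)$, even though the $N\log_2 N$ scaling itself is robust to such slack.
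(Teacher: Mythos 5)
Your proposal is correct and follows essentially the same route as the paper: apply Lemma \ref{lem:wavelet_transform} row by row with $\kappa \approx 2^{J-j}\,2\delta$, sum $(2^d-1)2^{jd}$ rows over the scales $j$, and collect constants via $J2^{dJ} = \frac{1}{d}N\log_2 N$ to land exactly on $c(d)(2\delta)^{d+1}N\log_2 N$. The only (harmless) difference is cosmetic: you identify the $\lambda$-th row of $\widehat V_k$ directly as $v_k \odot \psi_\lambda$, hence supported in $\supp\psi_\lambda$, whereas the paper obtains the same support interval by inverting the intersection condition from Lemma \ref{lem:wavelet_transform}.
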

\begin{proof}
	To bound the number of operations needed to compute $B$ from $\widehat V$, we need to know the support of each row of $\widehat V$.
	Let $\lambda = (j,e,l)$. From Lemma \ref{lem:wavelet_transform}, the necessary condition for $\widehat V [\lambda,i]$ to be non zero is
	\begin{equation*}
		l \in \left[ 2^{j-J} \omega^{-1}(i) - (1-2^{-J}), 2^{j-J} \omega^{-1}(i) + (1-2^{-J})2(\delta-1) \right]
	\end{equation*}
	Equivalently the condition becomes
	\begin{equation*}
		\omega^{-1}(i) \in \left[ 2^{J-j} l - (2^{J-j}-1) 2 (\delta-1), 2^{J-j} l + (2^{J-j} - 1) \right].
	\end{equation*}
	Therefore $\supp \widehat V[\lambda, \cdot] = \left[ 2^{J-j} l - (2^{J-j}-1) 2 (\delta-1), 2^{J-j} l + (2^{J-j} - 1) \right]$.
	Using again Lemma \ref{lem:wavelet_transform} we conclude that the number of operations required to compute $B$ from $\widehat V$ is bounded above by
	\begin{equation*}
		\begin{split}
			& \sum_{j=0}^{J-1} (2^d-1) 2^{jd} 2 d \delta 2^d \left(  ( (2^{J-j} - 1) (2\delta-1) + 1)^d +  (2^d-1) (2 \delta)^d J \right) \\
			& \leq (2^d-1) 2 d \delta 2^d \sum_{j=0}^{J-1} 2^{jd} \left( 2^{d(J-j)} (2\delta)^d  + (2^d-1) (2 \delta)^d J \right) \\
			& \leq (2^d-1) 2 d \delta 2^d \left( (2\delta)^d J 2^{dJ} + (2 \delta)^d J 2^{dJ} \right) \\
			& \leq 2 2^d (2^d-1) (2 \delta)^{d+1} N \log_2 N \\
		\end{split}
	\end{equation*}
	where we used $J = \frac{1}{d} \log_2 N$ to conclude the proof.
  \end{proof}
  
Overall, the computation of one matrix $B_k$ can be achieved in $O\left(\delta^{d+1} N \log_2 N \right)$ operations.
As seen in this two-step algorithm, the crucial ingredient of its efficiency is the sparse cascade algorithm. 
Its implementation is very similar to the standard algorithm but loops on the non-zero elements of the input signal in $\mathcal{E}$. 
Furthermore, the indexes of each non-zero coefficient at the next scale have to be tracked and reused at the next filtering stage.

\subsection{Products \texorpdfstring{$\widetilde{A}_k B_k$}{AB}}

Two questions are important while investigating the product $\widetilde{A}_k B_k$. First, how many operations are needed for its computation and how many non-zero coefficients are in the product matrix. 
These questions are answered in the following Lemma.

\begin{lemma} \label{cor:cost_AB} \sloppy
	The number of operations required to compute $\widetilde{A}_k B_k$ is bounded above by $c(d) \delta^d N \log_2 N \epsilon_k^{-d/\alpha}$.
	Furthermore, the number of non-zero coefficients is bounded above by the same quantity.
	The constant $c(d) = \frac{2}{d} (2^d - 1) 2^d$. 
\end{lemma}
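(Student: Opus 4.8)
The plan is to bound the cost of the sparse--sparse product by summing over the contraction index, and then to observe that the number of coefficients produced can never exceed the number of arithmetic operations performed. First I would invoke Theorem~\ref{thm:cohen}: since $A_k \in \mathcal{A}_\alpha$, the thresholded approximation $\widetilde A_k$ achieving $\|A_k - \widetilde A_k\|_{2\to 2} \leq \epsilon_k$ may be taken to have at most $L \lesssim \epsilon_k^{-d/\alpha}$ nonzero entries per row and per column. This is the only place the accuracy $\epsilon_k$ enters, and it is what converts the abstract decay of $\mathcal{A}_\alpha$ into an explicit column sparsity.

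Next I would count operations. Forming $C_k = \widetilde A_k B_k$ through $C_k[\lambda,\nu] = \sum_{\mu} \widetilde A_k[\lambda,\mu]\, B_k[\mu,\nu]$ by looping over the shared index $\mu$ costs
\begin{equation*}
	\sum_{\mu \in \Lambda} \#\{\lambda : \widetilde A_k[\lambda,\mu] \neq 0\}\cdot \#\{\nu : B_k[\mu,\nu] \neq 0\}
\end{equation*}
multiply--add operations. Bounding the first factor uniformly by $L$ and pulling it out of the sum leaves $L\sum_{\mu}\#\{\nu : B_k[\mu,\nu]\neq 0\} = L\cdot \#\supp B_k$. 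Because each $V_k$ is diagonal we have $\supp B_k \subseteq \mathcal{I}$, so $\#\supp B_k \leq \#\mathcal{I}$, and Lemma~\ref{lem:num_intersections} gives $\#\mathcal{I} \leq \tfrac{2(2^d-1)2^d}{d}\,\delta^d N \log_2 N$. Combining the two estimates bounds the operation count by $\tfrac{2}{d}(2^d-1)2^d\,\delta^d N \log_2 N\, \epsilon_k^{-d/\alpha}$, which is exactly the claimed quantity with $c(d) = \tfrac{2}{d}(2^d-1)2^d$. For the second assertion, every nonzero entry of $C_k$ is written by at least one of the multiply--add operations tallied above, so $\#\supp C_k$ is dominated by the same count and inherits the identical bound.

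The delicate point is not any single estimate but guaranteeing that the product stays \emph{quasi-linear} rather than quasi-quadratic: this rests entirely on $\#\supp B_k = O(\delta^d N \log_2 N)$ from Lemma~\ref{lem:num_intersections}, which is what makes factoring out the uniform column bound $L$ non-wasteful. The secondary obstacle is purely bookkeeping of constants, namely checking that the implicit constant in $L \lesssim \epsilon_k^{-d/\alpha}$ can be absorbed by choosing the threshold so that the per-column sparsity is literally $L = \epsilon_k^{-d/\alpha}$, so that this factor passes through to $c(d)$ without inflation. I would also verify that replacing the scale-dependent column counts of $\widetilde A_k$ by their uniform maximum $L$ loses nothing essential here, since the $\mu$-sum collapses precisely to $\#\supp B_k$, whose size is already captured tightly.
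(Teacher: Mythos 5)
Your proof is correct and follows essentially the same route as the paper's: both rest on the Gustavson-type operation count $\sum_{\mu} \alpha_\mu \beta_\mu$ over the contraction index, the per-column sparsity $L \sim \epsilon_k^{-d/\alpha}$ of $\widetilde{A}_k$ from Theorem~\ref{thm:cohen}, the row sparsity of $B_k$ coming from support intersections, and the observation that every nonzero of the product is touched by at least one multiply--add. The only cosmetic difference is that you factor $L$ out of the sum and invoke Lemma~\ref{lem:num_intersections} to bound $\#\supp B_k \leq \#\mathcal{I}$, whereas the paper keeps $L$ inside and recomputes the same scale-by-scale sum inline (which is exactly the computation proving Lemma~\ref{lem:num_intersections}); the constants and conclusion agree.
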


The product $\widetilde{A}_k B_k$ could be implemented by building the matrix $\widetilde A_k$ from the set of circulant vectors and then rely on a standard implementation of sparse matrix-matrix products. 
However, the building step could be memory-intensive and long time-wise.
This step is avoided by taking advantage of the sub-band circulant structure of $\widetilde{A}_k$: locations of non-zero coefficients can be predicted from the elements of the circulant vectors.

Before proving Lemma \ref{cor:cost_AB}, we present a useful Lemma.

\begin{lemma}[{\cite{gustavson1978two,yuster2005fast}}] \label{lem:complexity_product_sparse}
	Let $A$ and $B$ two $N \times N$ matrices.
	Let $\alpha_i$ be the number of non-zero entries in the $i$-th column of $A$ and $\beta_i$ be the number of non-zeros entries in the $i$-th row of B.
	Then, the number of operations required to perform $AB$ is only $\sum_{i=1}^N \alpha_i \beta_i$.
	Furthermore, the number of non-zero coefficients in $AB$ is bounded above by the same quantity.
\end{lemma}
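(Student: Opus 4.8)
The plan is to exploit the \emph{outer-product} (column-times-row) expansion of the matrix product. Writing $A[:,i]$ for the $i$-th column of $A$ and $B[i,:]$ for the $i$-th row of $B$, one has the elementary identity
\begin{equation*}
	AB = \sum_{i=1}^N A[:,i]\, B[i,:],
\end{equation*}
where each summand is a rank-one matrix. First I would observe that the outer product $A[:,i]\,B[i,:]$ has a non-zero entry at position $(r,c)$ precisely when both $A[r,i] \neq 0$ and $B[i,c] \neq 0$; its support is therefore the Cartesian product of $\supp A[:,i]$ with $\supp B[i,:]$, which contains exactly $\alpha_i \beta_i$ entries. Forming these entries requires $\alpha_i \beta_i$ scalar multiplications, and summing over $i$ yields exactly $\sum_{i=1}^N \alpha_i \beta_i$ multiplications overall.

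Next I would account for the additions incurred when accumulating the rank-one terms into the result. The point is that each of the $\sum_i \alpha_i \beta_i$ scalar products is written into the accumulator exactly once, so the number of additions is also bounded by $\sum_i \alpha_i \beta_i$. To ensure that the bookkeeping — locating the correct accumulator entry for each product — does not inflate the running time, I would invoke Gustavson's sparse accumulator: a dense work array of length $N$ paired with an occupancy list recording which positions have been touched, which makes each read and insertion an $O(1)$ operation. With this device the total operation count matches the arithmetic count $\sum_i \alpha_i \beta_i$, with no logarithmic overhead.

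Finally, for the bound on the number of non-zero coefficients of $AB$, I would note that the support of $AB$ is contained in the union $\bigcup_{i=1}^N \bigl(\supp A[:,i] \times \supp B[i,:]\bigr)$, since $(AB)[r,c] = \sum_i A[r,i]\,B[i,c]$ can be non-zero only if at least one summand is non-zero. Counting this union \emph{with multiplicity} gives $\sum_i \alpha_i \beta_i$, so the number of \emph{distinct} positions, hence the number of non-zero coefficients of $AB$, is at most that quantity; cancellations can only lower the count, which is harmless for an upper bound.

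I expect the only delicate point to be the claim that the operation count is \emph{exactly} $\sum_i \alpha_i \beta_i$ rather than this quantity up to a logarithmic factor: a naive accumulation based on sorted lists or balanced search trees would incur such a factor, and it is precisely the sparse-accumulator data structure that removes it. The arithmetic part of the statement, by contrast, is an immediate consequence of the outer-product expansion.
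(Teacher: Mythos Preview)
Your proof is correct and follows exactly the classical outer-product (column--row) approach underlying Gustavson's algorithm. Note that the paper does not actually supply its own proof of this lemma: it is stated with references to \cite{gustavson1978two,yuster2005fast} and used as a black box, so there is nothing to compare against beyond the fact that your argument is the standard one those references contain.
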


We are now ready to prove Lemma \ref{cor:cost_AB}.
\begin{proof}[Proof of Lemma \ref{cor:cost_AB}]
	From Theorem \ref{thm:cohen} we know that the number of non-zero coefficients in the $\lambda$-th column of A is bounded by 
	\begin{equation*}
		\alpha_\lambda = L.
	\end{equation*}
	Similarly, Lemma \ref{lem:num_coeff_row_B} shows that the number of coefficients in the $\lambda$-th row of B is bounded by 
	\begin{equation*}
		\beta_\lambda = 2^d \delta^d \left( (2^d-1) j + 2^{d(J-j)}\right).
	\end{equation*}
	Therefore the total number of operations is bounded above by:
	\begin{equation*}
		\begin{split}
		\sum_{\lambda \in \Lambda} \alpha_\lambda \beta_\lambda & \leq \sum_{j=0}^{J-1} (2^d - 1) 2^{jd} L 2^d \delta^d \left( (2^d-1) j + 2^{d(J-j)}\right) \\
		& = (2^d - 1) 2^d \delta^d L \left( (2^d-1) \sum_{j=0}^{J-1} j 2^{jd} +  \sum_{j=0}^{J-1} 2^{dJ} \right) \\
		& \leq 2 (2^d - 1) 2^d \delta^d L J 2^{dJ} \\
		& \leq \frac{2}{d} (2^d - 1) 2^d \delta^d L N \log_2 N.
		\end{split} 
	\end{equation*}	
	Furthermore, $L$ is of the order $\epsilon_k^{-d/\alpha}$ to reach an accuracy $\epsilon_k$ on the approximation of $A_k$. Therefore the total number of operations is bounded above by $c(d) \delta^d N \log_2 N \epsilon_k^{-d/\alpha}$.
\end{proof}

\subsection{Accumulation of \texorpdfstring{$\widetilde{A}_k B_k$}{AB}}

The last step of the algorithm consists in summing all the products $C_k = \widetilde{A}_k B_k$ into $\widetilde \Theta$.
The following results show that the number of coefficients in $\widetilde \Theta$ does not explode while adding the $C_k$ and that the returned $\widetilde \Theta_k$ reaches the prescribed accuracy.

\begin{lemma} \label{lem:support_C}
	There exists $\mathcal{S} \subset \Lambda^2$ s.t. $\supp \widetilde{A}_k B_k \subset \mathcal{S}$ for all $1 \leq k \leq m$.
	Moreover, $\# \mathcal{S} = O\left(c(d) \delta^d N \log_2 N (\min_k  \epsilon_k)^{-d/\alpha} \right)$.
\end{lemma}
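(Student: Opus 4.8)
The plan is to exhibit a single index set $\mathcal{S}$ that simultaneously contains the support of every product $\widetilde{A}_k B_k$, and then to control its cardinality by re-running the counting argument of Lemma \ref{cor:cost_AB} at the finest precision $\min_k \epsilon_k$. The key point is that the two ingredients of the product, the thresholded convolution matrices and the multiplier matrices, each admit a \emph{kernel-independent} support envelope.

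First I would build a universal support for the thresholded convolution matrices. Recall from the remark following Theorem \ref{thm:cohen} that the thresholded matrix discards every entry $(\lambda,\nu)$ with $\left| |\lambda|-|\nu| \right| > J(L)$ or $\vartheta(\lambda,\nu) > 2^{J(L)}$, where $L \lesssim J(L)2^{dJ(L)}$, and that the entries retained by magnitude thresholding form a subset of this set. The resulting "keep set"
\begin{equation*}
	\mathcal{A}_L = \left\{ (\lambda,\nu) \in \Lambda^2 \ \middle|\ \left| |\lambda|-|\nu| \right| \leq J(L),\ \vartheta(\lambda,\nu) \leq 2^{J(L)} \right\}
\end{equation*}
depends only on the indices and on $L$, not on the particular kernel $u_k$, and it is nondecreasing in $L$. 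Since every $A_k$ lies in $\mathcal{A}_\alpha$ and the sparsity level obeys $L \sim \epsilon^{-d/\alpha}$, choosing $L = (\min_k \epsilon_k)^{-d/\alpha}$ yields the largest envelope, so $\supp \widetilde{A}_k \subseteq \mathcal{A}_L$ for every $k$. We already established $\supp B_k \subseteq \mathcal{I}$ for every $k$.

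Then I would set
\begin{equation*}
	\mathcal{S} = \left\{ (\lambda,\mu) \in \Lambda^2 \ \middle|\ \exists\, \nu \in \Lambda,\ (\lambda,\nu) \in \mathcal{A}_L \text{ and } (\nu,\mu) \in \mathcal{I} \right\}.
\end{equation*}
By definition of the matrix product, $(\widetilde{A}_k B_k)[\lambda,\mu] \neq 0$ forces some $\nu$ with $\widetilde{A}_k[\lambda,\nu] \neq 0$ and $B_k[\nu,\mu] \neq 0$, hence $(\lambda,\nu)\in\mathcal{A}_L$ and $(\nu,\mu)\in\mathcal{I}$; therefore $\supp \widetilde{A}_k B_k \subseteq \mathcal{S}$ for all $k$. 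Since $\mathcal{S}$ is exactly the sparsity pattern of the product of the two pattern matrices attached to $\mathcal{A}_L$ and $\mathcal{I}$, Lemma \ref{lem:complexity_product_sparse} bounds $\#\mathcal{S}$ by $\sum_\lambda \alpha_\lambda \beta_\lambda$, with $\alpha_\lambda \leq L$ (Theorem \ref{thm:cohen}) and $\beta_\lambda \leq 2^d\delta^d((2^d-1)j + 2^{d(J-j)})$ (Lemma \ref{lem:num_coeff_row_B}). This is precisely the sum estimated in the proof of Lemma \ref{cor:cost_AB}, giving $\#\mathcal{S} \leq c(d)\delta^d N \log_2 N\, (\min_k \epsilon_k)^{-d/\alpha}$, as claimed.

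The main obstacle is the first step: justifying that one index set contains the magnitude-thresholded support of \emph{every} $\widetilde{A}_k$. The subtlety is that magnitude thresholding retains different entries for different kernels, so a naive union of individual supports would not obviously be controlled. The resolution is to pass through the structural keep set $\mathcal{A}_L$, which is kernel-independent and monotone in $L$; the only thing to verify with care is that the precision-to-sparsity relation is monotone, so that $\min_k \epsilon_k$ indeed produces the enveloping set. Once this universality is secured, the cardinality bound is a direct repetition of the estimate already carried out for Lemma \ref{cor:cost_AB}.
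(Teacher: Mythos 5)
Your proof is correct and takes essentially the same route as the paper: a common support envelope for all the thresholded matrices determined by the smallest precision $\min_k \epsilon_k$, the set $\mathcal{I}$ for the $B_k$, the support of the product of the two pattern matrices as the definition of $\mathcal{S}$, and the cardinality count recycled from Lemma \ref{cor:cost_AB} via Lemma \ref{lem:complexity_product_sparse}. If anything, your detour through the kernel-independent structural keep set $\mathcal{A}_L$ is a more careful justification of the step the paper states tersely as $\supp \widetilde{A}_k \subseteq \supp \widetilde{A}_{k_0}$, which would not be obvious under pure magnitude thresholding.
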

\begin{proof}
	Let $k_0$ be the index such that $\epsilon_{k_0} = \min_k \epsilon_k$.
	Since $A_{k_0}$ belongs to the smoothness class $\mathcal{A}_{\alpha}$, there exists a $\mathcal{F} \subset \Lambda^2$ s.t. $\supp \widetilde{A}_{k_0} \subseteq \mathcal{F}$. 
	Furthermore all $A_k$ belong to the same $\mathcal{A}_{\alpha}$ we have that $\supp \widetilde A_k \subseteq \supp \widetilde A_{k_0}$.
	
	On the other hand, since all $V_k$ are diagonal matrices $\supp B_k \subseteq \mathcal{I}$ for all $k$. 
	
	Define $I$ (resp. $F$) a $N \times N$ matrix with ones on $\mathcal{I}$ (resp. $\mathcal{F}$) and zeros outside. Let $\mathcal{S} = \supp FI$.
	Then obviously $\supp \widetilde A_k B_k \subseteq \mathcal{S}$. Moreover, from Lemma \ref{cor:cost_AB}, the number of non-zero coefficients in $FI$ is bounded above by $c(d) \delta^d N \log_2 N \epsilon_{k_0}^{-d/\alpha}$ which concludes the proof.
\end{proof}

\subsection{Gathering everything}

We will now gather all the ingredients to prove Theorem \ref{thm:overall_complexity}. The first part of the proof certifies the accuracy of the approximation.

\begin{lemma} \label{lem:accuracy_theta}
	The matrix $\widetilde{\Theta}$ returned by Algorithm \ref{alg:decomp} satisfies $\| \Theta - \widetilde{\Theta} \|_{2 \to 2} \leq \eta$
\end{lemma}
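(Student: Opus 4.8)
The plan is to reduce the global error to a sum of per-term errors that are controlled by the thresholding tolerances $\epsilon_k$, exploiting the linearity of the wavelet representation together with the unitarity of the orthogonal wavelet transform. First I would record the exact algebraic decomposition of $\Theta$. By definition \eqref{eq:pc}, the operator splits as $H = \sum_{k=1}^m U_k V_k$, and since the wavelet basis is orthogonal we have $\Psi^* = \Psi^{-1}$, so that $\Psi \Psi^* = \mathrm{Id}$. Inserting this identity between the two factors $U_k$ and $V_k$ yields
\begin{equation*}
	\Theta = \Psi^* H \Psi = \sum_{k=1}^m \left( \Psi^* U_k \Psi \right) \left( \Psi^* V_k \Psi \right) = \sum_{k=1}^m A_k B_k,
\end{equation*}
which is an \emph{exact} factorization. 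Since Algorithm \ref{alg:decomp} computes each $B_k$ exactly and accumulates $C_k = \widetilde{A}_k B_k$, the returned matrix is $\widetilde{\Theta} = \sum_{k=1}^m \widetilde{A}_k B_k$, and therefore the error decomposes as $\Theta - \widetilde{\Theta} = \sum_{k=1}^m (A_k - \widetilde{A}_k) B_k$.

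Next I would bound this error term by term. Applying the triangle inequality and the sub-multiplicativity of the spectral norm gives
\begin{equation*}
	\| \Theta - \widetilde{\Theta} \|_{2\to 2} \leq \sum_{k=1}^m \| A_k - \widetilde{A}_k \|_{2\to 2} \, \| B_k \|_{2\to 2}.
\end{equation*}
The first factor is controlled by the thresholding step: by construction $\widetilde{A}_k$ is an $\epsilon_k$-approximation of $A_k$, i.e. $\| A_k - \widetilde{A}_k \|_{2\to 2} \leq \epsilon_k$, which is legitimate because Theorem \ref{thm:cohen} quantifies the approximation precisely in the operator norm. For the second factor I would use the unitarity of $\Psi$ once more: since $B_k = \Psi^* V_k \Psi$ is a unitary conjugation of $V_k = \mathrm{diag}(v_k)$, its spectral norm is unchanged, and the spectral norm of a diagonal matrix equals the largest modulus of its entries, so $\| B_k \|_{2\to 2} = \| V_k \|_{2\to 2} = \| v_k \|_\infty$.

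Finally, substituting the calibrated tolerance $\epsilon_k = \eta / (m \| v_k \|_\infty)$ fixed in the first line of Algorithm \ref{alg:decomp} collapses the bound, since each product $\epsilon_k \| v_k \|_\infty = \eta/m$ is independent of $k$:
\begin{equation*}
	\| \Theta - \widetilde{\Theta} \|_{2\to 2} \leq \sum_{k=1}^m \epsilon_k \, \| v_k \|_\infty = \sum_{k=1}^m \frac{\eta}{m} = \eta,
\end{equation*}
which is exactly the claim. This argument is short and I do not anticipate a genuine technical obstacle; the only conceptual points that must be argued carefully are the two invocations of unitarity, namely that $\Psi\Psi^* = \mathrm{Id}$ produces the clean factorization $\Theta = \sum_k A_k B_k$ and that conjugation by $\Psi$ preserves the spectral norm so that $\|B_k\|_{2\to2} = \|v_k\|_\infty$. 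These two facts are precisely what justify the choice of the per-kernel precision $\epsilon_k$, and they rely on the orthogonality of the wavelet basis assumed throughout.
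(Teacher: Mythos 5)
Your proof is correct and follows essentially the same route as the paper's: triangle inequality plus sub-multiplicativity of the spectral norm, the bound $\| A_k - \widetilde{A}_k \|_{2\to 2} \leq \epsilon_k$ from the thresholding step, the identity $\| B_k \|_{2\to 2} = \| V_k \|_{2\to 2} = \| v_k \|_{\infty}$ by orthogonality of $\Psi$, and the calibration $\epsilon_k = \eta/(m \| v_k \|_{\infty})$. The only difference is that you spell out the exact factorization $\Theta = \sum_{k} A_k B_k$ via $\Psi \Psi^* = \mathrm{Id}$, which the paper leaves implicit in its setup.
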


\begin{proof}
	Using triangular inequality and since $\| \cdot\|_{2\to2}$ is multiplicative we get
	\begin{equation*}
		\begin{split}
		\| \Theta - \widetilde{\Theta} \|_{2 \to 2} & \leq \sum_{k=1}^m \| A_k B_k -\widetilde{A}_k B_k \|_{2\to 2} \\
		& \leq \sum_{k=1}^m \| B_k\|_{2\to 2} \| A_k -\widetilde{A}_k \|_{2\to 2} \leq \sum_{k=1}^m \| V_k\|_{2\to 2} \epsilon_k \\
		& = \sum_{k=1}^m \| v_k \|_{\infty} \epsilon_k = \eta.\\
		\end{split}
	\end{equation*}
	We used the fact that $\| B_k \|_{2 \to 2} = \| V_k\|_{2 \to 2} = \| v_k \|_{\infty}$ since $\Psi$ is an orthogonal wavelet basis and $V_k$ is a diagonal matrix. The last equality is derived from the definition of $\epsilon_k = \frac{\eta}{m \| v_k\|_{\infty}}$.
\end{proof}

The second part deals with the overall complexity.
\begin{proof}[Proof of Theorem \ref{thm:overall_complexity}]
	The overall complexity of computing $\widetilde \Theta$ is bounded above by a quantity proportional to
	\begin{equation*}
		\begin{split}
		& c(d) \left( \sum_{k=1}^m \underbrace{N \log^2_2 N}_{\text{Lemma }\ref{lem:struct_A}} + N \log_2 N + \underbrace{\delta^{d+1}  N \log_2^2 N}_{\text{Lemma } \ref{lem:cost_B}} + 2 \underbrace{\delta^d N \log_2 N \epsilon_k^{-d/\alpha}}_{\text{Lemma } \ref{cor:cost_AB}} \right)\\
		& \leq c(d) \left(3 \delta^{d+1} m N \log^2_2 N + 2 N \log_2 N \sum_{k=1}^m \left( \frac{\eta}{m \| v_k\|_{\infty}} \right)^{-d/\alpha} \right)\\
		& \leq c(d) \left(\delta^{d+1} m N \log^2_2 N + N \log_2 N \eta^{-d/\alpha} m^{d/\alpha} \left( \sum_{k=1}^m \| v_k\|_{\infty}^{d/\alpha} \right) \right)\\
		\end{split}
	\end{equation*}
	The constant $c(d)$ depends only on $d$. The term $\left( \sum_{k=1}^m \| v_k\|_{\infty}^{d/\alpha} \right)$ depends on the considered operator $H$. 
	Finally, $\widetilde \Theta$ is computed in $O\left(\delta^{d+1} m^{\max(1,d/\alpha)} N \log^2_2 N \eta^{-d/\alpha} \right)$.

	The bound on the number of coefficients in $\widetilde \Theta$ is obtained using Lemma \ref{lem:support_C}. Since there exists a support $\mathcal{S}$ such that all $\supp C_k \subseteq \mathcal{S}$ we get that $\supp \widetilde \Theta \subseteq \mathcal{S}$.
\end{proof}

\begin{remark}
	The complexity analysis does not take into account that $v_k$ may have some smoothness properties. In this case, the associated $B_k$ might belong to a class $\mathcal{A}_{\alpha'}$ and could therefore be thresholded to further speed up the algorithm.
\end{remark}

\bibliographystyle{abbrv}
\bibliography{biblio}

\begin{thebibliography}{10}

\bibitem{alger2019scalable}
N.~Alger, V.~Rao, A.~Myers, T.~Bui-Thanh, and O.~Ghattas.
\newblock Scalable matrix-free adaptive product-convolution approximation for
  locally translation-invariant operators.
\newblock {\em SIAM Journal on Scientific Computing}, 41(4):A2296--A2328, 2019.

\bibitem{beck2009fast}
A.~Beck and M.~Teboulle.
\newblock A fast iterative shrinkage-thresholding algorithm for linear inverse
  problems.
\newblock {\em SIAM Journal on Imaging Sciences}, 2(1):183--202, 2009.

\bibitem{beylkin1991fast}
G.~Beylkin, R.~Coifman, and V.~Rokhlin.
\newblock Fast wavelet transforms and numerical algorithms {I}.
\newblock {\em Commun. on Pure Appl. Math.}, 44(2):141--183, 1991.

\bibitem{bigot2017estimation}
J.~Bigot, P.~Escande, and P.~Weiss.
\newblock Estimation of linear operators from scattered impulse responses.
\newblock {\em Applied and Computational Harmonic Analysis}, 47(3):730 -- 758,
  2019.

\bibitem{busby1981product}
R.~Busby and H.~Smith.
\newblock Product-convolution operators and mixed-norm spaces.
\newblock {\em Transactions of the American Mathematical Society},
  263(2):309--341, 1981.

\bibitem{candes2003curvelets}
E.~Candes and L.~Demanet.
\newblock Curvelets and fourier integral operators.
\newblock {\em Comptes Rendus Mathematique}, 336(5):395--398, 2003.

\bibitem{cohen2003numerical}
A.~Cohen.
\newblock {\em Numerical Analysis of Wavelet Methods}, volume~32.
\newblock Elsevier, 2003.

\bibitem{cohen1993wavelets}
A.~Cohen, I.~Daubechies, and P.~Vial.
\newblock Wavelets on the interval and fast wavelet transforms.
\newblock {\em Applied and Computational Harmonic Analysis}, 1(1):54--81, 1993.

\bibitem{daubechies_ten_1992}
I.~Daubechies.
\newblock {\em Ten Lectures on Wavelets}.
\newblock {SIAM}, June 1992.

\bibitem{denis2015fast}
L.~Denis, E.~Thi{\'e}baut, F.~Soulez, J.-M. Becker, and R.~Mourya.
\newblock Fast approximations of shift-variant blur.
\newblock {\em International Journal of Computer Vision}, 115(3):253--278,
  2015.

\bibitem{escande2015sparse}
P.~Escande and P.~Weiss.
\newblock Sparse wavelet representations of spatially varying blurring
  operators.
\newblock {\em SIAM Journal on Imaging Sciences}, 8(4):2976--3014, 2015.

\bibitem{escande2016approximation}
P.~Escande and P.~Weiss.
\newblock Approximation of integral operators using product-convolution
  expansions.
\newblock {\em Journal of Mathematical Imaging and Vision}, 58(3):333--348,
  2017.

\bibitem{escande2018accelerating}
P.~Escande and P.~Weiss.
\newblock Accelerating $\ell_1$-$\ell_2$ deblurring using wavelet expansions of
  operators.
\newblock {\em Journal of Computational and Applied Mathematics}, 343:373--396,
  2018.

\bibitem{fan2019bcr}
Y.~Fan, C.~O. Bohorquez, and L.~Ying.
\newblock Bcr-net: A neural network based on the nonstandard wavelet form.
\newblock {\em Journal of Computational Physics}, 384:1--15, 2019.

\bibitem{flicker2005anisoplanatic}
R.~C. Flicker and F.~J. Rigaut.
\newblock Anisoplanatic deconvolution of adaptive optics images.
\newblock {\em Journal of the Optical Society of America A}, 22(3):504--513,
  2005.

\bibitem{gentile2013interpolating}
M.~Gentile, F.~Courbin, and G.~Meylan.
\newblock Interpolating point spread function anisotropy.
\newblock {\em Astronomy \& Astrophysics}, 549:A1, 2013.

\bibitem{gilad2006fast}
E.~Gilad and J.~Von~Hardenberg.
\newblock A fast algorithm for convolution integrals with space and time
  variant kernels.
\newblock {\em Journal of Computational Physics}, 216(1):326--336, 2006.

\bibitem{glowinski2008lectures}
R.~Glowinski.
\newblock {\em Lectures on Numerical Methods for Non-linear Variational
  Problems}.
\newblock Springer Science \& Business Media, 2008.

\bibitem{gustavson1978two}
F.~G. Gustavson.
\newblock Two fast algorithms for sparse matrices: Multiplication and permuted
  transposition.
\newblock {\em ACM Transactions on Mathematical Software}, 4(3):250--269, 1978.

\bibitem{hackbusch2013multi}
W.~Hackbusch.
\newblock {\em Multi-grid methods and applications}, volume~4.
\newblock Springer Science \& Business Media, 2013.

\bibitem{hager1989updating}
W.~W. Hager.
\newblock Updating the inverse of a matrix.
\newblock {\em SIAM Review}, 31(2):221--239, 1989.

\bibitem{halko2011finding}
N.~Halko, P.-G. Martinsson, and J.~A. Tropp.
\newblock Finding structure with randomness: Probabilistic algorithms for
  constructing approximate matrix decompositions.
\newblock {\em SIAM Review}, 53(2):217--288, 2011.

\bibitem{Mallat-Book}
S.~Mallat.
\newblock {\em A Wavelet Tour of Signal Processing -- The Sparse Way}.
\newblock Third Edition. Academic Press, 2008.

\bibitem{meyer1997wavelets}
Y.~Meyer and R.~Coifman.
\newblock {\em Wavelets, {C}alder{\'o}n-{Z}ygmund and Multilinear Operators}.
\newblock 1997.

\bibitem{nagy1998restoring}
J.~G. Nagy and D.~P. O'Leary.
\newblock Restoring images degraded by spatially variant blur.
\newblock {\em SIAM Journal on Scientific Computing}, 19(4):1063--1082, 1998.

\bibitem{nesterov2013gradient}
Y.~Nesterov.
\newblock Gradient methods for minimizing composite functions.
\newblock {\em Mathematical Programming}, 140(1):125--161, 2013.

\bibitem{o2017total}
D.~O'Connor and L.~Vandenberghe.
\newblock Total variation image deblurring with space-varying kernel.
\newblock {\em Computational Optimization and Applications}, 67(3):521--541,
  2017.

\bibitem{schneider2010wavelet}
K.~Schneider and O.~V. Vasilyev.
\newblock Wavelet methods in computational fluid dynamics.
\newblock {\em Annual Review of Fluid Mechanics}, 42:473--503, 2010.

\bibitem{simpkins2014parameterized}
J.~D. Simpkins and R.~L. Stevenson.
\newblock Parameterized modeling of spatially varying optical blur.
\newblock {\em Journal of Electronic Imaging}, 23(1):013005, 2014.

\bibitem{yuster2005fast}
R.~Yuster and U.~Zwick.
\newblock Fast sparse matrix multiplication.
\newblock {\em ACM Transactions On Algorithms}, 1(1):2--13, 2005.

\end{thebibliography}

\appendix

\section{Deblurring algorithms} \label{sec:deblurring_algos}


We briefly describe the deblurring algorithms used in the numerical experiments.

\subsection{FISTA}

Let $\Psi:\R^N\to \R^N$ denote an orthogonal wavelet transform and $H:\R^N\to \R^N$ denote a linear operator. 
We then have
\begin{equation} \label{eq:optim_deblurring_generic}
    \argmin_{f \in \R^N} \frac{1}{2} \left\| H f - f_0 \right\|_2^2 + \left\| \Psi^* f \right\|_{1,w} = \Psi \left(\argmin_{z \in \R^N} \frac{1}{2} \left\| H \Psi z - f_0 \right\|_2^2 + \left\| z \right\|_{1,w}\right).
\end{equation}
The Forward-Backward algorithm applied to the right hand-side of \eqref{eq:optim_deblurring_generic} starts with an initial guess $z^{(0)} = y^{(1)} \in \R^N$ and iterates for $i \geq 1$ as
\begin{equation*}
    \left\{
    \begin{aligned}
        z^{(i)} &=  \textrm{Prox}_{\| \cdot\|_{1,w}} \left( y^{(i)}  - \tau \Psi^* H^*(H \Psi y^{(i)} - f_0)\right) \\
        y^{(i+1)} &= z^{(i)} + \frac{i-1}{i+2} \left( z^{(i)} - z^{(i-1)}\right)
    \end{aligned}
    \right.
\end{equation*}    
with a step-size $\tau \leq \| H \|_{2 \to 2}^{-2}$. This algorithm can be applied verbatim with the exact spatial operator (coined as FISTA-Spa). In the same fashion, we design another algorithm, named FISTA-PC, using the product-convolution expansion.

One can also use FISTA to solve \eqref{eq:optim_deblurring_wavelet}, the approximation of \eqref{eq:optim_deblurring} in the wavelet domain. It starts by setting $z^{(0)} = y^{(1)} \in \R^N$ and iterates for $i \geq 1$ as
\begin{equation*}
    \left\{
    \begin{aligned}
        z^{(i)} &=  \textrm{Prox}_{\| \cdot\|_{1,w}}^\Sigma \left( y^{(i)}  - \tau \Sigma \Theta_L^*(\Theta_L y^{(i)} - z_0) \right) \\
        y^{(i+1)} &= z^{(i)} + \frac{i-1}{i+2} \left( z^{(i)} - z^{(i-1)}\right)
    \end{aligned}
    \right.
\end{equation*}    
where $\Sigma$ is a diagonal preconditioner equal to the identity or to the Jacobi preconditioner $\mathrm{diag}(\Theta_L^* \Theta_L)$. Note that in this alternative, one iteration costs only 2 matrix-vector products with the sparse matrix $\Theta_L$. We will refer to this algorithm as FISTA-W when $\Sigma=\mathrm{I}_N$ and FISTA-WP when $\Sigma=\mathrm{diag}(\Theta_L^* \Theta_L)$ (see \cite{escande2018accelerating} for more details).

\subsection{ADMM for product-convolution}
\label{sec:admm}

The resolution of \eqref{eq:optim_deblurring} using an ADMM or a Douglas-Rachford is expensive due to the resolution of non trivial linear systems at each iteration. We can adapt ideas proposed in \cite{o2017total} and exploit the product-convolution structure to accelerate the resolution of the linear systems and improve its efficiency. We describe this idea below.

A product-convolution operator $H$ can be decomposed as:
\begin{equation*}
 Hf=\sum_{k=1}^m u_k\star v_k \odot f =\mathcal{U}\mathcal{V} f,
\end{equation*}
with $\mathcal{V} = \begin{pmatrix} V_1 \ldots V_m \end{pmatrix}^T \in \R^{m N \times N}$ and $\mathcal{U} = \begin{pmatrix} U_1 \ldots U_m \end{pmatrix} \in \R^{N \times mN}$. This decomposition allows to split the problem \eqref{eq:optim_deblurring_generic} as
\begin{equation*}
    \min_{\substack{f \in \R^N, g \in \R^{mN}, z \in \R^N \\ z = \Psi^* f, g = \mathcal{V}f}} \frac{1}{2} \| \mathcal{U} g - f_0\|_2^2 + \| z \|_{1,w}
\end{equation*}

The augmented Lagrangian associated to this problem reads
\begin{equation*}
    \mathcal{L}(f,g,z,\xi,\zeta) = \frac{1}{2} \| \mathcal{U} g - f_0\|_2^2 + \| z \|_{1,w} + \left\langle \xi, g - \mathcal{V} f \right \rangle + \left\langle \zeta, z - \Psi^*f \right \rangle 
    + \frac{\beta_1}{2} \| g - \mathcal{V}f \|_{2,\gamma}^2 + \frac{\beta_2}{2} \| z - \Psi^*f \|_2^2.
\end{equation*}
The norm $\| \cdot \|_{2,\gamma}$ is just a weighted $l^2$-norm for the augmented term of $g = \mathcal{U}f$. Weighting the influence of each $U_k$ allows accelerating the convergence of the ADMM, i.e. it acts as a preconditioner. 
This norm is defined by $\| \cdot \|_{2,\gamma}^2 = \langle D_{\gamma} \cdot, \cdot \rangle_{\R^{mN}}$, with 
\begin{equation*}
    D_\gamma = \begin{pmatrix}
        \gamma_1 \Id_N & 0 & 0 \\
        0 & \ddots & 0 \\
        0 & 0 & \gamma_m \Id_N
    \end{pmatrix} \in \R^{mN \times mN}
\end{equation*}
a block diagonal matrix where $\gamma_k > 0$ and $\sum_{k=1}^m \gamma_k = 1$. The choice $\gamma_k = \| U_k \|_{2 \to 2}$ led  to good practical behaviors.

The ADMM amounts to iterating:
\begin{itemize}
    \item $g^{(i+1)} = \argmin_{g \in \R^{mN}} \mathcal{L}\left(f^{(i)},g,z^{(i)},\xi^{(i)},\zeta^{(i)} \right)$. 

    The solution of this sub-problem is given by 
    \begin{equation*}
        g^{(i+1)} = (\mathcal{U}^* \mathcal{U} + \beta_1 D_\gamma)^{-1} \left(\mathcal{U}^* f_0 - \xi^{(i)}  + \beta_1 D_\gamma \mathcal{V} f^{(i)} \right).
    \end{equation*}
    The matrix $\mathcal{U}^* \mathcal{U}$ contains $m \times m$ blocks of size $N \times N$ with the $(k,l)$-th block populated with $U_k^* U_l$. The direct inversion of this matrix will be inefficient even in the Fourier domain. However, using the Woodbury matrix identity \cite{hager1989updating}, the solution can be expressed as
    \begin{equation*}
        g^{(i+1)} = \left( D_\gamma^{-1} - D_\gamma^{-1} \mathcal{U^*} (\Id_{N} + \mathcal{U} D_{\gamma}^{-1} \mathcal{U}^*)^{-1} \mathcal{U} D_{\gamma}^{-1} \right) \left(\mathcal{U}^* f_0 - \xi^{(i)}  + \beta_1 D_\gamma \mathcal{V} x^{(i)} \right).
    \end{equation*}
    Matrix $(\Id_{N} + \mathcal{U} D_{\gamma}^{-1} \mathcal{U}^*) = \Id_{N} + \sum_{k=1}^m \gamma_k U_k U_k^* \in \R^{N \times N}$ is diagonal in the Fourier domain and can be efficiently inverted. 
    In this step, the computation of $3 m$ FFTs and $4m + 1$ products with diagonal matrices are involved.
    
    \item $z^{(i+1)} = \argmin_{z \in \R^{N}} \mathcal{L}\left(f^{(i)},g^{(i+1)},z,\xi^{(i)},\zeta^{(i)} \right) = \textrm{Prox}_{\frac{1}{\beta_2} \| \cdot \|_{1,w}} \left( \Psi^* f^{(i)} - \frac{1}{\beta_2} \zeta^{(i)} \right)$. 
   
   This step costs one wavelet tranform plus a thresholding operation.
    
    \item $f^{(i+1)} = \argmin_{f \in \R^{N}} \mathcal{L}\left(f,g^{(i+1)},z^{(i+1)},\xi^{(i)},\zeta^{(i)} \right)$.

    The solution of this sub-problem reads 
    \begin{equation*}
     f^{(i+1)} = \left(\beta_1 \mathcal{V}^* D_\gamma \mathcal{V} + \beta_2 \Id_{N} \right)^{-1} \left( \mathcal{V}^* (\xi^{(i)} + \beta_1 D_\gamma g^{(i+1)}) + \Psi (\beta_2 z^{(i+1)}  + \zeta^{(i)}) \right).
    \end{equation*}
    Matrix $\beta_1 \mathcal{V}^* D_\gamma \mathcal{V} + \beta_2 \Id_{N} = \beta_1 \sum_{i=1}^m \gamma_i V_i V_i + \beta_2 \Id_{N}$ is $N \times N$ diagonal matrix and can be efficiently inverted.

    In this step, one wavelet transform and $m + 1$ products with diagonal matrices are involved.
    \item Update Lagrange multipliers
    \begin{equation*}
        \begin{split}
            \xi^{(i+1)} & = \xi^{(i)} + \beta_1 D_{\gamma} (g^{(i+1)} - \mathcal{V} f^{(i+1)}) \\
            \zeta^{(i+1)} & = \zeta^{(i)} +  \beta_2 (z^{(i+1)} - \Psi^* f^{(i+1)}).
        \end{split}
    \end{equation*}
    This step requires $m$ products with diagonal matrices and one wavelet transform.
\end{itemize}

To summarize, one iteration of the ADMM costs $3 m$ FFTs, $3$ wavelet transforms and $6 m + 2$ products with diagonal matrices. This is substantially more than one interation of FISTA with the product-convolution expansion i.e. $2m$ FFTs, $2$ wavelet transforms and $m$ products with diagonal matrices.

\section*{Acknowledgments}
P. Weiss is supported by the ANR JCJC Optimization on Measures Spaces ANR-17-CE23-0013-01 and the ANR-3IA Artificial and Natural Intelligence Toulouse Institute. Both authors thank the GDR ISIS its support.

\end{document}